\title{\textbf{It'll probably work out: improved list-decoding through random operations}\footnote{AR's research supported in part by NSF CAREER grant CCF-0844796 and NSF grant CCF-1161196. MW's research supported in part by a Rackham predoctoral fellowship.}}
\newcommand{\rate}[1]{R^*(#1)}
\author{\textsc{Atri Rudra}\footnotemark[2] \and \textsc{Mary Wootters}\footnotemark[3]}
\date{\today\\
\vspace*{4mm}
\footnotemark[2]~~Department of Computer Science and Engineering,\\
University at Buffalo, SUNY\\
{\tt atri@buffalo.edu}\\
\vspace*{2mm}
\footnotemark[3]~~Department of Mathematics,\\
University of Michigan\\
{\tt wootters@umich.edu}}
\newcommand{\pl}{\ensuremath{\,\mathrm{\mathbf{pl}}}}
\newcommand{\A}{\ensuremath{\mathcal{A}}}
\newcommand{\B}{\ensuremath{\mathcal{B}}}
\newcommand{\cC}{\ensuremath{\mathcal{C}}}
\newcommand{\cD}{\ensuremath{\mathcal{D}}}
\newcommand{\cE}{\ensuremath{\mathcal{E}}}
\newcommand{\cU}{\ensuremath{\mathcal{U}}}
\newcommand{\R}{{\mathbb R}}
\newcommand{\F}{{\mathbb F}}
\newcommand{\PR}[1]{{\mathbb{P}}\left\{ #1\right\}}
\newcommand{\E}[1]{\mathbb{E}\left[ #1\right]}
\newcommand{\EE}{\mathbb{E}}
\newcommand{\norm}[1]{\left\|#1\right\|}
\newcommand{\twonorm}[1]{\left\|#1\right\|_2}
\newcommand{\decnorm}[2]{\left\|#1\right\|_{#2}}
\newcommand{\infnorm}[1]{\decnorm{#1}{\infty}}
\newcommand{\inabs}[1]{\left|#1\right|}
\newcommand{\ind}[1]{\ensuremath{\mathbf{1}_{#1}}}
\newcommand{\supp}{\mathrm{Supp}}
\newcommand{\agr}{\mathrm{agr}}
\newcommand{\ip}[2]{\ensuremath{\left\langle #1,#2\right\rangle}}
\newcommand{\inset}[1]{\left\{#1\right\}}
\newcommand{\inparen}[1]{\left(#1\right)}
\newcommand{\inbrac}[1]{\left\{#1\right\}}
\newcommand{\suchthat}{\,:\,}
\newcommand{\eps}{\varepsilon}
\newcommand{\mkw}[1]{\textcolor{red}{(#1 --mary)}}
\newcommand{\ar}[1]{\textcolor{blue}{(#1 --Atri)}}
\newtheorem{theorem}{Theorem} 
\newtheorem{lemma}{Lemma} 
\newtheorem{definition}{Definition}
\newtheorem{conjecture}[theorem]{Conjecture} 
\newtheorem{cor}{Corollary} 
\newtheorem{remark}{Remark}
\newtheorem{ques}[remark]{Question}
\newtheorem{claim}[theorem]{Claim}
\newtheorem{proposition}{Proposition}
\newtheorem{problem}[theorem]{Problem}
\newcommand{\dprod}[1]{\cU_{{#1},\mathrm{dp}}}
\begin{document}
\maketitle
\thispagestyle{empty}

\begin{abstract}
In this work, we introduce a framework to study the effect of random operations on the combinatorial list decodability of a code.  
The operations we consider correspond to row and column operations on the matrix obtained from the code by stacking the codewords together as columns.  This captures many natural transformations on codes, such as puncturing, folding, and taking subcodes; we show that many such operations can improve the list-decoding properties of a code.  There are two main points to this.  First, our goal is to advance our (combinatorial) understanding of list-decodability, by understanding what structure (or lack thereof) is necessary to obtain it.  Second, we use our more general results to obtain a few interesting corollaries for list decoding:
\begin{enumerate}
\item We show the existence of binary codes that are combinatorially list-decodable from $1/2-\eps$ fraction of errors with optimal rate $\Omega(\eps^2)$ that can be encoded in {\em linear} time.
\item We show that any code with $\Omega(1)$ relative distance, when randomly folded, is combinatorially list-decodable $1-\eps$ fraction of errors with high probability. This formalizes the intuition for why the folding operation has been successful in obtaining codes with optimal list decoding parameters; previously, all arguments used algebraic methods and worked only with specific codes.
\item We show that any code which is list-decodable with suboptimal list sizes has many subcodes which have near-optimal list sizes, while retaining the error correcting capabilities of the original code.  This generalizes recent results where subspace evasive sets have been used to reduce list sizes of codes that achieve list decoding capacity.
\end{enumerate}
The first two results follow from the techniques of Wootters (STOC 2013) and Rudra and Wootters (STOC 2014); one of the main technical contributions of this paper is to demonstrate the generality of the techniques in those earlier works.  The last result follows from a simple direct argument. 
\end{abstract}

\newpage
\section{Introduction}
The goal of \em error correcting codes \em is to enable communication between a sender and receiver over a noisy channel.
For this work, we will think of a code $\cC$ of block length $n$ and size $N$ over an alphabet $\Sigma$ as an $n \times N$ matrix over $\Sigma$, 
where each column in the matrix $\cC$ is called a {\em codeword}. 
The sender and receiver can use $\cC$ for communication as follows. 
Given one of $N$ messages---which we think of as indexing the columns of $\cC$---the sender transmits the corresponding codeword over a noisy channel.  
The receiver gets a corrupted version of the transmitted codeword and aims to recover the originally transmitted codeword (and hence the original message).  Two primary quantities of interest are the fraction $\rho$ of errors that the receiver can correct (the \em error rate\em); and the redundancy of the communication, as measured by the \em rate \em $R: = \frac{\log_{|\Sigma|}{N}}{n}$ of the code. 
The central goal is to design codes $\cC$ so that both $R$ and $\rho$ are large.

A common approach to this goal is to first design a code matrix $\cC_0$ that is ``somewhat good," and to modify it to obtain a better code $\cC$.  Many of these modifications correspond to row or column operations on the matrix $\cC_0$: for example, dropping of rows or columns, taking linear combinations of rows or columns, and combining rows or columns into ``mega" rows or columns.  In this work, we study the effects of such row- and column-operations on the {\em list decodability} of the code $\cC_0$.  

\paragraph{List decoding.}
In the list decoding problem~\cite{elias,wozencraft}, the receiver is allowed to output a small list of codewords that includes the transmitted codeword, instead of having to pin down the transmitted codeword exactly.  The remarkable fact about list decoding is that the receiver may correct twice as many adversarial errors as is possible in the unique decoding problem.  Exploiting this fact has led to many applications of list decoding in complexity theory and in particular, pseudorandomness.\footnote{See the survey by Sudan~\cite{madhu-survey} and
Guruswami's thesis~\cite{venkat-thesis} for more on these applications.} 

Perhaps the ultimate goal of list decoding  research is to solve the following problem.
\begin{problem}
\label{prob:ideal}
For $\rho \in (0,1 - 1/q)$, construct codes with rate $1 - H_q(\rho)$ that can 
 correct $\rho$ fraction of errors with linear time encoding and linear time decoding.\footnote{One needs to be careful about the machine model when one wants to claim linear runtime. In this paper we consider the RAM model. For the purposes of this paper, it is fine to consider linear time to mean linear number of $\F_q$ operations and the alphabet size to be be small, say polynomial in $1/\eps$.}
Above, $H_q$ denotes the $q$-ary entropy, and $1 - H_q(\rho)$ is known to be the optimal rate.
\end{problem}

Even though much progress has been made in algorithmic list decoding, we are far from answering the problem above in its full generality. If we are happy with polynomial time encoding and decoding (and large enough alphabet size), then the problem was solved by Guruswami and Rudra~\cite{GR08}, and improved by several follow-up results~\cite{GW13, K12, GX12, GX13, DL12,GK13}.
However, even with all of this impressive work on algorithmic list decoding, the landscape of list-decoding remains largely unexplored.  
First, while the above results offer concrete approaches to Problem \ref{prob:ideal}, we do not have a good characterization of which codes are even \em combinatorially \em list-decodable at near-optimal rate.  Second, while we have polynomial-time encoding and decoding, linear-time remains an open problem.  In this work, we make some progress in both of these directions.

\paragraph{New codes from old: random operations.}
In this paper, we develop a framework to study the effect of random operations on the list-decodability of a code.  Specific instantiations of these operations are a common approach to Problem \ref{prob:ideal}.
For example, 
\begin{enumerate}
\item In the Folded Reed-Solomon codes mentioned above, one starts with a Reed-Solomon code and modifies it by applying a \em folding \em operation to each codeword. In the matrix terminology, we  bunch up rows to construct ``mega" rows. 
\item In another example mentioned above~\cite{GX13}, one starts with a Reed-Solomon code and picks certain positions in the codeword, and also throws away many codewords---that is, one applies a \em puncturing \em operation the codewords, and then considers a \em subcode. \em  In matrix terminology, we drop rows and columns.
\item In~\cite{luca,dp-codes}, the \em direct product \em operation and the \em XOR \em operation are used to enhance the list-decodability of codes. In matrix terminology, the direct product corresponds to bunching rows and the XOR operation corresponds to taking inner products of rows. 
\item In~\cite{GI01,GI03,GI05}, the \em aggregation \em operation is used to construct efficiently list-decodable codes out of list-recoverable codes.  In matrix terminology, this aggregation again corresponds to bunching rows.
\end{enumerate}
However, in all of these cases, the operations used are very structured; in the final two, the rate of the code also takes a hit.\footnote{It must be noted that in the work of~\cite{luca,dp-codes} the main objective was to obtain sub-linear time list decoding and the suboptimal rate is not crucial for their intended applications.}  It is natural to ask how generally these operations can be applied. 
In particular, if we considered random versions of the operations above, can we achieve the optimal rate/error rate/list size trade-offs? 
If so, this provides more insight about why the structured versions work.

Recently the authors showed in~\cite{rw14} that the answer is ``yes" for puncturing of the rows of the code matrix: if one starts with \em any \em code with large enough distance and randomly punctures the code, then with high probability the resulting code is nearly optimally combinatorially list-decodable.  In this work, we extend those results to other operations.

\subsection{Our contributions and applications}
The contributions of this paper are two-fold.  First, the goal of this work is to improve our understanding of (combinatorial) list-decoding.  What is it about these structured operations that succeed?  How could we generalize?  Of course, this first point may seem a bit philosophical without some actual deliverables.  To that end, we show how to use our framework to address some open problems in list decoding.  We outline some applications of our results below.

In order to state our main results, we pause briefly to set the quantitative stage.
There are two main parameter regimes for list-decoding, and we will focus on both in this paper.
In the first regime, corresponding the the traditional communication scenario, the error rate $\rho$ is some constant $0 < \rho < 1 - 1/q$.
In the second regime, motivated by applications in complexity theory, the error rate $\rho$ is very large.   
For $q$-ary codes, these applications require correction from a $\rho = 1-1/q-\eps$ fraction of errors, for small $\eps>0$. 
In both settings, the best possible rate is given by
\[ R^* = 1 - H_q(\rho), \]
where $H_q$ denotes the $q$-ary entropy.  In the second, large-$q$, regime, we may expand $H_q(1 - 1/q - \eps)$ to obtain an expression
\[\rate{q,\eps} := 1 - H_q(1 - 1/q - \eps) = \min\inset{ \eps, \frac{ q\eps^2}{2\log(q)} + O_q(\eps^3)}.\]
For complexity applications it is often enough to design a code with rate $\Omega(\rate{q,\eps})$
with the same error correction capability.

\subsubsection{Linear time encoding with near optimal rate.} We first consider the special case of Problem~\ref{prob:ideal} that concentrates on the encoding complexity for binary codes in the high error regime:
\begin{ques}
\label{ques:enc}
Do there exist binary codes with rate $\Omega(\eps^2)$ that can be encoded in linear time and are (combinatorially) list-decodable from a $1/2-\eps$ fraction of errors?
\end{ques}

Despite much progress on related questions, 
obtaining linear time encoding with (near-)optimal rate is still open.
 More precisely, for $q$-ary codes (for $q$ sufficiently large, depending on $\eps$), Guruswami and Indyk showed that linear time encoding and decoding with near-optimal rate is possible for {\em unique} decoding~\cite{GI05}. For list decoding, they prove a similar result for list decoding but the rate is exponentially small in $1/\eps$~\cite{GI03}. This result can be used with code concatenation to give  a similar result for binary codes (see Appendix~\ref{app:binary-lin} for more details) but also suffers from an exponentially small rate.
If we allow for super-linear time encoding in Question \ref{ques:enc}, then it is known that the answer is yes.  Indeed, random linear codes will do the trick~\cite{ZP,cgv2012,woot2013} and have quadratic encoding time; 
In fact, near-linear time encoding with optimal rate also follows from known results.\footnote{For example, Guruswami and Rudra~\cite{GR10} showed that folded Reed-Solomon codes---which can be encoded in near-linear time---concatenated with random inner codes with at most logarithmic block length achieve the optimal rate and fraction of correctable errors tradeoff.}

\paragraph{Our results.}
We answer Question~\ref{ques:enc} in the affirmative.  To do this, we consider the row-operation on codes given by taking random XORs of the rows of $\cC_0$.
We show that this operation yields codes with rate $\Omega(\eps^2)$ that are combinatorially list-decodable from $1/2 - \eps$-fraction of errors, provided the original code has constant distance and rate.  Instantiating this by taking $\cC_0$ to be Spielman's code~\cite{spielman}, we obtain a linear-time encodable binary code which is nearly-optimally list-decodable.

\subsubsection{The folding operation, and random $t$-wise direct product.} The
result of Guruswami and Rudra~\cite{GR08} showed that when the {\em folding}
operation is applied to Reed-Solomon codes, then the resulting codes (called
folded Reed-Solomon codes) can be list decoded in polynomial time with optimal
rate. The folding operation is defined as follows.  
We start with a $q$-ary code $\cC_0$ of length $n_0$, and a partition of $[n_0]$ into $n_0/t$ sets of size $t$,
and we will end up with a $q^t$-ary code $\cC$ of length $n = n_0/t$. 
Given a codeword $c_0 \in \cC_0$, we form a new codeword $c \in \cC$ by ``bunching" together the symbols in each partition set and treating them as a single symbol.  A formal definition is given in Section \ref{sec:setup}. 
For large enough $t$, this
results in codes that can list decode from $1-\eps$ fraction of errors with
optimal rate~\cite{GR08,GX12,GX14} when one starts with Reed-Solomon or more
generally certain algebraic-geometric codes. In these cases, the partition for
folding is very simple: just consider $t$ consecutive symbols to form the $n/t$
partition sets.

Folding is a special case of $t$-wise aggregation of symbols.  Given a code $\cC_0$ of length $n_0$, we may form a new code $\cC_0$ of length $n$ by choosing $n$ subsets $S_1,\ldots, S_n \subset [n_0]$ and aggregating symbols according to these sets. This operation has also been used to good effect in the list-decoding literature: in~\cite{GI01,GI03,GI05}, the sets $S_i$ are defined using expander codes, and the original code $\cC_0$ is chosen to be \em list-recoverable. \em  This results in efficiently list-decodable codes, although not of optimal rate.
We can also view this $t$-wise aggregation as a puncturing of a $t$-wise direct product (where $n = {n_0 \choose t}$ and all sets of size $t$ are included).

There is a natural intuition for the effectiveness of the folding operation in~\cite{GR08,GR09}, and for the $t$-wise aggregation of symbols in~\cite{GI01,GI03,GI05}.
In short, making the symbols larger increases the size of the ``smallest corruptable unit," which in turn decreases the number of error patterns we have to worry about.  (See Section \ref{sec:fold} for more on this intuition). 
In some sense, this intuition is the reason that random codes over large alphabets can tolerate more error than random codes over small alphabets: indeed, an inspection of the proof that random codes obtain optimal list-decoding parameters shows that this is the crucial difference. 
Since
a random code over a large alphabet is in fact a folding of a random code over
a small alphabet, the story we told above is at work here.

Despite this nice-sounding intuition---which doesn't use anything specific about the code---the known results mentioned above do not use it, and rely crucially on specific properties of the original codes, and on algorithmic arguments.  It is natural to wonder if the intuition above can be made rigorous, and to hold for \em any \em original code $\cC_0$.
In particular,
\begin{ques}
\label{ques:fold}
Can the above intuition be made rigorous?  Precisely, 
are there constants $\delta_0,c_0 > 0$, so that for any $\eps > 0$, any code with distance at least $\delta_0$ and rate at most $c_0 \eps$ admits a $t$-wise folding (or other $t$-wise aggregation of symbols with $n = n_0/t$) for $t$ depending only on $\eps$, such that the resulting code is combinatorially list-decodable from a $1-\eps$ fraction of errors?  
\end{ques}
The first question mimics the parameters of folded Reed-Solomon codes; the second part is for the parameter regime of~\cite{GI01,GI03,GI05}.
Notice that both the requirements (distance $\Omega(1)$ and rate $O(\eps)$) are necessary.
Indeed, if the original code does not have distance bounded below by a constant, it is easy to come up with codes where the answer to the above question is ``no."  
The requirement of $O(\eps)$ on the rate of the original code is needed because folding preserves the rate, and the list-decoding capacity theorem implies that any code that can be list decoded from $1-\eps$ fraction of errors must have rate $O(\eps)$.

\paragraph{Our results.}  We answer Question \ref{ques:fold} in the affirmative by considering the operation of random $t$-wise aggregation.  We show that if $n = n_0/t$ (the parameter regime for $t$-wise folding), the resulting code is list-decodable from a $1-\eps$ fraction of errors, as long as $t = O(\log(1/\eps))$.  Our theory can also handle the case when $n \ll n_0$, and obtain near-optimal rate at the same time.

\subsubsection{Taking sub-codes.} 
The result of Guruswami and Rudra~\cite{GR08}, even though it achieves the optimal tradeoff between rate and fraction of correctable errors is quite far from achieving the best known combinatorial bounds on the worst-case list sizes. Starting with the work of Guruswami~\cite{G11}, there has been a flurry of work on using {\em subspace evasive subsets} to drive down the list size needed to achieve the optimal list decodability~\cite{GW13,DL12, GX12, GX13, GK13}. The basic idea in these works is the following: we first show that some code $\cC_0$ has optimal rate vs fraction of correctable tradeoff but with a large list size of $L_0$. In particular, this list lies in an affine subspace of roughly $\log{L_0}$ dimensions. A subspace evasive subset is a subset that has a small intersection with any low dimension subset. Thus, if we use such a subset to pick a subcode of $\cC_0$, then the resulting subcode will retain the good list decodable properties but now with smaller worst-case lists size. Perhaps the most dramatic application of this idea was used by Guruswami and Xing~\cite{GX13} who show that certain Reed-Solomon codes have (non-trivial) exponential list size and choosing an appropriate subcode with a subspace evasive subset reduces the list size to a constant.

However, the intuition that using a subcode can reduce the worst-case list size is not specifically tied to the algebraic properties of the code (i.e, to Reed-Solomon codes and subspace evasive sets). As above, it is natural to ask if this intuition holds more broadly.
\begin{ques}
\label{ques:subcode}
Given a code, does there always exist a subcode that has the same list decoding properties as the original code but with a smaller list size? In particular, is this true for random sub-codes?
\end{ques}

\paragraph{Our results.}
We answer Question~\ref{ques:subcode} by showing that for any code, a random subcode with the rate smaller only by an additive factor of $\eps$ can correct the same fraction of errors as the original code but with a list size of $O(1/\eps)$ as long as the original list size is at most $N^{\eps}$. 
Guruswami and Xing~\cite{GX13} showed that Reed-Solomon codes defined over (large enough) extension fields with evaluation points coming from a (small enough) subfield has non-trivial list size of $N^{\eps}$. Thus, our result then implies the random sub-codes of such Reed-Solomon codes are optimally list decodable.\footnote{Guruswami and Xing also prove a similar result (since a random subset can be shown to be subspace evasive) so ours gives an arguably simpler alternate proof.} 
We also complement this result by showing that the tradeoff between the loss in rate and the final list size is the best one can hope for in general. 
We also use the positive result to show another result: given that $\cC_0$ is optimally list decodable up to rate $\rho_0$, its random subcodes (with the appropriate rate) with high probability are also optimally list decodable for any error rate $\rho>\rho_0$.

\subsubsection{Techniques}
Broadly speaking, the operations we consider fall into two categories: row-operations and column-operations on the matrix $\cC$.  We use different approaches for the different types of operations.

For row operations (and Questions~\ref{ques:enc} and~\ref{ques:fold}) we use the machinery of~\cite{woot2013,rw14} in a more general context.  In those works, the main motivations were specific families of codes (random linear codes and Reed-Solomon codes).   In this work, we use the technical framework (implicit in) those earlier papers to answer new questions.  Indeed, one of the contributions of the current work is to point out that in fact these previous arguments apply very generally. 
For column operations, our results follow from a few simple direct arguments (although the construction for the lower bound requires a bit of care). 

\begin{remark}\label{rem:puncturing}
We will specifically handle all row operations on the code matrix mentioned at the beginning of the introduction. For column operations, we handle only column puncturing (taking random subcodes). For many operations, this is not actually an omission: some of the column-analogues of the row-operations we consider are redundant.  For example, taking random linear combinations of columns of a {\em linear code} has the same distribution as a random column puncturing. We do not handle bunching up of columns into mega columns, which would correspond to designing interleaved codes---see Section~\ref{sec:setup} for a formal definition---and we leave the solution of this problem as an open question.
\end{remark}

\subsection{Organization}
In Section~\ref{sec:setup}, we set up our formal framework and present an overview of our techniques in Section~\ref{sec:techniques}.
In Section~\ref{sec:gen}, we state and prove our results about the list-decodability of
codes under a few useful random operations; these serve to give examples for
our framework.  They also lay the groundwork for Section~\ref{sec:app}, where we return to the three applications we listed above, and
resolve Questions~\ref{ques:enc},~\ref{ques:fold}, and~\ref{ques:subcode}.  Finally, we conclude with some open questions.

\section{Set-up}
\label{sec:setup}
In this section, we set notation and definitions, and formalize our notion of row and column operations on codes.
Throughout, we will be interested in codes $\cC$ of length $n$ and size $N$ over an alphabet $\Sigma$.  
Traditionally, $\cC \subset \Sigma^n$ is a set of codewords.  As mentioned above, we will treat $\cC$ as a matrix in $\Sigma^{n \times N}$, with the codewords as columns.  We will abuse notation slightly by using $\cC$ to denote both the matrix and the set; which object we mean will be clear from context. For a prime power $q$, we will use $\F_q$ to denote the finite field with $q$ elements.

For $x,y \in \Sigma^n$, we will use $d(x,y)$ to denote the Hamming distance between $x$ and $y$, and we will use $\agr(x,y) := n - d(x,y)$ to denote the agreement between $x$ and $y$.  
We study the list-decodability of $\cC$: we say that $\cC$ is \em  $(\rho, L)$-{list-decodable} \em if for all $z \in \Sigma^n$, 
$|\inset{ c \in \cC \suchthat d(c,z) \leq \rho } | < L$.  In this work, we will also be interested in the slightly stronger notion of \em average-radius list-decodability. \em
\begin{definition}\label{def:avgrad}
A code $\cC \subset \Sigma^n$ is $(\rho, L)$-average-radius list-decodable if for all sets $\Lambda \subset \cC$ with $|\Lambda| = L$, 
\[ \max_z \sum_{c \in \Lambda} \agr( c, z) \leq (1-\rho) nL. \]
\end{definition}
Average-radius list-decodability implies list-decodability~\cite{gurnar2013,rw14}.  Indeed, the mandate of average-radius list decodability is that, for any $L$ codewords in $\cC$, they do not agree too much on average with their center, $z$.  On the other hand, standard list decodability requires that for any $L$ codewords in $\cC$, at least one does not agree too much with $z$.  As the average is always smaller than the maximum, standard list-decodability follows from average-radius list-decodability.

We will create new codes $\cC \in \Sigma^{n \times N}$ from original codes $\cC_0 \in \Sigma_0^{n_0 \times N_0}$; notice that we allow the alphabet to change, as well as the size and block length of the code.  We will consider code operations $f: \Sigma_0^{n_0 \times N_0} \to \Sigma^{n \times N}$ which act on rows and columns of the matrix $\cC_0$.  

We say that a \em basic row operation \em takes a code $\cC_0$ and produces a row of a new matrix $\cC$: that is, it is a function
\[r: \Sigma_0^{n_0\times N_0}\to \Sigma^{N_0}.\]
Two examples of basic row operations that we will consider in this paper are taking linear combinations of rows or aggregating rows.
That is:
\begin{itemize}
	\item[(a)] When $\Sigma = \Sigma_0 = \F_q$, and for a vector $v \in \F_q^{n_0}$, the row operation corresponding to linear combinations of rows is $r^{\text{(ip)}}_{v}: \F_q^{n_0 \times N} \to \F_q^{N}$,  given by
\[ r^{\text{(ip)}}_{v}(\cC_0) = v^T \cC_0. \]
	\item[(b)] Let $S \subset [n_0]$ be a set of size $t$, and let $\Sigma = \Sigma_0^t$.  Then the row operation corresponding to aggregating rows is $r^{\text{(agg)}}_{S}: \Sigma_0^{n_0 \times N} \to (\Sigma_0^t)^N$, given by
\[ r^{\text{(agg)}}_{S}(M) = \inparen{ \inparen{M_{i,1}}_{i \in S}, \inparen{ M_{i,2}}_{i \in S}, \ldots, \inparen{M_{i,N}}_{i \in S} }. \]
(Above, we have replaced $\cC_0$ with $M$ to ease the number of subscripts).
\end{itemize}
We will similarly consider \em basic column operations \em 
\[ c: \Sigma_0^{n_0 \times N_0} \to \Sigma^{n_0}, \]
which take a code $\cC_0$ and produce a new column of a matrix $\cC$.
Analogous to the row operations, we have the following two examples.
\begin{itemize}
	\item[(a)] When $\Sigma = \Sigma_0 = \F_q$, and for a vector $w \in \F_q^{N_0}$, we can consider
\[ c^{\text{(ip)}}_{w}(\cC_0) = \cC_0 w. \]
	\item[(b)] Let $T \subset [N_0]$ be a set of size $t$, and let $\Sigma = \Sigma_0^t$.  Then
\[ c^{\text{(agg)}}_T(M) = \inparen{ \inparen{M_{1,j}}_{j \in T}, \inparen{ M_{2,j}}_{j \in T}, \ldots, \inparen{M_{n, j}}_{j \in T} }. \]
\end{itemize}

The code operations that we will consider in this paper are distributions over a collection of random basic row operations or collection of random basic column operations:
\begin{definition}
\label{def:random-ops} A random row operation is a distribution $\cD$ over $n$-tuples of basic row operations.
We treat a draw $f = (r_1,\ldots,r_n)$ from $\cD$ as a code operation mapping $\cC_0$ to $\cC$ by defining the $i^{th}$ row of $\cC = f(\cC_0)$ to be $r_i(\cC_0)$.
Similarly, a random column operation is a distribution $\cD$ over $N$-tuples of basic column operations. 

We say a random row (column) operation $\cD$ has {\em independent symbols} ({\em independent codewords} resp.) if the coordinates are independent. We say a random row operation $\cD$ has symbols drawn {\em independently without replacement} if $(r_1,\dots,r_n)$ are drawn uniformly at random without replacement from some set $R$ of basic row operations.

Finally, for a random row operation $\cD$ and a sample $f$ from $\cD$ note that the columns of $f(\cC)$ are in one-to-one correspondence with the columns of $\cC$. Thus, we will overload notation and denote $f(c)$ for $c\in \cC$ to denote the column in $f(\cC)$ corresponding to the codeword $c\in\cC$.
\end{definition}

Below, we list several specific random row operations that fit into our framework.   
\begin{enumerate}
\item {\em Random Sampling:} Let $\Sigma = \Sigma_0$ be any alphabet, and let $\cD=\left(\cU_r\right)^n$, where $\cU_r$ is the uniform distribution on the $n_0$ basic row operations $r^{\text{(ip)}}_{e_j}$ for $j \in [n_0]$, where $e_j$ is the $j^{th}$ standard basis vector.  Thus, each row of $\cC$ is a row of $\cC_0$, chosen independently uniformly with replacement.
\item {\em Random Puncturing:} Same as above except $r_1,\dots,r_n$ are chosen {\em without} replacement.
\item {\em Random $t$-wise XOR:} Let $\Sigma_0=\Sigma=\F_2$ and $\cD=\left(\cU_{\oplus,t}\right)^n$. $\cU_{\oplus,t}$ is the uniform distribution over the $\binom{n_0}{t}$ basic row operations 
\[ \inset{ r^{\text{(ip)}}_{v} \suchthat v \in \F_2^{n_0} \text{ has weight } t }. \]
That is, to create a new row of $\cC$, we choose $t$ positions from $\cC_0$ and XOR them together.
\item {\em Random $t$-wise aggregation:} 
Let $\Sigma = \Sigma_0^t$, for any alphabet $\Sigma_0$, and let $\cD =\left(\dprod{t}\right)^n$, where $\dprod{t}$ is the uniform distribution over the $\binom{n_0}{t}$ basic row operations
\[ \inset{ r^{\text{(agg)}}_S : S\subset [n_0], |S| = t }. \]
\item {\em Random $t$-wise folding:} Let $\Sigma = \Sigma_0^t$, for any alphabet $\Sigma_0$.  For each partition $\pi = (S_1,\ldots,S_{n_0/t})$ of $[n_0]$ into sets of size $t$, consider the row operation $f_{\pi} = (r_1,\ldots,r_n)$ where
\[ r_j = r^{\text{(agg)}}_{S_j}. \]
Let $\cD$ be the uniform distribution over $f_{\pi}$ for all partitions $\pi$.
\end{enumerate}
The following column operations also fit into this framework; in this paper, we consider only the first.  We mention the second operation (random interleaving) in order to parallel the situation with columns.  We leave it as an open problem to study the effect of interleaving.
\begin{enumerate}
\item {\em Random sub-code:} Let $\Sigma = \Sigma_0$ be any alphabet, and let $\cD=\left(\cU_c\right)^N$, where $\cU_c$ is the uniform distribution on the $N_0$ basic column  operations 
\[ \inset{ c^{\text{(ip)}}_w \suchthat w = e_i, i\in[N_0] }. \]
That is, $\cC$ is formed from $\cC_0$ by choosing codewords independently, uniformly, with replacement from $\cC_0$.  

Notice that if $\cC_0$ is a linear code over $\F_q$, then this operation is the same if we replace $\inset{ w = e_i \suchthat i \in [N_0] }$ with all of $\F_q^n$, or with all vectors of a fixed weight, etc.  Thus, we do not separately consider random XOR (or inner products), as we do with columns.
\item {\em Random $t$-wise interleaving:} In this case $\cD=\left(\dprod{t}^c\right)^n$. $\dprod{t}^c$ is the uniform distribution over the $\binom{N_0}{t}$ basic column operations
\[ \inset{ c^{\text{(agg)}}_T : T\subset [N_0], |T| = t }. \]
\end{enumerate}


\section{Overview of Our Techniques}
\label{sec:techniques}

\paragraph{Random Row Operations.} In addition to answering Questions \ref{ques:enc} and \ref{ques:fold},
one of the contributions of this work is to exhibit the generality of
the techniques developed in~\cite{rw14}.  As such, our proofs follow their framework.
In that work, there were two steps: the first step was to bound the list-decodability in expectation (this will be defined more precisely below), and the second step was to bound the deviation from the expectation.  In this work, we use the deviation bounds as a black box, and it remains for us to bound the expectation.  We would also like to mention that we could have answered Questions \ref{ques:enc} and \ref{ques:fold} by applying the random puncturing results from~\cite{woot2013,rw14} as a black box to the XOR and direct product of the original code. We chose to unpack the proof to illustrate the generality of the proof technique developed in~\cite{woot2013,rw14} (and they also seem necessary to prove the generalization to the operation of taking random linear combinations of the rows of the code matrix).

The results on random row operations in this paper build on the approaches of~\cite{woot2013, rw14}.  While those works are aimed at specific questions (the list-decodability of random linear codes and of Reed-Solomon codes with random evaluation points), the approach applies more generally.  In this paper, we interpret the lessons of~\cite{woot2013, rw14} as follows: 
\begin{quote}
If you take a code over $\Sigma_0$ that is list-decodable (enough) up to $\rho_0 = 1 - 1/|\Sigma_0| - \eps$, and do some random (enough) stuff to the symbols, you will obtain a new code (possibly over a different alphabet $\Sigma$) which is list-decodable up to $\rho = 1 - 1/|\Sigma| - O(\eps)$.   If the random stuff that you have done happens to, say, increase the rate, then you have made progress.
\end{quote}

First, our notion of a random row operation $\cD$ being random enough is the same as $\cD$ having independent symbols (or independent symbols without replacement).
Now, we will quantify what it means to be ``list-decodable  enough" in the setup described above.  We introduce a parameter $\cE = \mathcal{E}(\cC_0,\cD)$, defined as follows:
\begin{equation}\label{eq:curlyE}
\mathcal{E}(\cC_0,\cD) := \max_{\Lambda \subset \cC_0, |\Lambda| = L} \EE_{f \sim \cD} \max_{z \in \Sigma^n} \sum_{c \in \cC_0} \agr(f(c), z).
\end{equation}
The quantity $\cE$ captures how list-decodable $\cC$ is in expectation.  Indeed, $\max_z \sum_{c \in \cC_0} \agr(f(c),z)$ is the quantity controlled by average-radius list-decodability (Definition \ref{def:avgrad}).  To make a statement about the actual average-radius list-decodability of $\cC$ (as opposed to in expectation), we will need to understand $\cE$ when the expectation and the maximum are reversed:
\[ \EE_{f \sim \cD} \max_{\Lambda \subset \cC_0, |\Lambda| = L} \max_{z \in \Sigma^n} \sum_{c \in \cC_0} \agr(f(c), z).\]
The work of \cite{woot2013,rw14} shows the following theorem.
\begin{theorem}\label{thm:mainthm}
Let $\cC_0, \cD$ and $\cC$ be as above, and suppose that $\cD$ has independent symbols.  Fix $\eps > 0$.
Then
\[ \EE_f \max_{z\in \Sigma^n} \max_{\Lambda \subset \cC_0, |\Lambda| = L} \sum_{c \in \Lambda} \agr(f(c),z) \leq 
\mathcal{E} + Y + \sqrt{ \mathcal{E} Y },
\]
where
\[ Y = C L \log(N) \log^5(L)\]
for an absolute constant $C$.
For $|\Sigma| = 2$, we have
\[ \EE_f \max_{x \in \Sigma^n} \max_{\Lambda \subset \cC_0, |\Lambda| = L} \sum_{c \in \Lambda} \agr(f(c),z)
\leq \mathcal{E} + C L \sqrt{n \ln(N) }. \]
\end{theorem}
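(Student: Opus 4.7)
The plan is to view $Z_{z,\Lambda}(f) := \sum_{c \in \Lambda} \agr(f(c), z)$ as an empirical process indexed by $T := \{(z, \Lambda) : z \in \Sigma^n, \; \Lambda \subset \cC_0, \; |\Lambda| = L\}$, and to bound its expected supremum via the decomposition
\[
\EE_f \sup_T Z_{z,\Lambda} \;\le\; \sup_\Lambda \EE_f \sup_z Z_{z,\Lambda} \;+\; \EE_f \sup_T \bigl(Z_{z,\Lambda} - \EE_f Z_{z,\Lambda}\bigr).
\]
The first summand on the right is exactly $\cE$, so the task reduces to bounding the centered supremum by $Y + \sqrt{\cE Y}$. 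Because $\cD$ has independent symbols, we can write $Z_{z,\Lambda} = \sum_{i=1}^n W_i(z,\Lambda)$ with $W_i := \sum_{c \in \Lambda} \ind{f(c)_i = z_i}$ independent across $i$ and each $W_i \in [0, L]$; moreover, $\sum_i \EE W_i$ is bounded by $\cE$, and this will serve as the variance proxy.

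First I would symmetrize (Gin\'e--Zinn) to pass to the Rademacher process $R(z,\Lambda) := \sum_i \eps_i W_i(z, \Lambda)$, and then bound $\EE_\eps \EE_f \sup_T R$ by a Bernstein-style chaining argument of the sort developed for random puncturing in \cite{woot2013, rw14}. The index set has $\log$-cardinality $O(n \log|\Sigma| + L \log N)$, but the key structural observation (inherited from \cite{rw14}) is that $W_i(z, \Lambda)$ depends on $z$ only through the single coordinate $z_i$, so the relevant entropy integral scales like $L \log(N)\,\polylog(L)$ rather than $n\log|\Sigma|$. Split into variance and sup-norm contributions, the chained bound produces $\sqrt{\cE\cdot Y} + Y$ with $Y = CL \log(N) \log^5(L)$, matching the advertised statement.

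For the binary case $|\Sigma| = 2$, the inner supremum over $z$ collapses: identifying $\Sigma$ with $\{\pm 1\}$ and using $\agr(f(c), z) = \tfrac{1}{2}(n + \langle f(c), z\rangle)$ gives
\[
\sup_z \sum_{c \in \Lambda} \agr(f(c), z) \;=\; \tfrac{1}{2} nL \;+\; \tfrac{1}{2}\sum_{i=1}^n \Bigl|\sum_{c \in \Lambda} f(c)_i\Bigr|,
\]
so the supremum over $z$ is gone for free. What remains is $\sum_i X_i^\Lambda$ with $X_i^\Lambda := |\sum_{c \in \Lambda} f(c)_i|$, a sum of independent $[0, L]$-valued random variables. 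Hoeffding then controls the deviation of this sum for each fixed $\Lambda$, and a union bound over the $\binom{|\cC_0|}{L} \le N^L$ choices of $\Lambda$ yields the sharper $CL\sqrt{n \ln N}$ slack without any $\polylog(L)$ overhead.

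The main technical obstacle is the entropy estimate in the chaining argument for the general-alphabet case: a naive union bound over $z$ would cost $\sqrt{n\log|\Sigma|}$, which would be fatal when $|\Sigma|$ is, say, polynomial in $1/\eps$. Exploiting the coordinate locality ``$W_i$ sees only $z_i$'' to trade this factor for merely $\polylog(L)$ overhead in $Y$ (while simultaneously isolating the variance term $\cE$ so that it appears only inside the square root $\sqrt{\cE Y}$, not multiplicatively) is the crux of the argument in \cite{woot2013,rw14}, and it is precisely this step that the present paper isolates and reuses as a black box.
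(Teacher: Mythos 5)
The first thing to note is that the paper does not prove Theorem~\ref{thm:mainthm} at all: it is imported verbatim from \cite{woot2013,rw14} as a black box (``The work of \cite{woot2013,rw14} shows the following theorem''), and the paper's own contribution is to re-apply it via new bounds on $\cE(\cC_0,\cD)$. So there is no in-paper proof to compare against; what can be checked is whether your sketch matches the strategy of those prior works and whether the steps are internally consistent.

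Your general-alphabet sketch (symmetrize, chain over the Rademacher process, exploit the fact that the $j$-th plurality $\pl_j(\Lambda)$ depends on $z$ only through $z_j$ so the entropy is $L\log N\,\polylog(L)$ rather than $n\log|\Sigma|$) is a fair high-level description of what is going on in \cite{woot2013,rw14}. One small caveat: the natural decomposition in those works centers the random variables $\sup_z Z_{z,\Lambda}=\sum_j\pl_j(\Lambda)$ (which are sums of $n$ \emph{independent} bounded terms over $j$) by their mean, whereas you center $Z_{z,\Lambda}$ jointly in $(z,\Lambda)$ before taking the sup. Your inequality $\EE\sup_T Z\le\cE+\EE\sup_T(Z-\EE Z)$ is valid, but the second term is not the object that the Bernstein-style chaining in \cite{rw14} actually bounds, and it is not obviously controlled by $Y+\sqrt{\cE Y}$; you would have to redo the argument with the $\pl_j(\Lambda)$ decomposition (introduced as Definition~\ref{def:plurality} in the paper precisely for this purpose).

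The binary case has a genuine quantitative gap. You write $\sup_z\sum_{c\in\Lambda}\agr(f(c),z)=\tfrac12 nL+\tfrac12\sum_i|\sum_{c\in\Lambda}f(c)_i|$ (correct), and then propose ``Hoeffding for each fixed $\Lambda$, union bound over $\binom{N}{L}\le N^L$ sets.'' Each $X_i^\Lambda:=|\sum_{c\in\Lambda}f(c)_i|$ lies in $[0,L]$, so Hoeffding makes $\sum_i X_i^\Lambda-\EE\sum_i X_i^\Lambda$ subgaussian with parameter $\Theta(L\sqrt{n})$; the max over $\binom{N}{L}$ such variables is then $O\bigl(L\sqrt{n}\cdot\sqrt{L\ln N}\bigr)=O\bigl(L^{3/2}\sqrt{n\ln N}\bigr)$, which is a factor $\sqrt{L}$ larger than the claimed $CL\sqrt{n\ln N}$. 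This loss matters: with $L=\eps^{-2}$ it inflates the deviation term by $1/\eps$, which after the argument in Section~\ref{sec:xor} would force $n\gtrsim\ln(N)/\eps^4$ and only give rate $O(\eps^4)$ rather than the $\Omega(\eps^2)$ needed for Corollary~\ref{thm:enc}. So the naive Hoeffding-plus-union-bound step cannot be the proof. The genuine argument in \cite{woot2013} is sharper; at minimum one needs a variance-aware (Bernstein-type) bound using that $\Var(X_i^\Lambda)=O(L)$ under the distance assumption rather than the worst-case range $L$, and in fact \cite{woot2013} does something finer still to also avoid the subexponential ($L^2\ln N$) term.

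So: the overall shape of the plan is right and correctly attributes the heavy lifting to \cite{woot2013,rw14}, but the binary case as written loses $\sqrt{L}$, which is fatal for the application, and the decomposition used to isolate $\cE$ needs to be replaced by the per-$\Lambda$ plurality decomposition.
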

Theorem \ref{thm:mainthm} makes the intuition above more precise:  Any ``random enough" operation (that is, an operation with independent symbols) of a code with good ``average-radius list-decodability" (that is, good $\cE(\cC_0,\cD)$) will result in a code which is also list-decodable.
In Appendix \ref{app:norepl}, we show that Theorem \ref{thm:mainthm} in fact implies the same result when ``random enough" is taken to be mean that $\cD$ has symbols drawn independently at random instead:
\begin{cor}\label{cor:norepl}
Theorem \ref{thm:mainthm} holds when ``independent symbols" is replaced by ``symbols drawn independently without replacement".
\end{cor}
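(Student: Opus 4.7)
The plan is to reduce the corollary directly to Theorem~\ref{thm:mainthm} via a convex-ordering argument. The key observation is that, for fixed $\Lambda \subset \cC_0$ with $|\Lambda|=L$, the quantity $\max_z \sum_{c \in \Lambda} \agr(f(c), z)$ decouples across the $n$ coordinates, since $\agr(f(c),z)=\sum_i \mathbf{1}[r_i(c)=z_i]$ and thus
\[\max_{z\in\Sigma^n}\sum_{c\in\Lambda}\agr(f(c),z)\;=\;\sum_{i=1}^n H_\Lambda(r_i),\qquad H_\Lambda(r):=\max_{a\in\Sigma}|\{c\in\Lambda : r(c)=a\}|.\]
Letting $R$ denote the finite set from which the $r_i$ are drawn, and writing $N_r:=|\{i : r_i=r\}|$ for the \emph{type} of the sample, the functional we want to bound becomes
\[\Phi(f):=\max_{z,\Lambda}\sum_{c\in\Lambda}\agr(f(c),z) \;=\; \max_{\Lambda}\sum_{r\in R}N_r\,H_\Lambda(r),\]
which is a maximum of finitely many linear functionals of $(N_r)_{r\in R}$ and therefore convex in the type vector.

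Letting $\mu^{\mathrm{wor}}$ denote the uniform distribution on ordered without-replacement $n$-tuples from $R$ and $\mu^{\mathrm{wr}}$ the i.i.d.\ uniform distribution on $R^n$, I would invoke the multivariate form of Hoeffding's 1963 convex-order comparison for sampling without replacement: for every convex $\Psi$ of the type vector,
\[\EE_{\mu^{\mathrm{wor}}}\Psi((N_r)_r)\;\leq\;\EE_{\mu^{\mathrm{wr}}}\Psi((N_r)_r).\]
Applied to $\Phi$, this yields $\EE_{\mathrm{wor}}\Phi \leq \EE_{\mathrm{wr}}\Phi$. Moreover, $\cE(\cC_0,\mu^{\mathrm{wor}})=\cE(\cC_0,\mu^{\mathrm{wr}})$: by exchangeability each $r_i$ is marginally uniform on $R$ under both distributions, and since $\max_z \sum_{c\in\Lambda}\agr(f(c),z)=\sum_i H_\Lambda(r_i)$ is additive in $i$, linearity of expectation forces the inner expectation appearing in $\cE$ to agree under both schemes before one takes the outer $\max_\Lambda$. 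Combining these two observations with Theorem~\ref{thm:mainthm} applied to $\mu^{\mathrm{wr}}$ (which has independent symbols by construction) gives the claimed bound; the binary-alphabet case is identical, with $\cE+CL\sqrt{n\ln N}$ in place of $\cE+Y+\sqrt{\cE Y}$.

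The main obstacle is the convex-domination inequality. While it is a standard consequence of Hoeffding's original argument---or, equivalently, of the negative association of sampling-without-replacement (Joag-Dev--Proschan)---a self-contained proof can be given by conditioning the with-replacement sample on the no-collision event $A$. The conditional distribution of the sample given $A$ is exactly $\mu^{\mathrm{wor}}$, so the claim reduces to $\EE_{\mathrm{wr}}[\Phi\mid \bar A]\geq \EE_{\mathrm{wr}}[\Phi\mid A]$: on average, a with-replacement sample containing a collision has larger $\Phi$ than a collision-free one. This follows from the midpoint convexity of $\Phi$: each $\{0,1\}$-valued type can be expressed as the midpoint of a pair of more ``concentrated'' types $\mathbf n \pm e_r \mp e_{r'}$, and summing these pointwise inequalities with the appropriate weights recovers the desired inequality between conditional expectations. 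An alternative route is to reopen the proof of Theorem~\ref{thm:mainthm} in~\cite{rw14} and observe that the moment and MGF bounds driving its chaining argument have at-least-as-good counterparts for sampling without replacement, so the same numerics carry through verbatim.
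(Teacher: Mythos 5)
Your proposal follows essentially the same route as the paper's Appendix~\ref{app:norepl}. Both arguments reduce Corollary~\ref{cor:norepl} to a convex-domination inequality between without-replacement and with-replacement sampling: the quantity $\max_{z,\Lambda}\sum_{c\in\Lambda}\agr(f(c),z)$ decomposes coordinatewise (via pluralities, i.e.\ your $H_\Lambda(r_i)=\pl_i(\Lambda)$), is convex as a function of the empirical type $\sum_i e_{r_i}$ (equivalently, the paper phrases this as $\infnorm{\sum_j Y_j}$ for suitable vectors $Y_j\in\R^{\binom{N}{L}}$), and $\cE(\cC_0,\cD)$ is unaffected by the change of sampling scheme by exchangeability plus linearity---a point you handle cleanly and which the paper leaves implicit.

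The one place where the arguments diverge is in how the convex-domination lemma is established, and here your self-contained sketch has a gap. The paper's Lemma~\ref{lem:replaug} is proved directly via a coupling: draw multinomial counts $z_1,\ldots,z_N$ with $n$ trials, sort them, randomly permute the top $n$ to obtain nonnegative weights $\hat z_1,\ldots,\hat z_n$ with $\EE\hat z_i=1$; then $\sum_i\hat z_i X_i$ (with $X_i$ drawn without replacement) has exactly the with-replacement distribution, so Jensen applied in $\hat z$ gives $\EE_X\infnorm{\sum_i X_i}\le \EE_Y\infnorm{\sum_i Y_i}$ in one step. You instead invoke Hoeffding's convex-order theorem (valid, provided you note explicitly that the multivariate version is what is needed) or negative association, and as a self-contained backup you condition on the no-collision event $A$ and try to show $\EE_{\mathrm{wr}}[\Phi\mid A]\le\EE_{\mathrm{wr}}[\Phi\mid\bar A]$ via midpoint convexity. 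That sketch is not complete: the midpoint decompositions $\mathbf{n}\mapsto\frac12(\mathbf{n}+e_r-e_{r'})+\frac12(\mathbf{n}-e_r+e_{r'})$ only produce types with a single collision, while $\EE_{\mathrm{wr}}[\Phi\mid\bar A]$ is an average over samples with arbitrarily many collisions; to make this work you would need an induction over collision counts and a careful matching of the induced weights against the true conditional distribution, which you have not carried out. The paper's coupling avoids this bookkeeping entirely and is cleaner for that reason. If you are content to cite the (multivariate) Hoeffding comparison as a black box, your argument is correct and coincides with the paper's; the self-contained route you propose would require more work than you have given.
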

In this work, we answer Questions \ref{ques:enc} and \ref{ques:fold} by coming up with useful distributions $\cD$ on functions $f$ and computing the parameter $\cE$.  To do this, we will make use of some average-radius Johnson bounds; we record these in Appendix~\ref{app:johnson-bound}.

\paragraph{Random Column Operations.} Our result on random subcodes follows from a simple probabilistic method. The argument for showing that the parameters in this positive result cannot be improved, we construct a specific code $\cC_0$. The code $\cC_0$ consists of various ``clusters", where each cluster is the set of all vectors that are close to some vector in another code $C^*$. The code $C^*$ has the property that it is list decodable from a large fraction of errors and that for smaller error rate its list size is suitably smaller-- the existence of such a code with exponentially many vectors follows from the standard random coding argument. This allows the original code $\cC_0$ to even have good average-radius list decodability. The fact that the cluster vectors are very close to some codeword in $C^*$ (as well as the fact that $C^*$ has large enough distance) basically then shows that the union bound used to prove the positive result is tight.


\section{General Results}
\label{sec:gen}
In this section, we state our results about the effects of some particular random operations---XOR, aggregation, and subcodes---on list-decodability.  In Section \ref{sec:app}, we will revisit these operations and resolve Questions~\ref{ques:enc},~\ref{ques:fold} and~\ref{ques:subcode}.
\subsection{Random $t$-wise XOR}
\label{sec:xor}
In this section, we consider the row-operation of $t$-wise XOR.  We prove the following theorem.
\begin{theorem}
\label{thm:gen-xor}
Let $\cC_0\in \F_2^{n_0\times N}$ be a code with distance $0<\delta_0<1/2$. Let $\cD = \inparen{ \cU_{\oplus,t}}^n$, as defined in Section~\ref{sec:setup}, and consider the code operation $f \sim \cD$.  Suppose that $t = 4\ln(1/\eps) \delta^{-1}_0$. Then for sufficiently small $\eps>0$ and large enough $n$, with probability $1-o(1)$, $\cC=f(\cC_0)$ is $(1/2(1-O(\eps)),\eps^{-2})$-average-radius list decodable and has rate $\Omega(\eps^2)$.
\end{theorem}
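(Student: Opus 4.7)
My plan is to apply Theorem~\ref{thm:mainthm} with $L = \eps^{-2}$: since $\cD = (\cU_{\oplus,t})^n$ has independent symbols by construction, it suffices to bound the expectation parameter $\cE = \cE(\cC_0,\cD)$ and to choose $n$ so that the binary deviation term $CL\sqrt{n\ln N}$ is absorbed into an additional $\eps nL$ slack.

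Fix $\Lambda \subset \cC_0$ with $|\Lambda|=L$ and a draw $f = (r^{\text{(ip)}}_{v_1},\ldots,r^{\text{(ip)}}_{v_n})$ from $\cD$. Picking each coordinate of $z$ as the majority symbol of $\{f(c)_j : c \in \Lambda\}$ gives
\[
\max_{z \in \F_2^n} \sum_{c \in \Lambda} \agr(f(c), z) \;=\; \tfrac{nL}{2} + \sum_{j=1}^n \bigl|A_j - \tfrac{L}{2}\bigr|, \qquad A_j := \bigl|\bigl\{c \in \Lambda : r^{\text{(ip)}}_{v_j}(c) = 1\bigr\}\bigr|.
\]
To control $\EE|A_j - L/2|$ I pass to the $\pm 1$ world via $Z^j_c := (-1)^{r^{\text{(ip)}}_{v_j}(c)}$ and expand
\[
\EE(A_j - L/2)^2 \;=\; \tfrac{1}{4}\,\EE\Bigl(\sum_{c \in \Lambda} Z^j_c\Bigr)^{\!2} \;=\; \tfrac{L}{4} + \tfrac{1}{4}\sum_{c \neq c'} \EE\bigl[(-1)^{r^{\text{(ip)}}_{v_j}(c \oplus c')}\bigr].
\]
For distinct $c, c' \in \cC_0$ the vector $c \oplus c'$ has Hamming weight at least $\delta_0 n_0$, and a standard hypergeometric computation (or a coupling to sampling with replacement) yields the bias bound $|\EE[(-1)^{r^{\text{(ip)}}_v(w)}]| \leq (1-2\delta_0)^t \leq e^{-2\delta_0 t} = \eps^8$ for our choice of $t$. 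Plugging in $L = \eps^{-2}$, the off-diagonal contribution is at most $L^2\eps^8/4 = \eps^4/4 \ll L/4$, so Jensen gives $\EE|A_j - L/2| \leq \sqrt{L}$, whence $\cE(\cC_0, \cD) \leq nL/2 + n\sqrt{L} = nL(1/2 + \eps)$.

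Feeding this into Theorem~\ref{thm:mainthm},
\(
\EE_f \max_{\Lambda, z} \sum_{c \in \Lambda} \agr(f(c), z) \leq nL(1/2 + \eps) + CL\sqrt{n\ln N},
\)
and choosing $n = \Theta(\ln N/\eps^2)$ forces the deviation to be at most $\eps nL$, yielding expected supremum $nL(1/2 + O(\eps))$ and rate $\log_2 N / n = \Omega(\eps^2)$. To go from expectation to $1-o(1)$ probability I apply McDiarmid in $v_1,\ldots,v_n$: letting $X := \max_{\Lambda, z} \sum_c \agr(f(c),z)$, flipping a single $v_j$ changes the $j$th row of $f(\cC_0)$ arbitrarily and hence changes $X$ by at most $L$, so $\PR{X \geq \EE X + \eps nL} \leq \exp(-\Omega(\eps^2 n)) = N^{-\Omega(1)} = o(1)$. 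The main technical nuisance I anticipate is a clean proof of the bias bound for sampling \emph{without} replacement; a coupling to sampling with replacement should give $(1-2\delta_0)^t(1+o(1))$, which is harmless for our parameters.
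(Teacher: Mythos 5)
Your approach is essentially the paper's: bound the parameter $\cE(\cC_0,\cD)$ via an average-radius Johnson-type computation, feed it into Theorem~\ref{thm:mainthm}, and choose $n = \Theta(\ln N/\eps^2)$ so the deviation term $CL\sqrt{n\ln N}$ is at most $\eps nL$. The inline derivation you give---writing $\max_z\sum_{c}\agr(f(c),z)=\tfrac{nL}{2}+\sum_j|A_j-L/2|$, passing to the $\pm1$ encoding $Z^j_c$, expanding the second moment, and invoking Jensen---is exactly the paper's proof of the $q=2$ case of the average-radius Johnson bound (Theorem~\ref{thm:jb}), which the paper packages as Lemma~\ref{lem:xorE} rather than re-deriving. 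Your McDiarmid step at the end is a genuine (small) improvement in exposition: the paper's argument cites Theorem~\ref{thm:mainthm} and deduces only a constant-probability guarantee for list-decodability via Markov, whereas the bounded-differences argument you give (each $v_j$ affects $X$ by at most $L$, so $\PR{X\ge\EE X+\eps nL}\le e^{-\Omega(\eps^2 n)}=N^{-\Omega(1)}$) delivers the $1-o(1)$ that the theorem statement actually claims.

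There is, however, one genuine gap. The rate of $\cC$ is $\log_2|\cC|/n$, \emph{not} $\log_2 N/n$, and these coincide only if $f$ is injective on $\cC_0$. Your final line silently equates them. Without verifying $|\cC|\approx N$, the rate claim $\Omega(\eps^2)$ does not follow: in principle $f$ could collapse most of $\cC_0$. The fix is short and reuses the bias estimate you already derived: for distinct $c,c'\in\cC_0$,
\[
\PR{f(c)=f(c')} \;=\; \inparen{\PR{\ip{v}{c\oplus c'}=0}}^n \;\le\; \inparen{\tfrac{1}{2}(1+\eps^2)}^{n},
\]
and with $n\ge C\ln N/\eps^2$ this is at most $N^{-3}$ for small $\eps$; a union bound over $\binom{N}{2}$ pairs then gives $\PR{|\cC|<N}\le 1/N = o(1)$. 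You should include this. Finally, you are right to flag the hypergeometric bias bound for exact-weight-$t$ vectors as the one slightly delicate calculation; the $(1-2\delta_0)^t(1+o(1))$ shape from coupling to with-replacement sampling is correct, but you should state that the $o(1)$ is in $n_0$ and requires $n_0\gg t^2$ (which holds here since $t=O(\log(1/\eps))$ is a constant while $n_0\to\infty$).
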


With the goal of using Theorem \ref{thm:mainthm}, 
we begin by computing the quantity $\cE(\cC_0, \cD)$.
\begin{lemma}\label{lem:xorE}
Let $\cC_0 \in \F_2^{n_0}$ be a code with distance $\delta_0$, and suppose $t \geq \frac{4\ln(1/\eps) }{\delta_0}$.  Then
\[ \cE(\cC_0, \mathcal{D} ) \leq \frac{n}{2} \inparen{ L(1 + \eps) + \sqrt{L} }.\]
\end{lemma}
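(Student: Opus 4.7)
The plan is to compute $\cE(\cC_0,\cD)$ directly by exploiting the independence of the rows of $f(\cC_0)$ and reducing to an $\ell_2$-style calculation on a single row. Fix $\Lambda \subseteq \cC_0$ with $|\Lambda|=L$, and for each row $i\in[n]$ let $N_b^{(i)} := |\{c \in \Lambda : f(c)_i = b\}|$ for $b \in \{0,1\}$. Since the maximum over $z\in\{0,1\}^n$ decomposes coordinate-wise,
\[
\max_{z}\,\sum_{c\in\Lambda}\agr(f(c),z) \;=\; \sum_{i=1}^n \max\bigl(N_0^{(i)}, N_1^{(i)}\bigr) \;=\; \frac{nL}{2} + \sum_{i=1}^n \bigl|N_1^{(i)} - L/2\bigr|.
\]
By linearity of expectation and the fact that the $t$-subsets $S_1,\dots,S_n$ indexing the rows of $f$ are i.i.d., bounding $\cE$ reduces to bounding $\EE_S\bigl|N_1 - L/2\bigr|$ for a single uniformly random $t$-subset $S\subseteq[n_0]$.

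To estimate this single-row quantity, I would set $Y_{S,c} := (-1)^{\bigoplus_{j\in S}c_j} \in \{\pm1\}$, so that $N_1 - L/2 = -\tfrac12\sum_{c\in\Lambda} Y_{S,c}$. Cauchy--Schwarz gives $\EE_S|N_1-L/2| \le \tfrac12\sqrt{\EE_S[(\sum_c Y_{S,c})^2]}$, and expanding the square yields
\[
\EE_S\!\left[\Bigl(\sum_{c\in\Lambda} Y_{S,c}\Bigr)^{\!2}\right] \;=\; L + \sum_{c\neq c' \in \Lambda}\EE_S\!\left[(-1)^{|S\cap\supp(c\oplus c')|}\right].
\]
Each off-diagonal term is a binary Krawtchouk coefficient depending only on $w:=d(c,c')\ge\delta_0 n_0$. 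By coupling the without-replacement sample $S$ to a with-replacement sample of size $t$ (whose factors are i.i.d.\ with mean $1-2w/n_0$), one obtains $|\EE_S[(-1)^{|S\cap T|}]|\le(1-2\delta_0)^t$ for $w\in[\delta_0 n_0,(1-\delta_0)n_0]$, with an $O(t^2/n_0)$ correction absorbed for large $n_0$. Setting $t=4\ln(1/\eps)/\delta_0$ bounds this by $e^{-2\delta_0 t}=\eps^8\le\eps^2$. Summing, $\EE_S[(\sum Y)^2]\le L+L^2\eps^2$, so $\EE_S|N_1-L/2|\le (\sqrt L+L\eps)/2$. Multiplying by $n$ and adding $nL/2$ yields the claimed bound $\cE\le\frac{n}{2}\bigl(L(1+\eps)+\sqrt L\bigr)$.

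The main obstacle I anticipate is the ``near-antipodal'' regime $d(c,c')>(1-\delta_0)n_0$, where the with-replacement estimate $|1-2w/n_0|^t$ is no longer small. I plan to address this via the symmetry $\EE_S[(-1)^{|S\cap T|}] = (-1)^t\EE_S[(-1)^{|S\cap T^c|}]$ together with a Plotkin-style bound limiting how many near-antipodal pairs any $L$-subset can contain, so that their total contribution is $O(L)$ rather than $O(L^2)$ and is absorbed into the $\sqrt L$ slack in the statement. Making this step tight is where I expect the bulk of the technical work to lie.
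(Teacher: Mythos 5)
Your approach is mathematically the same as the paper's, just organized per-row: the paper proves Lemma~\ref{lem:xorE} by applying the $q=2$ average-radius Johnson bound (Theorem~\ref{thm:jb}, first item)---whose proof is precisely the $\pm1$-encoding plus the Cauchy--Schwarz step $\|\Phi\ind{\Lambda}\|_1 \le \sqrt{n}\,\|\Phi\ind{\Lambda}\|_2$ that you write down---and then moves the expectation inside the square root by Jensen. You instead apply Jensen/Cauchy--Schwarz row-by-row and sum; since the rows of $f$ are i.i.d., this reproduces exactly the same quantity $L + \sum_{c\neq c'}\EE_S\bigl[(-1)^{|S\cap\supp(c\oplus c')|}\bigr]$, so the two derivations are interchangeable up to constants.

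What is worth emphasizing is that the ``near-antipodal'' issue you flag at the end is a genuine gap, and it is present in the paper's own proof as well. The paper asserts $\PR{\langle v_i, c-c'\rangle\neq 0} = \tfrac12\bigl(1-(1-\delta_0)^t\bigr)$, but this is not a valid lower bound when $d(c,c')$ is close to $n_0$: if $c' = \bar{c}$ and $t$ is even then $f_i(c)=f_i(c')$ for every $i$, so the probability is $0$, not $\approx\tfrac12$. Plugging such a too-small expected distance into the Johnson bound would leave a residual of order $L^2$ under the square root, not $L+L^2\eps^2$. The hypothesis $\delta_0<\tfrac12$ constrains only the \emph{minimum} distance, so nothing in the statement rules this out. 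Your proposed repair does work, and you do not actually need a Plotkin bound---a triangle-inequality packing argument against the minimum distance suffices: if $d(c,c_1), d(c,c_2) > (1-\delta_0/2)n_0$ with $c_1\neq c_2$, then $d(c_1,\bar c), d(c_2,\bar c) < \tfrac12\delta_0 n_0$, whence $d(c_1,c_2) < \delta_0 n_0$, contradicting the distance of $\cC_0$. So each $c\in\Lambda$ has at most one near-antipodal partner, the near-antipodal pairs contribute $O(L)$ rather than $O(L^2)$ to the off-diagonal sum, and this is absorbed into the $\sqrt{L}$ slack (up to a constant factor, which is harmless for the application in Theorem~\ref{thm:gen-xor}). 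For pairs with $\delta_0 n_0 \le d(c,c') \le (1-\delta_0/2)n_0$ your estimate $(1-\delta_0)^t \le \eps^4$ applies directly. Carrying this through gives $\cE \le \tfrac{n}{2}\bigl(L(1+\eps^2)+\sqrt{2L}\bigr)$, which is of the same strength as the claimed bound for the purposes of the downstream theorem.
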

The proof of Lemma~\ref{lem:xorE} follows from an application of an average-radius Johnson bound (see Appendix~\ref{app:johnson-bound} for more on these bounds).  The 
proof is given in Appendix~\ref{app:cE}.
Given Lemma~\ref{lem:xorE}, Theorem~\ref{thm:mainthm} implies that with constant probability, 
\begin{align*}
\max_{z\in \F_2^n} \max_{\Lambda \subset \cC, |\Lambda| = L} \frac{1}{L} \sum_{c \in \Lambda} \agr( c, z) &\leq 
\frac{\cE}{L} + C \sqrt{n\ln(N)} \\
&\leq \frac{n}{2} \inparen{1 + \eps + \frac{1}{\sqrt{L}}} + C \sqrt{n\ln N}. 
\end{align*}
In particular, if $C\sqrt{n \ln N} \leq \eps n$, 
then 
in the favorable case $\cC$ is $(\rho, L-1)$-average-radius list-decodable,
for $L = \eps^{-2}$ and $\rho = \frac{1}{2}\cdot( 1- C'\eps)$ for some constant $C'$.

It remains to verify the rate $R$ of $\cC$.  Notice that if $|\cC| = N$, then we are done, because then
the requirement $C\sqrt{n\ln(N)} \leq \eps n$ reads
\[ R = \frac{ \log_2(N) }{n} \leq \frac{ \eps^2}{C\ln(2)}. \]
Thus, to complete the proof we will argue that $f$ is injective with high probability, and so in the favorable case $|\cC| = N$.
Fix $c \neq c' \in \cC_0$. Then, by the same computations as in the proof of Lemma~\ref{lem:xorE},
\[\PR{f(c)=f(c')} =\left(\frac{1}{2}\left(1+(1-\delta_0)^t\right)\right)^n \le \left(\frac{1+\eps^2}{2}\right)^n. \]
Using the fact that we will choose $n \geq C \ln(N) /\eps^2$, the right hand side is
\[ \inparen{ \frac{ 1 + \eps^2 }{2} }^{C\ln(N)/\eps^2} = N^{- \ln \inparen{ \frac{2}{1 + \eps^2}} C / \eps^2 } \leq N^{-3} \]
for sufficiently small $\eps$. 
Thus, by the union bound on the $\binom{N}{2}\le N^2$ choices for the pairs of distinct codewords $(c,c')$, we see that $\PR{|\cC|<N} \le 1/N$, which is $o(1)$ as desired.
This completes the proof of Theorem \ref{thm:gen-xor}.

\begin{remark}[Random inner products for $q > 2$]
For our application (Question \ref{ques:enc}), $q=2$ is the interesting case.  However, the argument above goes through for $q> 2$.  In this case, we may use the first statement of Theorem \ref{thm:mainthm}, and statements 2 or 3 of Theorem \ref{thm:jb} for the average-radius Johnson bound.  
\end{remark}

\subsection{Random $t$-wise aggregation}
\label{sec:dp}

Theorem \ref{thm:fold} below analyzes $t$-wise aggregation in two parameter regimes.  In the first parameter regime,
we address Question \ref{ques:fold}, and we consider $t$-wise direct product where $n_0 = nt$.  
In this case, final code $\cC$ will have the same rate as the
original code $\cC_0$, and so in order for $\cC$ to be list-decodable up to
radius $1 - \eps$, the rate $R_0$ of $\cC_0$ must be $O(\eps)$. 
Item 1 shows that if this necessary condition is met (with some logarithmic slack), then $\cC$ is indeed list-decodable up to $1 - \eps$.
In the second parameter regime, we consider what can happen when the rate $R_0$ of $\cC_0$ is significantly larger.  In this case, we cannot hope to take $n$ as small as $n_0/t$ and hope for list-decodability up to $1 - \eps$.  The second part of Theorem \ref{thm:fold} shows that we may take $n$ nearly as small as the list-decoding capacity theorem allows.
\begin{theorem}\label{thm:fold} 
There are constants $C_i$, $i=0,\ldots,5$, so that the following holds.
Suppose $q > 1/\eps^2$.
Let $\cC_0 \subset \F_q^{n_0}$ be a code with distance $\delta_0 \geq C_2 > 0$. 

\begin{itemize}
\item[1.]
Suppose $t \geq C_0 \log(1/\eps)\ge 4\ln(1/\eps)/\delta_0$.
Suppose that $\cC_0$ has rate 
\[R_0 \leq \frac{C_1\eps}{\log(q) t \log^5(1/\eps)}.\]
Let $n = n_0/t$, and let 
$\cD_ = \inparen{ \dprod{t} }^n$ be the $t$-wise aggregation operation of Section~\ref{sec:setup}.  Draw $f\sim \cD$, and let
$\cC = f(\cC_0)$. 
Then with high probability, $\cC$ is $(1 - C_3\eps, 1/\eps)$-average-radius list-decodable, and further the rate $R$ of $\cC$ satisfies $R = R_0$.

\item[2.]
Suppose that $t \geq 4\ln(1/\eps)/\delta_0$, and
suppose that $\cC_0$ has rate $R_0$ so that
\[ R_0 \leq \inparen{ \frac{ nt }{n_0} } \inparen{ \frac{ \log(1/\eps) }{\log(q) } }.\]
Choose $n$ so that
\[ n \geq \frac{ \log(N) \log(1/\eps)}{\eps}. \]
Let
$\cD_ = \inparen{ \dprod{t} }^n$ be the $t$-wise aggregation operation of Section~\ref{sec:setup}.  Draw $f\sim \cD$, and let
$\cC = f(\cC_0)$. 
 Then with high probability, $\cC$ is $(1 - C_4 \eps, 1/\eps)$-average-radius list-decodable, and the rate $R$ of $\cC$ is at least
\[ R \geq \frac{ C_5\eps }{ t \log(q) \log^5(1/\eps) }. \]
\end{itemize}
\end{theorem}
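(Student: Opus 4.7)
The approach parallels that of Theorem \ref{thm:gen-xor}: bound the expectation parameter $\cE(\cC_0,\cD)$ via a collision argument, invoke Theorem \ref{thm:mainthm} to pass from expectation to a high-probability bound on average-radius list-decodability, and then verify the claimed rate of $\cC$ together with the injectivity of $f$. Both parts of the theorem will follow from essentially the same core computation; the difference is only in how the identity $\log N = R\cdot nt\log q$ is read.

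\textbf{Bounding $\cE$.} Fix $\Lambda \subset \cC_0$ with $|\Lambda|=L=1/\eps$. Since $\cD$ has independent symbols,
\[ \EE_f \max_z \sum_{c\in\Lambda}\agr(f(c),z) \;=\; n\cdot \EE_S \max_{z\in\F_q^t}\#\inset{c\in\Lambda : c_S = z}, \]
where $S$ is a uniformly random $t$-subset of $[n_0]$. For any two distinct $c,c'\in\cC_0$, $\agr(c,c')\leq (1-\delta_0)n_0$ together with $t\geq 4\ln(1/\eps)/\delta_0$ gives
\[ \mathbb{P}_S\inbrak{c_S=c'_S} \;=\; \binom{\agr(c,c')}{t}\Big/\binom{n_0}{t}\;\leq\; (1-\delta_0)^t \;\leq\; \eps^4. \]
Writing $M$ for the maximum symbol popularity at an aggregated position and $A$ for the number of ordered colliding pairs there, we have $M(M-1)\leq A$; since $\EE_S A \leq L(L-1)\eps^4\leq \eps^2$, Jensen's inequality gives $\EE_S M\leq 1+\sqrt{\EE_S A}\leq 1+\eps$. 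Hence $\cE(\cC_0,\cD)\leq n(1+\eps)$, which already matches the target agreement level $\eps nL$ up to constants.

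\textbf{Applying Theorem \ref{thm:mainthm}, and rate/injectivity.} Theorem \ref{thm:mainthm} yields
\[ \EE_f\max_\Lambda\max_z\sum_{c\in\Lambda}\agr(f(c),z)\;\leq\;\cE+Y+\sqrt{\cE\cdot Y}, \]
with $Y = CL\log N\log^5 L = C\eps^{-1}\log N\log^5(1/\eps)$. For the right-hand side to be $O(n)$ (and hence for $(1-O(\eps),1/\eps)$-average-radius list-decodability, via Markov) we require $Y=O(n)$, i.e.\ $\log N = O(\eps n/\log^5(1/\eps))$. In Part~1, $n=n_0/t$ and $\log N = R_0\cdot nt\log q$, so this is exactly the hypothesis $R_0 \leq C_1\eps/(t\log q\log^5(1/\eps))$; the rate of $\cC$ equals $R_0$ because aggregation preserves the number of information symbols. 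In Part~2, we choose $n$ at (or near) the minimum allowed by $Y=O(n)$, which then forces $R=\log N/(nt\log q)\geq \Omega(\eps/(t\log q\log^5(1/\eps)))$, the claimed lower bound. Finally, by independence of symbols $\mathbb{P}[f(c)=f(c')]\leq \eps^{4n}$ for distinct $c,c'$, and a union bound over the $\binom{N}{2}$ pairs gives $|\cC|=N$ with probability $1-o(1)$ in either regime.

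\textbf{Main obstacle.} The crux is the sharp bound $\cE\leq n(1+\eps)$; a crude Cauchy--Schwarz bound of the form $M\leq\sqrt{L+A}$ would only give $\cE=O(n\sqrt{L})$, costing a $\sqrt{\eps}$ factor in the decoding radius. What enables the tighter collision-pair bound is precisely the alphabet blowup $q\mapsto q^t$, which drives pairwise collisions down to $\eps^4$ and forces $M=1$ at a typical position. The remaining work is routine bookkeeping of the polylog factors in $Y$ against the rate bounds in the statement, handled as in the proof of Theorem \ref{thm:gen-xor}.
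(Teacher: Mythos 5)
Your proposal is correct and follows the same overall structure as the paper (bound $\cE(\cC_0,\cD)$, invoke Theorem~\ref{thm:mainthm}, then translate $Y=O(n)$ into the rate conditions and verify injectivity). The one substantive difference is in how you bound $\cE$, which is also the heart of the argument. The paper's Lemma~\ref{lem:foldcE} computes the expected pairwise distance $\EE_f\,\delta(f(c),f(c')) \geq 1-\eps^2$ and then plugs it into Item~3 of the average-radius Johnson bound (Theorem~\ref{thm:jb}), a global bound in terms of the full sum of pairwise distances, giving $\cE \leq Cn$ with $C=(1+\sqrt5)/2$. You instead reason position-by-position: at each aggregated coordinate the maximum plurality $M$ satisfies $(M-1)^2 \leq M(M-1) \leq A$ with $A$ the number of ordered colliding pairs, so $\EE_S M \leq 1 + \sqrt{\EE_S A} \leq 1 + \sqrt{L^2\eps^4} = 1+\eps$ by Jensen, giving $\cE \leq n(1+\eps)$. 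This is more elementary (it bypasses Theorem~\ref{thm:jb} entirely and is in fact slightly sharper), and it makes the intuition very concrete: the alphabet blow-up to $q^t$ drives the pairwise collision probability below $\eps^4$, which pins the typical plurality at $1$. The remaining bookkeeping — the translation of the condition $\log N \log^5(1/\eps)\lesssim \eps n$ into the rate hypotheses of Parts 1 and 2, and the union bound over $\binom{N}{2}$ pairs for injectivity — matches the paper (your collision exponent $\eps^{4n}$ is correct; the paper's written $\eps^{2nt}$ requires $\delta_0 \geq 1-\eps^2$ and looks like a typo, but in both cases the hypotheses on $R_0$ and $n$ make the union bound go through). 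One small note: like the paper, you invoke Markov, which alone gives only constant-probability success; the ``with high probability'' conclusion relies on the concentration statement underlying Theorem~\ref{thm:mainthm} from the cited earlier works, not on Markov alone.
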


The rest of this section is devoted to the proof of Theorem \ref{thm:fold}. 
As before, it suffices to control $\mathcal{E}(\cC_0, \cD)$.

\begin{lemma}\label{lem:foldcE}
With the set-up above, we have
\[ \cE(\cC_0, \cD) \leq Cn.\]
\end{lemma}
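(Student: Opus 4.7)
The plan is to decouple the inner maximum over $z$ coordinate by coordinate, reducing $\cE$ to a single-coordinate bucket-size problem. Since the $i$-th aggregated symbol $f(c)_i$ depends only on the random size-$t$ set $S_i$ (and on the codeword $c$), we can expand
\[ \max_{z\in (\F_q^t)^n} \sum_{c \in \Lambda} \agr(f(c), z) \;=\; \sum_{i=1}^n \max_{z_i \in \F_q^t} N_{S_i}(z_i), \]
where $N_S(z) := |\{c \in \Lambda : c|_S = z\}|$. Because $\cD = (\dprod{t})^n$ samples the $S_i$'s i.i.d., taking expectation and then the outer max over $\Lambda$ of size $L$ yields
\[ \cE(\cC_0, \cD) \;=\; n \cdot \max_{\Lambda \subset \cC_0,\,|\Lambda| = L} \; \EE_S \max_{z \in \F_q^t} N_S(z), \]
so it suffices to show $\EE_S \max_z N_S(z) = O(1)$ for any $L$-element list, with $L = 1/\eps$ being the target list size in Theorem~\ref{thm:fold}.

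Next, I would use the distance hypothesis to control collisions between distinct codewords. For any two distinct $c, c' \in \Lambda$, their agreement set has size at most $(1-\delta_0) n_0$, so
\[ \Pr_S[c|_S = c'|_S] \;\leq\; (1-\delta_0)^t \;\leq\; e^{-\delta_0 t} \;\leq\; \eps^4, \]
using $t \geq 4\ln(1/\eps)/\delta_0$. Summing over the $\binom{L}{2}$ unordered pairs, the expected number of colliding pairs (over the randomness of $S$) is at most $L^2 \eps^4 / 2$.

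Finally, observe that if no pair collides then $\max_z N_S(z) = 1$, while always $\max_z N_S(z) \leq L$. Markov's inequality therefore yields
\[ \EE_S \max_z N_S(z) \;\leq\; 1 + L \cdot \Pr_S[\text{some pair collides}] \;\leq\; 1 + \tfrac{1}{2} L^3 \eps^4. \]
Substituting $L = 1/\eps$ gives $\EE_S \max_z N_S(z) \leq 1 + \eps/2 \leq 2$, hence $\cE(\cC_0, \cD) \leq 2n$. I do not foresee any real obstacle: the whole argument hinges on the fact that $t$-wise aggregation amplifies the codeword distance $\delta_0$ into an effective per-coordinate collision probability of $\eps^4$, which is small enough that a first-moment/Markov calculation finishes the job without needing the Johnson-bound machinery invoked for the XOR case in Lemma~\ref{lem:xorE}.
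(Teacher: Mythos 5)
Your proof is correct, and it takes a genuinely different route from the paper. The paper controls $\cE(\cC_0,\cD)$ via the average-radius Johnson bound (Theorem~\ref{thm:jb}, Item~3), pushing the expectation inside the square root by concavity/Jensen, and then substituting the expected relative distance $\EE_f \delta(f(c),f(c')) \geq 1-\eps^2$; this is deliberately parallel to the argument used for the XOR case (Lemma~\ref{lem:xorE}). Your argument instead notices that the maximum over $z \in (\F_q^t)^n$ decouples coordinate-by-coordinate into $\sum_i \max_{z_i} N_{S_i}(z_i)$, which by i.i.d.-ness of the $S_i$ reduces the problem to a single-coordinate first-moment calculation: with $t \geq 4\ln(1/\eps)/\delta_0$ each pair collides on $S$ with probability at most $\eps^4$, so $\EE_S \max_z N_S(z) \leq 1 + L\binom{L}{2}\eps^4 \leq 2$ for $L = 1/\eps$. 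Both arguments give $\cE \leq Cn$ with a small absolute constant. Your route is more elementary and cleaner in this case; what the paper's route buys is uniformity across operations, and in particular it still works when the output alphabet is small (e.g.\ $q = 2$ XOR), where your coordinate decomposition fails because $\max_z N_S(z) \geq L/q$ by pigeonhole and the first-moment bound cannot beat that floor. So your argument is a nice simplification that is specific to the large-alphabet ($q^t \gg L$) regime of the aggregation operation, while the Johnson-bound approach is the one that generalizes.
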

Again, the proof of Lemma \ref{lem:foldcE} follows from an average-radius Johnson bound.  The proof is given in Appendix \ref{app:cE}.
Then by Theorem \ref{thm:mainthm}, 
recalling that
\[ Y = C L \log(N) \log^5(L),\]
and $N = |\cC_0|$,
we have with high probability that
\begin{align*}
\EE_f \max_{z\in \Sigma^n} \max_{\Lambda \subset \cC_0, |\Lambda| = L} \sum_{c \in \Lambda} \agr(f(c),z) &\leq 
\mathcal{E}(\cC_0, \cD) + Y + \sqrt{ \mathcal{E}(\cC_0,\cD) Y }\\
&\leq O\inparen{  L\log(N) \log^5(L) + n }.
\end{align*}
In the favorable case,
\begin{equation}\label{eq:favorable}
\EE_f \max_{z \in \Sigma^n} \max_{\Lambda \subset \cC, |\Lambda| = L} \frac{1}{L} \sum_{c \in \Lambda} \agr(c, z) \leq O\inparen{ \log(N) \log^5(L) + n/L } = O\inparen{ \log(N)\log^5(1/\eps) + n\eps }.
\end{equation}
As before, $\cC$ is $(1 - C\eps, L-1)$ average-radius list-decodable, for some constant $C$, as long as the right hand side is no more than $O(n\eps)$.  This holds as long as 
\begin{equation}\label{eq:nreq}
\log(N) \log^5(1/\eps) \leq n\eps.
\end{equation}

Equation \eqref{eq:nreq} holds for any choice of $n$.  First, we prove item $1$ and we focus on the case that $n_0 = nt$; 
this mimics the parameter regime the definition of folding (which addresses Question~\ref{ques:fold}).
Given $n_0 = nt$, we can translate \eqref{eq:nreq} into a condition on $R_0$, the rate of $\cC_0$.
We have
\[R_0  = \frac{ \log_q(N) }{ n_0 } = \frac{ \log_q(N) }{ nt }, \]
and so translating \eqref{eq:nreq} into a requirement on $R(\cC_0)$, we see that as long as
\[ R_0 \lesssim \frac{ \eps }{ \log(q) t \log^5(1/\eps) } \lesssim \frac{ \eps }{\log(q) \log^6(1/\eps) }, \]
then with high probability $\cC$ is $(1 - C\eps, L)$-list-decodable.
Choose $n$ so that this holds. 
It remains to verify that the rate $R$ of $\cC$ is the same as the rate $R_0$ of $\cC_0$.  The (straightforward) proof is deferred to Appendix~\ref{app:claim:rate}.
\begin{claim}\label{claim:rate} 
With $\cC_0$ as above and with $n_0 = nt$, 
$|\cC| = N$ with probability at least $1 - o(1)$.
\end{claim}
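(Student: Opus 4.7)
\medskip

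\noindent\textbf{Proof plan for Claim \ref{claim:rate}.} The plan is to use the standard "pairwise injectivity plus union bound" strategy: fix distinct $c \neq c' \in \cC_0$, upper bound $\PR{f(c) = f(c')}$, and then take a union bound over the at most $\binom{N}{2} \leq N^2$ pairs.

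First I would unwind the definition of $f$. Since $f \sim (\dprod{t})^n$, the operation draws $n$ independent size-$t$ subsets $S_1, \ldots, S_n \subset [n_0]$ uniformly at random, and the $j$-th coordinate of $f(c)$ is the aggregated tuple $(c_i)_{i \in S_j}$. Consequently $f(c) = f(c')$ if and only if $c|_{S_j} = c'|_{S_j}$ for every $j \in [n]$, which by independence of the $S_j$'s gives
\[
\PR{f(c) = f(c')} = \inparen{ \PR{ c|_S = c'|_S } }^n,
\]
where $S$ is a uniformly random size-$t$ subset of $[n_0]$. Since $c$ and $c'$ differ in at least $\delta_0 n_0$ positions, their agreement set has size at most $(1-\delta_0) n_0$, and so $\PR{c|_S = c'|_S} \leq \binom{(1-\delta_0)n_0}{t}/\binom{n_0}{t} \leq (1-\delta_0)^t$. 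Plugging in $t \geq 4\ln(1/\eps)/\delta_0$ gives $(1-\delta_0)^t \leq e^{-\delta_0 t} \leq \eps^4$, hence
\[
\PR{f(c) = f(c')} \leq \eps^{4n}.
\]

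Finally I would verify the union bound: $\PR{|\cC| < N} \leq N^2 \cdot \eps^{4n}$, and check that this is $o(1)$ under the rate hypothesis. Since $n_0 = nt$, the assumption $R_0 \leq C_1 \eps / (\log(q) t \log^5(1/\eps))$ translates to
\[
\log(N) \leq \frac{C_1 \eps \cdot n}{\log^5(1/\eps)},
\]
so for sufficiently small $\eps$ we have $2\log(N) \ll 4 n \log(1/\eps)$, whence $N^2 \eps^{4n} = o(1)$ (in fact, it can be made as small as $N^{-\Omega(1)}$). This completes the proof. There isn't really a main obstacle here: the only subtlety is remembering that the $S_j$ may overlap (so this is not literally the folding operation but the aggregation operation where positions are sampled with replacement at the level of subsets), which is harmless because the pairwise bound $\prod_j \PR{c|_{S_j}=c'|_{S_j}}$ uses only independence of the $S_j$'s, not disjointness.
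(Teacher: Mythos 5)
Your proof is correct and follows essentially the same approach as the paper's (Appendix~\ref{app:claim:rate}): bound the pairwise collision probability $\PR{f(c)=f(c')}$ by $(1-\delta_0)^{nt}$, then union bound over the $\binom{N}{2}$ pairs and check $o(1)$ from the rate hypothesis. Your version is in fact slightly more careful in the details: you correctly write the per-subset bound as an inequality $\binom{(1-\delta_0)n_0}{t}/\binom{n_0}{t}\le(1-\delta_0)^t$ rather than an equality, and you obtain the clean bound $\eps^{4n}$ from $\delta_0 t \ge 4\ln(1/\eps)$ (the paper's displayed $\eps^{2nt}$ appears to be a typo, since it would require $\delta_0 \ge 2\ln(1/\eps)$, but the conclusion goes through the same way with $\eps^{4n}$).
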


By a union bound, with high probability both the favorable event \eqref{eq:favorable} occurs, and Claim \ref{claim:rate} holds.  In this case, $\cC$ is $(1 - C\eps, L)$-list-decodable, and the rate $R$ of $\cC$ is
\[ R = R_0. \]

Next, we consider Item 2, where we may choose $n < n_0/t$, thus increasing the rate.  It remains true that as long as \eqref{eq:nreq} holds, then $\cC$ is $(1 - C\eps, L)$-list-decodable.  Again translating the condition \eqref{eq:nreq} into a condition on $\log_{q^t}(N)/n$, we see that as long as
\begin{equation}\label{eq:choosen}
 \frac{ \log_{q^t}(N) }{n} \leq \frac{ \eps }{ t \log(q) \log^5(1/\eps) },
\end{equation}
then $\cC$ is $(1 - C\eps, L)$-list-decodable.
Now we must verify that the left-hand-side of \eqref{eq:choosen} is indeed the rate $R$ of $\cC$, that is, that $|\cC| = N$. As before, the proof is straightforward and is deferred to Appendix~\ref{app:claim:rate2}.
\begin{claim}\label{claim:rate2} With $\cC_0$ as above and with $n$ arbitrary, $|\cC| = N$ with probability at least $1 - o(1)$.
\end{claim}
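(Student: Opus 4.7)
The plan is to mimic the argument used for Claim~\ref{claim:rate}, changing only the final bookkeeping step, since the only relevant ``independence'' structure is the same: the aggregated sets $S_1,\dots,S_n$ are drawn i.i.d.\ uniformly from the $t$-subsets of $[n_0]$.

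First I would observe that $|\cC|<N$ holds only if some pair of distinct $c\neq c'\in\cC_0$ collides under $f$, i.e.\ $f(c)=f(c')$. Fix such a pair. Since $\cC_0$ has distance $\delta_0$, the vectors $c$ and $c'$ agree on at most $(1-\delta_0)n_0$ coordinates of $[n_0]$. For a fixed aggregated coordinate $j\in[n]$, the probability that $c$ and $c'$ agree on every position of a uniformly random $t$-subset $S_j\subset[n_0]$ is at most
\[
\frac{\binom{(1-\delta_0)n_0}{t}}{\binom{n_0}{t}} \;\leq\; (1-\delta_0)^t \;\leq\; e^{-\delta_0 t} \;\leq\; \eps^{4},
\]
where the last inequality uses the hypothesis $t\geq 4\ln(1/\eps)/\delta_0$. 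Since $S_1,\dots,S_n$ are independent, $\PR{f(c)=f(c')}\leq \eps^{4n}$.

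Next I would apply a union bound over the $\binom{N}{2}\leq N^2$ unordered pairs of distinct codewords to obtain
\[
\PR{|\cC|<N} \;\leq\; N^2\eps^{4n}.
\]
It remains to argue this is $o(1)$ using the hypothesis $n\geq \log(N)\log(1/\eps)/\eps$ of item~2. That hypothesis rearranges to $\log(N)\leq n\eps/\log(1/\eps)$, so
\[
N^2\eps^{4n} \;\leq\; 2^{\,2n\eps/\log(1/\eps)\,-\,4n\log(1/\eps)} \;=\; 2^{-n\bigl(4\log(1/\eps)-2\eps/\log(1/\eps)\bigr)},
\]
which is $o(1)$ for sufficiently small $\eps$ and growing $n$.

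There is no real obstacle here: the hypothesis of item~2 is already vastly stronger than what the collision bound demands, because it forces $\log N \ll n\log(1/\eps)$, while injectivity only needs $\log N \lesssim n\log(1/\eps)$. The only thing to be careful about is that, unlike in Claim~\ref{claim:rate}, we cannot invoke $n_0=nt$ to repackage the bound in terms of $R_0$; instead we use the lower bound on $n$ directly.
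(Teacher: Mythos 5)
Your proof is correct, and it takes a genuinely different (and in fact more careful) route than the paper.  The paper's proof of Claim~\ref{claim:rate2} uses the rate hypothesis of Theorem~\ref{thm:fold}, Item~2: it writes $\PR{|\cC|<N}\le N^2\eps^{2nt}=\inparen{q^{R_0 n_0/n}\eps^t}^{2n}$ and appeals to the bound $R_0\le (nt/n_0)\cdot\log(1/\eps)/\log q$.  You instead use the other hypothesis of Item~2, namely $n\ge \log(N)\log(1/\eps)/\eps$, and never invoke the rate bound at all.

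This difference actually matters.  The paper's derivation relies on the inequality $(1-\delta_0)^{nt}\le \eps^{2nt}$, which (after canceling $nt$) requires $1-\delta_0\le\eps^2$, a much stronger condition on $\delta_0$ than the hypothesis $\delta_0\ge C_2>0$ of the theorem.  The hypothesis $t\ge 4\ln(1/\eps)/\delta_0$ only yields $(1-\delta_0)^t\le e^{-\delta_0 t}\le\eps^4$, and hence $\PR{f(c)=f(c')}\le\eps^{4n}$, exactly as you computed.  With this (correct) bound, the rate hypothesis alone is no longer enough: plugging $N\le q^{n_0 R_0}\le\eps^{-nt}$ into $N^2\eps^{4n}$ gives $\eps^{4n-2nt}$, which is $\ge 1$ for $t\ge 2$.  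It is precisely the lower bound on $n$ that rescues the argument, since it forces $\log N\le n\eps/\log(1/\eps)$, making $N^2\eps^{4n}=2^{-n\inparen{4\log(1/\eps)-2\eps/\log(1/\eps)}}=o(1)$.  So your choice to lean on the block-length condition rather than the rate condition is not just an aesthetic alternative---it is the step that actually makes the claim follow from the theorem's stated hypotheses.  The one tiny caveat (not a gap) is that $(1-\delta_0)n_0$ in the binomial quotient should be read as the number of coordinates on which $c,c'$ agree, which is an integer at most $(1-\delta_0)n_0$; the bound $\le(1-\delta_0)^t$ is unaffected.
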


Now, recalling our choice of $n$ in \eqref{eq:choosen}, with high probability both \eqref{eq:favorable} occurs and Claim \ref{claim:rate2} holds.  In the favorable case,
$\cC$ is $( 1- C\eps, L)$-list-decodable, 
as long as the rate $R$ satisfies
\[ R = \frac{ \log_{q^t}(|\cC|) }{n} = \frac{\log_{q^t}(N) }{n} \leq \frac{ C \eps }{ t\log^5(1/\eps) \log(q) }. \]
This completes the proof of Theorem \ref{thm:fold}.

\subsection{Random sub-codes}
In this section we address the case of random sub-codes.  Unlike the previous sections, the machinery of~\cite{rw14, woot2013} does not apply, and so we prove the results in this section directly.
We have the following proposition.
\begin{proposition}
\label{prop:random-subcode}
Let $\cC_0$ be any $(\rho,L_0)$-list decodable $q$-ary code. Let $\cC$ be a random sub-code of $\cC_0$ with $N=pN_0$ (as in the definition in Section \ref{sec:setup}), where
\[p=\frac{1}{q^{\eps n}\cdot L_0}.\]
With probability $1-o(1)$, the random subcode $\cC$ is $\left(\rho,\frac{3}{\eps}\right)$-list decodable. Further, the number of distinct columns n $\cC$ is at least $pN_0/2$.
\end{proposition}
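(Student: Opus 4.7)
The plan is to prove the two claims independently by elementary probabilistic arguments; no heavy machinery from the earlier sections is required, and in particular Theorem~\ref{thm:mainthm} is not needed.

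For the list-decodability guarantee, I would fix a candidate center $z \in \F_q^n$ and consider the Hamming ball $B(z) := \{c \in \cC_0 \suchthat d(c,z) \leq \rho n\}$. The $(\rho, L_0)$-list decodability of $\cC_0$ forces $|B(z)| < L_0$. Since $\cC$ is constructed by drawing $N = pN_0$ columns independently and uniformly at random (with replacement) from $\cC_0$, the number $X_z$ of draws landing in $B(z)$ is Binomial with mean at most $pL_0 = q^{-\eps n}$. The number of distinct codewords of $\cC$ within distance $\rho n$ of $z$ is trivially upper bounded by $X_z$, so it suffices to show that $X_z < 3/\eps$ simultaneously for all $q^n$ centers $z$. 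A direct union bound over which $L = 3/\eps$ draws land in $B(z)$ gives
\[
\PR{X_z \geq L} \leq \binom{N}{L}\inparen{\frac{L_0}{N_0}}^L \leq \inparen{\frac{e N L_0}{L N_0}}^L = \inparen{\frac{e\, pL_0}{L}}^L = \inparen{\frac{e}{L\, q^{\eps n}}}^L.
\]

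Plugging in $L = 3/\eps$ simplifies the above to roughly $(e\eps/3)^{3/\eps}\cdot q^{-3n}$, and a union bound over the $q^n$ choices of $z$ multiplies this by $q^n$. The total failure probability is therefore at most $(e\eps/3)^{3/\eps}\cdot q^{-2n}$, which is $o(1)$ for sufficiently small $\eps$ (and any $n \to \infty$). This establishes the list-decodability claim with probability $1-o(1)$.

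For the lower bound on $|\cC|$, I would bound the number of duplicate draws. The probability that the $i$th draw coincides with one of the previous $i-1$ draws is at most $(i-1)/N_0$, so the expected number of duplicates across the $N$ draws is at most $\binom{N}{2}/N_0 \leq pN/2$. By Markov's inequality, the probability that more than $N/2$ draws are duplicates is at most $p = o(1)$. In the complementary event, $\cC$ contains at least $N/2 = pN_0/2$ distinct columns, and a final union bound combines this with the list-decoding event. The only real ``obstacle'' here is parameter tuning: the value $p = 1/(q^{\eps n} L_0)$ is precisely what makes $p L_0 \cdot q^{\eps n} = 1$, cancelling the trivial bound $|B(z)| \leq L_0$ and leaving a $q^{-2n}$ slack to absorb the union bound over centers while keeping the list size as small as $O(1/\eps)$.
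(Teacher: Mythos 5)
Your proof is correct, and the overall strategy (fix the center, invoke the $(\rho,L_0)$-list-decodability of $\cC_0$ to bound the ball size, bound the probability that many of the $N$ independent draws land in that ball, union bound over $q^n$ centers) is exactly the paper's. The two small technical differences are worth noting. For the list-decodability tail, the paper invokes the multiplicative Chernoff bound $\PR{\sum X_i > t} \leq (pm/t)^{t-pm}$ while you use the direct ``union bound over $L$-subsets of draws'' estimate $\binom{N}{L}(L_0/N_0)^L$; both give $q^{-\Omega(n)}$ failure per center and the difference is purely cosmetic. For the lower bound on the number of distinct columns, the paper union-bounds over all subsets $S$ of size $pN_0/2$ into which $\cC$ could fall, getting $\binom{N_0}{pN_0/2}(p/2)^{pN_0} \leq (ep/2)^{pN_0/2}$; your Markov-on-expected-duplicates argument ($\E{\text{duplicates}} \leq \binom{N}{2}/N_0 \leq pN/2$, so $\PR{\text{duplicates} > N/2} \leq p$) is somewhat more elementary and arguably cleaner, at the mild cost of giving only an $o(1)$ failure probability of magnitude $p$ rather than the paper's much smaller bound---this is immaterial for the stated result since the dominant failure event in both proofs is already the $q^{-n}$ from the list-decoding union bound.
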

The proof of Proposition~\ref{prop:random-subcode} follows straightforwardly from some Chernoff bounds.  We defer the proof to Appendix~\ref{app:random-subcode-pf}.

\begin{remark}
In Proposition~\ref{prop:random-subcode}, the choice of $3/\eps$ for the final list size was arbitrary in the sense that the $3$ can be made arbitrarily close to $1$ (assuming $\eps$ is small enough).
\end{remark}

Proposition~\ref{prop:random-subcode} only works for the usual notion of list decodability. It is natural to wonder if a similar result holds for average-radius list decodability. We show that such a result indeed holds (though with slightly weaker parameters) in Appendix~\ref{app:random-subcode}. 

It is also natural to wonder if one can pick a larger value of $p$---closer to $1/L_0$ than to $1/(q^{\eps n} L_0)$---in the statement of Proposition~\ref{prop:random-subcode}.
In particular, if $L_0$ is polynomial in $n$, could we pick $p=q^{-o(\eps n)}$?  In Appendix~\ref{app:random-subcode}, we show that this is not in general  possible.  More precisely, we show the following theorem.
\begin{theorem}
\label{thm:subcode-lb}
For every $\rho > 0$, and for every $0 < \alpha < \frac{1 - \rho}{12}$, and for every $n$ sufficiently large, there exists a  code $\cC_0$ with block length $n$ that is $(\rho,n)$-average-radius list decodable such that the following holds. Let $\cC$ be  obtained by picking a random sub-code of $\cC_0$ of size $N = pN_0$ where  $p=q^{-\alpha n}/n$. Then with high probability if $\cC$ is $(\rho',L)$-list decodable for any $\rho'\ge 1/n$, then $L\ge \Omega(1/\alpha)$.
\end{theorem}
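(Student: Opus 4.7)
The plan is to build $\cC_0$ as a disjoint union of ``clusters'' $T_c$ around the codewords of an auxiliary code $C^* \subseteq \F_q^n$, following the recipe in the techniques section. Choose the cluster radius $r$ so that $|B(c,r)|$ is on the order of $V \asymp n q^{\alpha n}/\alpha$, i.e., $r/n = H_q^{-1}(\alpha) + o(1)$; this is exactly what makes $pV = \Theta(1/\alpha)$. Let $C^*$ be a random $q$-ary code of rate $R^* > 0$ chosen small enough that (i) the minimum distance of $C^*$ exceeds $2r$ (so the clusters are disjoint and $N_0 = |C^*| \cdot V$), and (ii) for every $z \in \F_q^n$, the list $|C^* \cap B(z, (\rho+r/n)n)|$ is bounded by an absolute constant. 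A standard random-coding (Gilbert--Varshamov) argument produces such a $C^*$ provided $R^* < 1 - H_q(\rho + r/n) - \alpha$, which is positive precisely because $\alpha < (1-\rho)/12$ leaves the required slack. For each $c \in C^*$, take $T_c \subseteq B(c,r)$ to be a carefully structured subset of size $V$ — designed to be internally spread out, i.e., $(\rho, n)$-average-radius list-decodable as a code in $\F_q^n$ — and set $\cC_0 := \bigsqcup_{c \in C^*} T_c$.

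To verify that $\cC_0$ is $(\rho, n)$-average-radius list-decodable, fix any $\Lambda \subseteq \cC_0$ with $|\Lambda| = n$ and any center $z$, and partition $\Lambda = \bigsqcup_c \Lambda_c$ with $m_c = |\Lambda_c|$. The triangle inequality $d(x,z) \geq d(c,z) - r$ for $x \in T_c \subseteq B(c,r)$ yields $\sum_{x \in \Lambda} d(x, z) \geq \sum_c m_c\, d(c,z) - n r$. When $\Lambda$ spreads over many clusters, the list-decodability of $C^*$ at radius $\rho + r/n$ bounds the total mass an adversary can place on cluster centers close to $z$, which forces the bulk of the $n$ units of mass onto clusters at distance $\geq (\rho + r/n)n$ from $z$ and gives $\sum_c m_c\, d(c,z) \geq n(\rho n + r)$, hence $\sum_x d(x,z) \geq \rho n^2$. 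When instead $\Lambda$ is heavily concentrated in one or a few clusters (which is possible because $V \gg n$), the internal average-radius list-decodability of each $T_c$ takes over, providing the cluster-wise bound $\sum_{x \in \Lambda_c} d(x,z) \geq \rho n\, m_c$ and again summing to $\sum_x d(x,z) \geq \rho n^2$.

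For the lower bound on the list-decodability of the random subcode $\cC$, the argument is a direct Chernoff calculation. Each codeword of $\cC_0$ survives into $\cC$ independently with probability $p$, so the number of surviving codewords in any given cluster $T_c$ is $\mathrm{Binom}(V, p)$ with mean $pV = \Theta(1/\alpha)$, and a standard Chernoff/Poisson tail bound shows that a given cluster retains $\geq c_0/\alpha$ survivors with at least a positive constant probability (for some absolute $c_0 > 0$). Since $|C^*| = q^{\Omega(n)}$ is exponential in $n$, with probability $1 - o(1)$ many clusters retain $\geq c_0/\alpha$ surviving codewords. All surviving codewords of any such cluster lie in the single ball $B(c, r)$, so at the radius $\rho' := r/n$ — which is a fixed positive constant and hence $\geq 1/n$ for $n$ large — the code $\cC$ contains $\geq c_0/\alpha$ codewords within distance $\rho' n$ of the center $c$. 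Therefore $\cC$ cannot be $(\rho', L)$-list-decodable for any $L < c_0/\alpha = \Omega(1/\alpha)$.

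The hard part is exhibiting the clusters $T_c$ — subsets of $B(c,r)$ of size $V \asymp q^{\alpha n}$ that are themselves $(\rho, n)$-average-radius list-decodable as codes in $\F_q^n$. The naive choice $T_c = B(c, r)$ fails immediately, since an adversary could take the $n$ codewords of $T_c$ closest to $c$ and obtain average distance far below $\rho n$ against the center $z = c$. A more careful choice — e.g., a random subset of the sphere $S(c, r)$, equivalently $c$ plus a constant-weight code with appropriate list-decodability — runs into the familiar concentration-vs.-union-bound pinch: a Hoeffding-type tail bound on $\sum_{x \in \Lambda} d(x,z)$ must beat a union bound over both $\binom{V}{n}$ choices of $\Lambda$ and $q^n$ choices of $z$, which is non-trivial when $V$ is exponentially large. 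The numerical constraint $\alpha < (1-\rho)/12$ in the theorem is precisely the slack that allows a Gilbert--Varshamov-type probabilistic construction of such a $T_c$ (viewed as a constant-weight subcode of $B(c,r)$) to succeed.
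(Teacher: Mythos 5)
Your high-level architecture (clusters around codewords of an auxiliary code $C^*$) is the same as the paper's, but the parameters you choose for the clusters are fatally different, and the gap you flag at the end cannot be closed in the way you suggest. In fact, your construction provably cannot yield a $(\rho,n)$-average-radius list-decodable $\cC_0$.

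The problem is that your clusters $T_c$ have size $V \approx q^{\alpha n} \gg n$ and radius $r$ with $r/n = H_q^{-1}(\alpha) + o(1)$. For small $\alpha$ (which is the interesting regime, since the point of the theorem is to force $L = \Omega(1/\alpha)$ to be large), we have $r/n \ll \rho$. But if a single cluster $T_c$ contains at least $n$ codewords and is contained in a ball of radius $r < \rho n$, then taking $z = c$ and $\Lambda \subseteq T_c$ with $|\Lambda| = n$ gives $\sum_{x\in\Lambda} d(x,z) \le nr < \rho n^2$, i.e.\ $\sum_{x\in\Lambda}\agr(x,z) > (1-\rho)n^2$, directly violating average-radius list-decodability. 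This is not a matter of choosing $T_c$ ``carefully'' — the failure is forced by $|T_c| \geq n$ and $\operatorname{diam}(T_c) < 2\rho n$, and no Gilbert--Varshamov-style construction escapes it. The ``concentration-vs.-union-bound pinch'' you cite is a red herring; the obstruction is geometric and unconditional.

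The paper avoids this entirely by making the clusters small in \emph{both} senses: each $N(c)$ has only $\Theta(\beta n)$ elements (strictly fewer than $n$, so a size-$n$ set $\Lambda$ must straddle a constant number of clusters), and every element of $N(c)$ is at Hamming distance exactly $1$ from $c$. This lets the paper run an annulus decomposition of $B_q(z,(\rho+\beta)n)$: the list-decodability of $C^*$ at each annulus radius bounds the number of clusters whose centers are nearby, and the per-cluster contribution is bounded by $|N(c)| \cdot 2\gamma n$, which sums to $\beta n^2$ over all annuli. Nothing like your ``internal list-decodability of $T_c$'' is needed — or even possible.

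Two consequences for the other half of your argument. With small clusters, the expected number of survivors per cluster is exponentially small ($p\cdot O(n) = O(q^{-\alpha n})$), so your binomial-mean-$\Theta(1/\alpha)$ computation no longer applies. The paper instead shows that \emph{some} cluster among the $q^{rn}$ of them retains a fixed $\Theta(1/\alpha)$-subset, via a rare-event calculation that is iterated over clusters with a conditioning argument (Claim~\ref{claim:prob-lb-alt}) to handle the fact that the sub-code has a fixed size rather than i.i.d.\ inclusion of codewords. Finally, the paper exhibits the bad list at radius $\rho' = 1/n$ (the radius-$1$ cluster), which proves the statement for all $\rho' \geq 1/n$; your $\rho' = r/n$ is a constant, so even if the rest of your argument went through, it would only cover the weaker range $\rho' \geq r/n$.
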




\section{Applications}
Finally, we use the results of Section~\ref{sec:gen} to resolve Questions~\ref{ques:enc},~\ref{ques:fold}, and~\ref{ques:subcode}.
\label{sec:app}
\subsection{Linear time near optimal list decodable codes}
\label{sec:xor-app}
First, we answer Question \ref{ques:enc}, and give linear-time encodable binary codes with the optimal trade-off between rate and list-decoding radius.  
Our codes will work as follows.  We begin with a linear-time encodable code with constant rate and constant distance; we will use Spielman's variant on expander codes~\cite[Theorem 19]{spielman}.  These codes have rate $1/4$, and distance $\delta_0 \geq 0$ (a small positive constant).  Notice that a random puncturing of $\cC_0$ (as in~\cite{woot2013, rw14}) will not work, as $\cC_0$ does not have good enough distance---however, a random XOR, as in Section \ref{sec:xor} will do the trick.

\begin{cor}\label{thm:enc} There is a randomized construction of binary codes $\cC \in \F_2^n$ so that the following hold with probability $1 - o(1)$, for any sufficiently small $\eps$ and any sufficiently large $n$.
\begin{enumerate}
\item $\cC$ is encodable in time $O(n \ln(1/\eps))$.
\item $\cC$ is $(\rho, L)$-average-radius list-decodable with $\rho = \frac{1}{2}(1 - C \eps)$ and $L = \eps^{-2}$, where $C$ is an absolute constant.
\item $\cC$ has rate $\Omega(\eps^2)$. 
\end{enumerate}
\end{cor}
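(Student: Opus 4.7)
The plan is to apply Theorem \ref{thm:gen-xor} directly, taking the base code $\cC_0$ to be Spielman's linear-time encodable binary code~\cite{spielman}, which has constant rate $R_0 = 1/4$ and constant distance $\delta_0 > 0$. Set $t = 4\ln(1/\eps)/\delta_0 = \Theta(\log(1/\eps))$ and let $\cD = (\cU_{\oplus, t})^n$, choosing $n$ so that the rate $R = \log_2(N)/n$ of the resulting code satisfies $R = \Theta(\eps^2)$ (as permitted by Theorem \ref{thm:gen-xor}). Draw $f \sim \cD$ and output $\cC = f(\cC_0)$.

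First I would invoke Theorem \ref{thm:gen-xor} as a black box: because $\cC_0$ has distance $\delta_0 = \Omega(1)$ and our choice of $t$ meets the hypothesis $t = 4\ln(1/\eps)\delta_0^{-1}$, with probability $1 - o(1)$ the code $\cC$ is $(\frac{1}{2}(1-O(\eps)), \eps^{-2})$-average-radius list-decodable and has rate $\Omega(\eps^2)$. This immediately yields items 2 and 3.

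The remaining task is item 1, the $O(n\ln(1/\eps))$ encoding bound. To encode a message $m \in \F_2^k$ where $k = \log_2 N$, we first run Spielman's linear-time encoder to produce $c_0 \in \F_2^{n_0}$ in time $O(n_0)$, then compute each of the $n$ coordinates of $c = f(c_0)$ as the XOR of $t$ specified entries of $c_0$, for a total of $O(nt)$ additional operations. Since $R_0 = 1/4$ gives $n_0 = 4k$, and since $k = Rn = \Theta(n\eps^2)$, the first stage costs $O(n\eps^2)$, while the second stage costs $O(n\log(1/\eps))$; the sum is $O(n\log(1/\eps))$ because $\eps^2 \ll \log(1/\eps)$ for small $\eps$.

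There is no real obstacle here: the main content was already extracted in Theorem \ref{thm:gen-xor}, which relies on the Johnson-bound computation of $\cE(\cC_0, \cD)$ in Lemma \ref{lem:xorE} (using only $\delta_0 = \Omega(1)$) together with the deviation inequality of Theorem \ref{thm:mainthm}. The one small point to verify is that the random XOR operation, when applied on top of a precomputed $c_0$, can be realized in $O(nt)$ time in the RAM model assumed by Problem \ref{prob:ideal}; this is immediate since each output coordinate requires reading $t$ bits of $c_0$ and XORing them. Combining the three bounds under a single union bound over the event in Theorem \ref{thm:gen-xor} completes the proof.
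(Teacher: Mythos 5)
Your proposal is correct and follows essentially the same route as the paper: instantiate $\cC_0$ as Spielman's code, apply Theorem~\ref{thm:gen-xor} as a black box to get items 2 and 3, and then account for the encoding time as Spielman-encode plus an $n \times n_0$ sparse matrix multiply with $t = O(\log(1/\eps))$ nonzeros per row. You are a bit more explicit than the paper in separately bounding the $O(n_0) = O(n\eps^2)$ cost of the inner Spielman encoding and the $O(nt)$ cost of the XOR stage, but the substance is identical.
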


Indeed, let $\cC_0$ be as above.
Let $t = 4\ln(1/\eps)\delta_0^{-1}$, and choose $f \sim \inparen{ \cU_{\oplus, t} }^n$, as in Theorem \ref{thm:gen-xor}.
Let $\cC = f(\cC_0)$.  Items 2. and 3. follow immediately from 
Theorem~\ref{thm:gen-xor}, so it remains to 
verify Item 1 of Theorem \ref{thm:enc}, that $\cC$ is linear-time encodable.  Indeed, we have
\[ \cC(x) = A \cC_0(x), \]
where $A \in \F_2^{n \times n_0}$ is a matrix whose rows are binary vectors with at most $t$ nonzeros each. 
In particular, the time to multiply by $A$ is $nt = O(n \ln(1/\eps))$, as claimed.

\subsection{Random Folding}
\label{sec:fold}

Next, we further discuss Question \ref{ques:fold}, which asked for a rigorous version of the intuition behind results for folded Reed-Solomon codes and expander-based symbol aggregation.  
The intuition is that increasing the alphabet size effectively reduces the number of error patterns a decoder has to handle, thus making it easier to list-decode.  To make this intuition more clear, consider the following example when $q=2$.
Consider an error pattern that corrupts a $1-2\eps$ fraction of the \em odd \em positions (the rest do not have errors).  This error pattern must be handled by any decoder which can list decode from $1/2 - \eps$ fraction of errors.  On the other hand, consider a $2$-folding (with partition as above) of the code; now the alphabet size has increased, so we hope to correct $1-1/2^2-\eps=3/4-\eps$ fraction of errors.  However, the earlier error pattern affects a $1-2\eps$ of the new, folded symbols.  Thus, in the folded scenario, an optimal decoder need not handle this error pattern, since $1 - 2\eps > 3/4 - \eps$ (for small enough $\eps$). 

In Theorem \ref{thm:fold}, Item 1, we have shown that if $\cC_0$ is any code with distance bounded away from $0$ and with rate sufficiently small (slightly sublinear in $\eps$), has abundant random $t$-wise aggregation of symbols which are list-decodable up to a $1 -\eps$ fraction of errors, when $n = n_0/t$ and $t$ is large enough (depending only on $\eps$ and $q$).   This is the same parameter regime as folded Reed-Solomon codes (up to logarithmic factors in the rate), and thus the Theorem answers Question~\ref{ques:fold} insofar as it lends a rigorous way to interpret $t$-wise aggregation in this parameter regime.

\begin{remark}
While the intuition above applies equally well to folding and more general $t$-wise symbol aggregation,
We note that a random folding and a random symbol aggregation are not the same thing.  In the latter, the symbols of the new code may overlap, while in the former they may not.  However, allowing overlap makes our computations simple; since the goal was to better understand the intuition above, we have done our analysis for the simpler case of $t$-wise symbol aggregation.  It is an interesting open question to find a (clean) argument for the folding operation, perhaps along the lines of the argument of Corollary~\ref{cor:norepl} for puncturing vs. sampling.
\end{remark}

\subsection{Applications of random sub-codes}

Finally, we observe that Proposition~\ref{prop:random-subcode} immediately answers Question~\ref{ques:subcode} in the affirmative.
Indeed, suppose that $\cC_0$ is $(\rho_0, L_0)$-list-decodable with rate $R_0$.  Then Proposition~\ref{prop:random-subcode} implies that with high probability, for any sufficiently small $\eps$, a random subcode of rate
\[ R_0 - O \inparen{ \eps \log(q) + \frac{\log(L_0)}{n}} \]
is $(\rho_0, 3/\eps)$-list-decodable.  In particular, if we start out with a binary code with constant rate and large but subexponential list size, the resulting subcode will also have constant rate, and constant list size. 

For example, this has immediate applicatons for Reed-Solomon codes. 
Guruswami and Xing~\cite{GX13} showed that for every real $R$, $0<\eps<1-R$ and prime power $q$, there is an integer $m>1$ such that Reed-Solomon codes defined over $\F_{q^m}$ with the evaluation points being $\F_q$ of rate $R$ can be list decoded from the optimal $1-R-\eps$ fraction of errors with list size $N^{\eps}$. Thus, Proposition~\ref{prop:random-subcode} implies that random sub-codes of these codes are optimally list decodable (in all the parameters). We remark that this result also follows from the work of Guruswami and Xing~\cite{GX13}; our argument above is arguably simpler, but does not come with an algorithmic guarantee as results of~\cite{GX13} do.

Given Proposition~\ref{prop:random-subcode}, it is natural to ask about the list-decodability of the subcode $\cC$ when the error radius $\rho$ may be different than $\rho_0$. 
It turns out that this also follows from Proposition \ref{prop:random-subcode}:
below, we will use Proposition~\ref{prop:random-subcode} to argue that if a code $\cC_0$ is optimally list decodable for some fixed $\rho_0>0$ fraction of errors, then its random subcodes with high probability are optimally list decodable from $\rho$ fraction of errors for any $\rho_0\le \rho<1-1/q$.
Towards that end, we will make the following simple observation:
\begin{lemma}
\label{lem:eb}
Let $\cC$ be $(\rho,L)$-list decodable $q$-ary code. Then for every $\rho\le \rho'<1-1/q$, $\cC$ is also $(\rho',L')$-list decodable, where
\[ L'\le L\cdot q^{n(H_q(\rho')-H_q(\rho)+o(1))}\cdot 2^n.\]
\end{lemma}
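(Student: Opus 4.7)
The natural approach is a standard covering argument: cover the Hamming ball $B(z',\rho' n)$ by Hamming balls of radius $\rho n$, apply the $(\rho,L)$-list decodability of $\cC$ within each small ball, and union-bound. Once we have a covering of size $M$, we get
\[
L' \;=\; |\cC \cap B(z',\rho' n)| \;\le\; \sum_{i=1}^{M} |\cC \cap B(z_i,\rho n)| \;\le\; M\cdot L,
\]
so the whole task reduces to exhibiting a covering of reasonable size.

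\textbf{Step 1: building the covering.} I would use a ``shift'' construction. The key observation is that for every $x \in B(z',\rho' n)$, there exists an $s \in \F_q^n$ with Hamming weight at most $(\rho'-\rho)n$ satisfying $d(x, z'+s) \le \rho n$. Indeed, setting $v = x - z'$, one has $\|v\|_0 \le \rho' n$; choosing $s$ to agree with $v$ on $\min(\|v\|_0,(\rho'-\rho)n)$ of the coordinates in $\mathrm{supp}(v)$ (and zero elsewhere), one checks that $\|s\|_0 \le (\rho'-\rho)n$ and $\|v-s\|_0 = \max(0, \|v\|_0-(\rho'-\rho)n) \le \rho n$. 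Thus
\[
B(z',\rho' n) \;\subseteq\; \bigcup_{s \in \F_q^n,\, w(s) \le (\rho'-\rho)n} B(z'+s,\rho n),
\]
and the number of shifts is at most $V_q(n,(\rho'-\rho)n)$.

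\textbf{Step 2: bounding the covering size in the desired form.} Applying the above, $L' \le V_q(n,(\rho'-\rho)n)\cdot L$. I would then invoke the standard entropy bound $V_q(n,\alpha n) \le q^{nH_q(\alpha)+o(n)}$ for $\alpha \le 1-1/q$. To match the stated form $q^{n(H_q(\rho')-H_q(\rho)+o(1))}\cdot 2^n$, I would use subadditivity-style properties of $H_q$ (together with the fact that $H_q(\cdot) \le 1$) to relate $H_q(\rho'-\rho)$ to $H_q(\rho')-H_q(\rho)$, with any slack absorbed by the very generous multiplicative factor $2^n$. If the shift covering happens to be loose for some parameter regime, one can alternatively combine it with the trivial bound that $B(z',\rho' n)$ can be covered by at most $V_q(n,\rho' n)\le q^{nH_q(\rho')}$ singleton balls, or use a probabilistic covering with centers chosen randomly in $\Sigma^n$, both of which contribute only $q^{O(n)}$ factors that are easily absorbed into $2^n\cdot q^{n\cdot o(1)}$.

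\textbf{Main obstacle.} Essentially the only non-mechanical step is Step 2: verifying that one of the standard covering bounds (shift, greedy/packing, or random) can be massaged into the exact form $q^{n(H_q(\rho')-H_q(\rho)+o(1))}\cdot 2^n$. Fortunately, the statement is extremely loose (the $2^n$ factor gives exponential slack, and the application in Section~\ref{sec:app} only uses the consequence $L' \le L\cdot q^{O(n)}$), so this is just a matter of bookkeeping rather than a real difficulty.
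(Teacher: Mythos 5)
Your high-level plan (reduce to a covering number and multiply by $L$) is reasonable, but Step 2 has a genuine gap: the shift covering you construct is too large to give the stated bound, and the lemma is not as loose as you think.

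Concretely, your shift covering gives $L' \le L\cdot V_q\bigl(n,(\rho'-\rho)n\bigr) \le L\cdot q^{nH_q(\rho'-\rho)}$, so you need
\[
H_q(\rho'-\rho) \;\le\; H_q(\rho') - H_q(\rho) + \log_q 2 + o(1).
\]
Using the identity $H_q(x) = x\log_q(q-1) + H_2(x)/\log_2 q$, the $\log_q(q-1)$ terms cancel and this reduces to $H_2(\rho'-\rho) + H_2(\rho) - H_2(\rho') \le 1 + o(1)$. This fails in general: for instance with $q$ large, $\rho = 0.5$ and $\rho' = 0.9$ one gets $H_2(0.4)+H_2(0.5)-H_2(0.9)\approx 0.971+1-0.469 \approx 1.5 > 1$. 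So the shift covering provably overshoots the claimed bound in this regime. Your fallback options don't rescue this either: the singleton covering gives $L' \le q^{nH_q(\rho')}$, which exceeds $L\cdot q^{n(H_q(\rho')-H_q(\rho))}\cdot 2^n$ whenever $H_q(\rho)>\log_q 2$, i.e., almost always. And the assertion that the lemma has ``$q^{O(n)}$ slack easily absorbed into $2^n\cdot q^{n\cdot o(1)}$'' is simply not true: a $q^{cn}$ factor with $c>0$ constant cannot be absorbed into $2^n q^{o(n)}$ once $q>2$. The real slack in the lemma is only a single factor of $2^n$ (coming from $\binom{n}{\rho n}\le 2^n$), not $q^{O(n)}$.

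The paper avoids covering altogether and uses an Elias--Bassalygo-style averaging argument: pick $z$ uniformly at random in $B_q(y,\rho' n)$ and compute, for a fixed codeword $c$ in that ball, $\PR{c\in B_q(z,\rho n)} = |B_q(c,\rho n)\cap B_q(y,\rho' n)|/|B_q(y,\rho' n)|$. The key combinatorial estimate is $|B_q(c,\rho n)\cap B_q(y,\rho' n)| \ge (q-1)^{\rho n}$ (take any $\rho n$ positions where $c$ and $y$ agree and rewrite them arbitrarily). Linearity of expectation then yields some $z$ with $|B_q(z,\rho n)\cap\cC|\ge L'\cdot (q-1)^{\rho n}/|B_q(y,\rho'n)|$, and since $(q-1)^{\rho n}\ge q^{H_q(\rho)n-o(n)}/2^n$ and $|B_q(y,\rho' n)|\le q^{H_q(\rho')n}$, applying $(\rho,L)$-list-decodability to that single $z$ gives exactly the claimed bound. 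Morally, the paper's argument controls the average ``covering density'' rather than a worst-case covering number, which is where the improvement over your bound $V_q(n,(\rho'-\rho)n)$ comes from. If you want to salvage a covering-based proof, you would need a random covering of $B(y,\rho'n)$ and the same intersection-volume estimate to show it is small, which is no simpler than the paper's direct averaging.
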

\begin{proof}
Consider a  received word $y\in [q]^n$ such that $|\cC\cap B_q(y,\rho'n)|=L'$. Now we claim that there exists a $z\in B_q(y,\rho'n)$ such that 
\begin{align}
\label{eq:exists-y}
|B_q(z,\rho n)\cap \cC| &\ge L'\cdot \frac{(q-1)^{\rho n}}{|B_q(y,\rho' n)|}\\
\label{eq:exists-y-bound}
& \ge L'\cdot \frac{q^{H_q(\rho)n-o(n)}}{2^n}\cdot \frac{1}{q^{H_q(\rho' n)}}.
\end{align}
In the above the second inequality follows from the following facts:
volume of $q$-ary Hamming balls of radius $\gamma n$ are bounded from above by $q^{H_q(\gamma)n}$ and from below by $q^{H_q(\gamma) n -o(n)}$ (and that $\binom{n}{\rho n} (q-1)^{\rho n} \ge q^{H_q(\rho) n -o(n)}$). 
\eqref{eq:exists-y-bound}  along with the fact that $\cC$ is $(\rho, L)$-list decodable
proves the claimed bound on $L'$. 

To complete the proof we argue~\eqref{eq:exists-y}: we show the existence of $z$ by the probabilistic method:\footnote{This part of the proof is similar to the argument used to prove the Elias-Bassalygo bound~\cite{book}.}  pick $z\in B_q(y,\rho' n)$ uniformly at random. Fix a $c\in\cC\cap B_q(y,\rho' n)$. Then
\[\PR{ c\in B_q(z,\rho n)} = \frac{|B_q(c,\rho n) \cap B_q(y,\rho' n)|}{B_q(y,\rho' n)}.\]
Next we argue that
\begin{equation}
\label{eq:c-cap-y-vol}
|B_q(c,\rho n) \cap B_q(y,\rho' n)| \ge (q-1)^{\rho n}.
\end{equation}
Note that the above implies that
\[\E{|B_q(z,\rho n)\cap \cC|} \ge L'\cdot \frac{(q-1)^{\rho n}}{|B_q(y,\rho' n)|},\]
which would prove~\eqref{eq:exists-y}. To see why~\eqref{eq:c-cap-y-vol} is true, consider any $\rho n$ positions where $c$ and $y$ agree on. Note that if we change all of those values (to any of the $(q-1)^{\rho n}$ possibilities) to obtain $c'$, then we have $d(c',y)\le \rho' n$ and $d(c',c)= \rho n$, which proves~\eqref{eq:c-cap-y-vol}.
\end{proof}

Lemma~\ref{lem:eb} along with Proposition~\ref{prop:random-subcode} implies the following.
\begin{cor}\label{cor:optsub}
Let $q\ge 2^{1/\eps}$.
Let $\cC_0$ be a $(\rho,L)$-list decodable $q$-ary code with optimal rate $1-H_q(\rho)-\eps$. Then for any $\rho'\ge \rho$, with probability at least $1 - o(1)$, a random subcode $\cC$ of $\cC_0$ of rate $1 - H_q(\rho') - O(\eps)$ is  $(\rho',O(1/\eps))$-list decodable.
\end{cor}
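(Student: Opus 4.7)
The plan is to apply the two existing tools in sequence: first Lemma~\ref{lem:eb} to boost the decoding radius of $\cC_0$ from $\rho$ to $\rho'$ (at the cost of a much larger list size), and then Proposition~\ref{prop:random-subcode} to drive the list size back down at the cost of shrinking the rate.

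In the first step, I apply Lemma~\ref{lem:eb} with radius $\rho'\ge\rho$ to conclude that $\cC_0$ is $(\rho',L')$-list-decodable with
\[ L' \le L\cdot q^{n(H_q(\rho')-H_q(\rho)+o(1))}\cdot 2^n. \]
Here the awkward $2^n$ factor is exactly the place where the hypothesis $q\ge 2^{1/\eps}$ is used: it converts the $2^n$ into $q^{\eps n}$, so that
\[ L' \le L\cdot q^{n(H_q(\rho')-H_q(\rho)+\eps+o(1))}. \]

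In the second step, I feed the parameters $(\rho',L')$ into Proposition~\ref{prop:random-subcode}, choosing $p=1/(q^{\eps n}L')$. The proposition then gives, with probability $1-o(1)$, a subcode $\cC$ of size at least $pN_0/2$ that is $(\rho',3/\eps)$-list-decodable, which already establishes the claimed list size of $O(1/\eps)$.

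It remains to check the rate. Writing $R_0 = 1-H_q(\rho)-\eps$, the rate of $\cC$ is
\[ R = \frac{\log_q(pN_0/2)}{n} = R_0 + \frac{\log_q(p)}{n} - o(1), \]
and
\[ \frac{\log_q(p)}{n} = -\eps - \frac{\log_q(L')}{n} \ge -\eps - \frac{\log_q(L)}{n} - \bigl(H_q(\rho')-H_q(\rho)\bigr) - \eps - o(1). \]
Combining, $R \ge 1 - H_q(\rho') - 3\eps - \log_q(L)/n - o(1)$. The only slightly delicate point, and what I would flag as the main technical step to pin down carefully, is the implicit assumption that $\log_q(L) = O(\eps n)$ so that this rate is indeed $1-H_q(\rho')-O(\eps)$. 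This is mild: any genuinely ``optimal'' $(\rho,L)$-list-decodable code in the statement will have $L$ at most polynomial in $n$ (and in practice $L=O(1/\eps)$), so $\log_q(L)/n$ is absorbed into the $O(\eps)$ slack without difficulty. No new ideas beyond Lemma~\ref{lem:eb} and Proposition~\ref{prop:random-subcode} are required; the corollary is essentially a bookkeeping composition of the two.
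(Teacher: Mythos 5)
Your proof is correct and is exactly the paper's intended argument: the paper states only that ``Lemma~\ref{lem:eb} along with Proposition~\ref{prop:random-subcode} implies the following,'' and you have unpacked that one-liner faithfully, including the observation that $q\ge 2^{1/\eps}$ is what turns the stray $2^n$ factor in Lemma~\ref{lem:eb} into a tolerable $q^{\eps n}$. The one subtlety you flag, that one needs $\log_q(L)/n = O(\eps)$ for the rate bound to land at $1-H_q(\rho')-O(\eps)$, is real: the paper itself records the rate loss as $R_0 - O(\eps\log q + \log(L_0)/n)$ in the discussion just preceding the corollary, so the corollary statement implicitly assumes $L$ is at most $q^{O(\eps n)}$ (in practice $L=O(1/\eps)$ for an optimal code), and your handling of this is consistent with the paper's.
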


\begin{remark}
\label{rem:eb}
The bound in Lemma~\ref{lem:eb} is tight up to the $q^{o(n)}\cdot 2^n$ factor. In particular, one cannot have a bound of $L\cdot q^{\gamma n}$ for any $\gamma < H_q(\rho')-H_q(\rho)$ since that would contradict the list decoding capacity bounds.
\end{remark}





\section{Open Questions}
In this work we have made some (modest) progress on understanding on how random row and column operations change the list decodability of codes. We believe that our work highlights many interesting open questions. We list some of our favorites below:
\begin{enumerate}
\item Theorem~\ref{thm:fold} is proved for random $t$-wise direct product codes. It would be nice to prove the analog of item $1$ in Theorem~\ref{thm:fold} for random $t$-wise folding so that we can formally answer Question~\ref{ques:fold} in the affirmative.
\item We did not present any results for random $t$-wise interleaving. Gopalan, Guruswami and Raghavendra have shown that for any code $\cC_0$ its $t$-wise interleaved code $\cC$ (that is the code that deterministically applies all possible basic column operations that bunch together the $\binom{N_0}{t}$ subsets of columns of size $t$) the list decodability does not change by much~\cite{GGR11}. In particular, they show that if $\cC_0$ is $(\rho,L)$-list decodable then $\cC$ is $(\rho,L^{O(1)})$-list decodable. However, for {\em random} $t$-wise interleaving the list decoding radius might actually improve.\footnote{If this were to be the case then this could formalize the reason why the Parvaresh-Vardy codes~\cite{PV05}, which are sub-codes of interleaving of Reed-Solomon codes, have good list decodability properties.} We leave open the question of resolving this possibility.
\item As mentioned above, our work, and the results of Guruswami and Xing~\cite{GX13}, shows that random sub-codes of Reed-Solomon codes over $\F_{q^m}$ (for large enough $m$) with evaluation points from the sub-field $\F_q$ have optimal list decodable properties.  We believe that we should be able to derive such a result even if we start from any Reed-Solomon codes or at the very least if one starts off with a randomly punctured Reed-Solomon codes. Note that even though the results of~\cite{rw14} give near optimal list decodability results of Reed-Solomon codes, their results are logarithmic factors off from the optimal rate bounds.  Proposition~\ref{prop:random-subcode} implies that it suffices to prove a non-trivial exponential bound on the list size for list decoding rate $R$ Reed-Solomon codes from $1-R-\eps$ fraction of errors---a special case of this is proved in \cite{GX13}, but the general question is open.  
\item All of our results so far only use either just random row operation or just random column operations. An open question is to find applications where random row and column operations could be use together to obtain better results than either on their own.  The above point would be such an example, if resolved. 
\end{enumerate}

\subsection*{Acknowledgments}
We thank Swastik Kopparty and Shubhangi Saraf for initial discussions on Questions~\ref{ques:enc} and~\ref{ques:fold}  (and for indeed suggesting the random XOR as an operation to consider) and \href{http://www.dagstuhl.de/en/program/calendar/semhp/?semnr=12421}{Dagstuhl} for providing the venue for these initial discussions.
We thank Venkat Guruswami for pointing out the argument in Appendix~\ref{app:binary-lin}. Finally, we thank Parikshit Gopalan for pointing the connection of our results to existing results on XOR and direct product codes.  MW also thanks the theory group at IBM Almaden for their hospitality during part of this work.
\bibliographystyle{alpha}
\bibliography{refs}

\appendix

\section{Average case, average radius Johnson bounds}
\label{app:johnson-bound}

The \em Johnson bound \em states that any code with good enough distance is list-decodable with polynomial list sizes, up to a radius that depends on the distance.  For this work, we will need some slight variants on the Johnson bound.  We will be interested in average-radius list decoding, rather than the standard definition.  We state three versions of an average-radius Johnson bound below, for different list sizes.
\begin{theorem}[Average-radius Johnson bounds]\label{thm:jb}
Let $\mathcal{C}:\F_q^k \to \F_q^n$ be any code.  Then for all $\Lambda \subset \F_q^k$ of size $L$ and for all $z \in \F_q^n$:

\begin{itemize}
\item If $q=2$,
\[ \sum_{x \in \Lambda} \agr(\cC(x),z) \leq \frac{n}{2} \inparen{ L + \sqrt{ L^2 - 2\sum_{x \neq y \in \Lambda} d(\cC(x),\cC(y)) } }.\]
\item For all $\eps \in (0,1)$,
\[ \sum_{x \in \Lambda} \agr( \cC(x), z) \leq \frac{nL}{q} + \frac{ nL}{2\eps} \inparen{ 1 + \eps^2 }\inparen{1 - \frac{1}{q}} - \frac{n}{2L\eps}\sum_{x \neq y \in \Lambda} d(\cC(x),\cC(y)). \]
\item 
\[\sum_{x\in\Lambda} \agr(\cC(x),z) \le \frac{1}{2}\inparen{n+\sqrt{n^2+ 4n^2L(L-1)-4n^2\sum_{x\neq y\in\Lambda} d(\cC(x),\cC(y))}}.\]
\end{itemize}
\end{theorem}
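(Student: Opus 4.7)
The plan is to prove each of the three bounds by expressing $\sum_{x\in\Lambda}\agr(\cC(x),z)$ as an inner product via a vector embedding, and then invoking Cauchy--Schwarz together with a second-moment estimate of $\sum_x \phi(\cC(x))$. Throughout, I treat the $d(\cdot,\cdot)$ appearing in the bounds as the relative Hamming distance $d(u,v)=|\{k:u_k\neq v_k\}|/n$; this is what is needed for the three expressions to be dimensionally consistent, and it is a minor bookkeeping matter to reconcile with the absolute Hamming distance used elsewhere in the paper.

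For bullet 1, I extend $\phi:\F_2\to\R$ with $\phi(0)=+1,\phi(1)=-1$ coordinatewise to a map $\F_2^n\to\{\pm 1\}^n$. Under $\phi$ one has $\agr(u,v)=(n+\ip{\phi(u)}{\phi(v)})/2$, so
\[\sum_{x\in\Lambda}\agr(\cC(x),z)\;=\;\frac{nL}{2}+\frac{1}{2}\ip{\sum_{x\in\Lambda}\phi(\cC(x))}{\phi(z)}.\]
Since $\|\phi(z)\|=\sqrt{n}$, Cauchy--Schwarz reduces matters to bounding $\|\sum_x\phi(\cC(x))\|^2$. Expanding into diagonal and off-diagonal terms and using $\ip{\phi(u)}{\phi(v)}=n-2n\,d(u,v)$ collapses it to $n(L^2-2\sum_{x\neq y}d(\cC(x),\cC(y)))$, and plugging back in gives the bound.

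For bullet 2, I use the centered embedding $\tilde\phi:\F_q\to\R^q$ defined by $\tilde\phi(a)=e_a-\mathbf{1}/q$, again extended coordinatewise. The two relevant identities are $\|\tilde\phi(u)\|^2=n(1-1/q)$ and $\ip{\tilde\phi(u)}{\tilde\phi(v)}=\agr(u,v)-n/q$, which together give the identity $\sum_x\agr(\cC(x),z)=nL/q+\ip{\sum_x\tilde\phi(\cC(x))}{\tilde\phi(z)}$. Cauchy--Schwarz together with the same pairwise-distance expansion as above then yields
\[\sum_x\agr(\cC(x),z)\;\le\;\frac{nL}{q}+\sqrt{n(1-1/q)\cdot\inparen{nL^2(1-1/q)-n\sum_{x\neq y}d(\cC(x),\cC(y))}}.\]
To introduce the parameter $\eps$, I apply the weighted AM--GM inequality $\sqrt{ab}\le(a/\alpha+\alpha b)/2$ with $\alpha=1/(L\eps)$; algebraic rearrangement then produces precisely the claimed expression.

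For bullet 3, I use a direct counting argument instead. For each $k\in[n]$, let $m_k=|\{x\in\Lambda:\cC(x)_k=z_k\}|$, so $A:=\sum_x\agr(\cC(x),z)=\sum_k m_k$. Double counting gives $\sum_k m_k(m_k-1)=\sum_{x\neq y\in\Lambda}|\{k:\cC(x)_k=\cC(y)_k=z_k\}|$; bounding each inner cardinality by $\agr(\cC(x),\cC(y))=n-n\,d(\cC(x),\cC(y))$ gives $\sum_k m_k^2\le A+nL(L-1)-n\sum_{x\neq y}d(\cC(x),\cC(y))$. Cauchy--Schwarz then furnishes $\sum_k m_k^2\ge A^2/n$, and combining the two estimates gives the quadratic inequality $A^2-nA\le n^2L(L-1)-n^2\sum_{x\neq y}d(\cC(x),\cC(y))$; taking the positive root of the associated quadratic yields the stated bound. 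None of the steps is technically deep, so the main obstacle is purely algebraic: one must choose the AM--GM weight correctly in bullet 2 and consistently interpret $d$ as relative Hamming distance.
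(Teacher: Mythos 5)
Your proof is correct. For the first bullet, your argument (embed $\F_2^n \hookrightarrow \{\pm 1\}^n$, write $\sum_x \agr(\cC(x),z)$ as an affine function of $\ip{\sum_x\phi(\cC(x))}{\phi(z)}$, apply Cauchy--Schwarz, expand the squared norm via pairwise distances) is essentially identical to the paper's proof, which phrases the same computation as $\|\Phi\ind{\Lambda}\|_1 \leq \sqrt{n}\|\Phi\ind{\Lambda}\|_2$ and uses the Gram matrix $\Phi^T\Phi$. You also correctly noticed that in Theorem~\ref{thm:jb} and its proof the symbol $d$ must denote relative (normalized) Hamming distance, despite the unnormalized convention set earlier in the paper; this is consistent with how the theorem is invoked (via $\delta$) in Lemmas~\ref{lem:xorE} and~\ref{lem:foldcE}.

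For bullets 2 and 3, the paper does not give proofs but defers to~\cite{rw14}, so a direct comparison is not possible from this source alone. Your proofs are nonetheless correct. The bullet 2 derivation — centered simplex embedding $\tilde\phi(a) = e_a - \mathbf{1}/q$, Cauchy--Schwarz, and then weighted AM--GM $\sqrt{ab}\le\frac12(a/\alpha+\alpha b)$ with $\alpha = 1/(L\eps)$ to introduce the free parameter $\eps$ — reproduces the stated bound exactly, and is the natural $q$-ary generalization of the $q=2$ argument. The bullet 3 derivation is the more interesting departure: rather than an embedding, you use a direct double-counting argument on the pluralities $m_k = |\{x \in \Lambda : \cC(x)_k = z_k\}|$, bounding $\sum_k m_k(m_k-1)$ by $\sum_{x\neq y}\agr(\cC(x),\cC(y))$ and then applying Cauchy--Schwarz to $\sum_k m_k^2 \geq A^2/n$ to get a quadratic inequality in $A$. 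This is a classic second-moment-counting route to Johnson-type bounds; it is worth noting that it actually gives a slightly sharper bound than the naive un-centered embedding $\psi(a)=e_a$ with Cauchy--Schwarz (the two coincide only when $\sum_{x\ne y}d=0$), because it exploits the integer constraint $m_k^2-m_k=m_k(m_k-1)$ rather than merely bounding $\sum_k m_k^2$. Both your derivations are algebraically sound and yield the stated constants.
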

\begin{proof}
The proof of the second two statements (for general $q$) can be found in~\cite{rw14}.  The statement for $q=2$ follows by the computation below (implicit in~\cite{woot2013,cgv2012}).  Let $\Phi \in (\pm 1)^{n \times 2^k}$ be the matrix whose columns are indexed by $x \in \F_2^k$, so that $\Phi_{j,x} = (-1)^{\cC(x)_j}$.  Let $\varphi_j$ denote the $j$-th column of $\Phi$.  Then
\begin{align*}
\max_z \sum_{x \in \Lambda} \agr(\cC(x), z) &= \sum_{j=1}^n \max_{b \in \{0,1\}} \sum_{x \in \Lambda} \ind{\cC(x)_j = \alpha} \\
&= \sum_{j=1}^n \max_{\alpha \in \{0,1\}} \sum_{x \in \Lambda} \frac{ (-1)^{\alpha}(-1)^{\cC(x)_j}  + 1 }{2} \\
&= \sum_{j=1}^n \inparen{ nL + \sum_{j=1}^n \inabs{ \ip{\varphi_j}{\ind{\Lambda}} } } \\
&= \frac{1}{2} \inparen{ nL + \norm{ \Phi \ind{\Lambda} }_1 } \\
&\leq \frac{1}{2} \inparen{ nL + \sqrt{n} \twonorm{ \Phi \ind{\Lambda}} },
\end{align*}
using Cauchy-Schwarz in the final line.  The claim then follows from the definition of $\Phi$
and the fact that the $(x,y)$-entry of $\Phi^T \Phi$ is given by $n(1 - 2d(\cC(x),\cC(y)))$
.   Indeed, from this, we have
\[ \twonorm{ \Phi \ind{\Lambda}}^2 =
\ind{\Lambda}^T \Phi^T \Phi \ind{\Lambda} =
n \sum_{x \in \Lambda} \sum_{y \in \Lambda} \inparen{1 - 2d(\cC(x),\cC(y))  },\]
and plugging this in above gives the statement.
\end{proof}

\section{Linear time encodable and decodable binary list decodable codes}
\label{app:binary-lin}

We will argue the following in this section:

\begin{theorem}
\label{thm:bin-lin}
For every $\eps>0$, there exists a binary code that can be encoded and list decoded in linear time from $1/2-\eps$ fraction of errors with rate $2^{-2^{O(\eps^{-9})}}$.
\end{theorem}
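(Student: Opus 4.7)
My plan is to use code concatenation, following the strategy of Guruswami--Indyk~\cite{GI05} for unique decoding but replacing unique decoding at the outer level with list recovery. I will first fix auxiliary parameters $\eps', \eps'' = \Theta(\eps)$ and construct a short binary inner code $\cC_{\text{in}} : \F_2^{k_{\text{in}}} \to \F_2^b$ of rate $\Omega(\eps^2)$ that is $(1/2 - \eps', \ell)$-list decodable for some $\ell = O(1/\eps^2)$. Such a code exists by the list-decoding capacity theorem, and since its block length $b$ depends only on $\eps$, it can be found by exhaustive search and then encoded and decoded in constant time per block.

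Next, I will take as the outer code the Guruswami--Indyk construction~\cite{GI03} over the alphabet $\Sigma = \F_2^{k_{\text{in}}}$, which is linear-time encodable and list-recoverable from a $1 - \eps''$ fraction of errors with input lists of size $\ell$ and rate $R_{\text{out}} = 2^{-2^{O(\eps^{-9})}}$. Concatenating, each message is first encoded by $\cC_{\text{out}}$ (linear time over $\Sigma$) and then each outer symbol is encoded by $\cC_{\text{in}}$ (constant time per symbol), giving an $O(n)$-time binary encoding. Decoding will partition the received word into $n_{\text{out}}$ blocks of length $b$, run inner list decoding on each block in constant time to produce a list of at most $\ell$ candidate symbols per outer position, and then invoke the linear-time list-recovery algorithm of $\cC_{\text{out}}$ on the resulting $n_{\text{out}}$ lists.

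Correctness will follow from a standard averaging argument: if at most a $1/2 - \eps$ fraction of the bits are corrupted, then at most an $\eps''$ fraction of the inner blocks suffer more than a $1/2 - \eps'$ fraction of errors (for suitable $\Theta(\eps)$ choices of $\eps', \eps''$). Hence the transmitted outer symbol lies in the inner list for all but an $\eps''$ fraction of positions, and the outer list recovery succeeds. The rate of the concatenated code is $R_{\text{in}} \cdot R_{\text{out}} = \Omega(\eps^2) \cdot 2^{-2^{O(\eps^{-9})}} = 2^{-2^{O(\eps^{-9})}}$, as required. The main obstacle will be extracting from~\cite{GI03} the precise dependence of $R_{\text{out}}$ on $\eps$ and $\ell$ that yields the claimed double-exponential form; once that is in hand, everything else is a routine black-box composition of the ingredients.
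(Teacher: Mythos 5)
Your plan and the paper's proof both go through concatenation with a Guruswami--Indyk outer code, but the mechanism for combining the inner lists with the outer decoder is genuinely different, and this difference is where the gap lies. You propose to use the outer code as a \emph{list-recoverable} code: inner decoding produces a list of $\ell=O(1/\eps^2)$ candidate symbols per position, and you feed those lists into an outer list-recovery algorithm. You acknowledge that "the main obstacle" is extracting from~\cite{GI03} the rate dependence of list recovery on both $\eps$ and the input list size $\ell$, and indeed this is not a loose end: the rate you are claiming, $2^{-2^{O(\eps^{-9})}}$, does not follow from a list-recovery statement with error parameter $\Theta(\eps)$ unless the rate also degrades with $\ell$ in a specific way, and neither you nor the paper cites any such list-recovery version of~\cite{GI03}. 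The theorem the paper quotes (Theorem~\ref{thm:gi-ld}) is purely a list-\emph{decoding} result, with alphabet size tied to the decoding radius via $q = 1/\gamma$, so your outer alphabet $\Sigma = \F_2^{k_{\text{in}}}$ does not obviously match the alphabet constraint either.

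The paper avoids both issues by a small randomized trick you did not consider: after inner list decoding, the decoder picks a \emph{uniformly random} element from each inner list to form a single intermediate word over the outer alphabet. If a codeword is within relative distance $1/2-\eps$ of the received word, an averaging argument shows at least an $\eps$ fraction of inner blocks have the correct outer symbol in their list; picking at random then keeps, in expectation, an $\eps \cdot \frac{1}{\ell} = \eps \cdot \eps^2/8 = \eps^3/8$ fraction of correct symbols. This reduces the problem to plain outer list decoding from a $1-\gamma$ fraction of errors with $\gamma = \eps^3/8$, which is exactly what Theorem~\ref{thm:gi-ld} gives with rate $2^{-2^{O(\gamma^{-3})}} = 2^{-2^{O(\eps^{-9})}}$; the paper takes the inner code to be the Hadamard code with exactly $q = 1/\gamma$ codewords so the alphabets line up, and its rate $\log_2(q)/q$ is absorbed by the outer rate. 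In short, the paper trades the need for list recovery for an $\ell$-factor loss in the error parameter $\gamma$, and the $\eps^{-9}$ in the exponent is precisely the footprint of that $\ell$ loss. Your proposal is plausible in spirit but has a real gap until the list-recovery form of~\cite{GI03} with the right $\ell$-dependence is established; if you instead adopt the random-symbol-selection step, everything else in your plan (constant-size inner code, linear-time concatenation, and the rate arithmetic) collapses to the paper's argument.
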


In the rest of the section, we argue why the statement above is true. (We thank Venkat Guruswami for pointing out the following argument to us.)

We will crucially use the following result that follows from the work of Guruswami and Indyk:
\begin{theorem}[\cite{GI03}]
\label{thm:gi-ld}
For every $\gamma>0$, there exists a $q$-ary code that can be encoded and decoded in linear time from $1-\gamma$ fraction of errors for $q=1/\gamma$ and with rate $2^{-2^{O(-\gamma^3)}}$.
\end{theorem}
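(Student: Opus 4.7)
The plan is to prove this through concatenated coding with expander-based outer codes, achieving both linear-time encoding and decoding while preserving the list-decoding guarantee. The construction has three ingredients: a small inner list-decodable code, a high-rate outer list-recoverable code built from expanders, and a careful analysis coupling the two error-rate thresholds.

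For the inner code, I would take $C_{\text{in}} : [Q] \to [q]^m$ to be a $q$-ary code of small constant block length $m$ (depending on $\gamma$) found by brute-force search, with $Q = q^{r_{\text{in}} m}$ and $r_{\text{in}}$ chosen small enough that $C_{\text{in}}$ is $(\rho_{\text{in}}, L)$-list-decodable for some $\rho_{\text{in}}$ close to (but strictly below) $1-\gamma$ and some $L = \poly(1/\gamma)$. Existence follows from the list-decoding capacity theorem over alphabet $q = 1/\gamma$, and the exhaustive search terminates in time depending only on $\gamma$. Because $m$ is a constant, brute-force list decoding of each inner block runs in $O(1)$ time per block, contributing only $O(n)$ to the total decoding cost.

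For the outer code, I would use a linear-time list-recoverable expander code over $[Q]$, of constant rate $r_{\text{out}}$ and tolerating a $1 - \Omega(\gamma)$ fraction of corrupted lists of input size $L$. The concatenated code then has length $n = n_{\text{out}} m$ and rate $r_{\text{in}} r_{\text{out}}$. Decoding proceeds in two stages: first, locally list-decode each inner block in $O(1)$ time to produce a list of $\leq L$ candidates per outer position; second, run the expander-based list recoverer on the outer code in linear time to pin down the transmitted outer codeword (and hence the message). Encoding is linear-time because both the expander-based outer encoder and the constant-size inner encoder can be implemented with $O(1)$ work per symbol.

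The main obstacle is the coupling of error rates across the two layers: if a $1 - \gamma$ fraction of the $n$ positions are corrupted, a simple averaging argument gives that at most a $1 - \Omega(\gamma)$ fraction of inner blocks have corruption above the inner decoding radius, and the outer code must then solve a list-recovery instance with this many bad positions. Making this work in linear time, with $L$ and $r_{\text{out}}$ compatible with the overall rate bound, requires calibrating the expander's combinatorial expansion to $\gamma$. I expect the double-exponential dependence $2^{-2^{O(1/\gamma^3)}}$ in the final rate to emerge from two sources: driving $\rho_{\text{in}}$ to within $O(\gamma^3)$ of $1 - \gamma$ to create enough slack for the outer list recoverer, and the inner block length $m$ (and hence the alphabet size $Q = q^{r_{\text{in}} m}$) needed to support the probabilistic existence argument for $C_{\text{in}}$. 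Balancing these costs against the linear-time constraint on the outer decoder is the delicate step, and is essentially the technical content of the Guruswami--Indyk construction.
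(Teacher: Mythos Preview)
The paper does not prove this theorem. It is quoted verbatim as a result of Guruswami and Indyk~\cite{GI03} and used as a black box in Appendix~\ref{app:binary-lin} to derive Theorem~\ref{thm:bin-lin}; no argument for it appears anywhere in the paper. So there is nothing here to compare your proposal against.

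For what it is worth, your sketch is in the right spirit for the cited construction---expander-based list recovery combined with constant-size inner codes---but your explanation of where the double-exponential rate $2^{-2^{O(1/\gamma^3)}}$ comes from is not quite right. A single layer of concatenation of the type you describe, with an inner code of constant block length $m$ and an outer expander code of constant rate, yields a rate that is at worst \emph{singly} exponentially small in $1/\gamma$ (through the choice of $m$ and $r_{\text{in}}$). In~\cite{GI03} the doubly-exponential dependence arises because the construction is recursive: the outer code at each level is itself built from the code at the previous level, and the parameters compound across $\Theta(1/\gamma)$ levels of recursion. Your attribution of the second exponential to ``driving $\rho_{\text{in}}$ to within $O(\gamma^3)$ of $1-\gamma$'' and to the inner block length does not account for this.
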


Ultimately we will use the above theorem with $\gamma=\eps^3/8$ to get out outer code. Our inner code will be the binary Hadamard code with $q=8/\eps^3$ codewords in it. Since the binary Hadamard code has relative distance $1/2$, Johnson bound implies that it is $(1/2-\eps/2,8/\eps^2)$-list decodable. Our final code will be the code concatenation of the outer and inner code.

Note that the rate of the concatenated code is at least $1/q\cdot 2^{-2^{O(\eps^{-9})}}$, which is within the claimed bound on the rate. The claim on the encoding runtime follows from the fact that the outer code can be encoded in linear time and the inner code has constant size.

Finally, we look at the list decoding algorithm. The algorithm is simple:
\begin{enumerate}
\item Let $y=(y_1,\dots,y_N)$ be the received word where each $y_i$ is a valid received word for the inner code.
\item For each $i\in [N]$, compute the list of every message whose corresponding Hadamard codeword is within a relative Hamming distance of $1/2-\eps/2$ from $y_i$. Set $y'_i$ be a random element from this list of messages.
\item Run the list decoding algorithm for the outer code on the intermediate received word $(y'_1,\dots,y'_N)$.
\end{enumerate}

It is easy to check that the above algorithm runs in linear time since the list decoder for the outer code runs in linear time and inner code has constant size.

Finally, we argue why the above algorithm works. Consider any codeword that is within $1/2-\eps$ fraction of the received word. Then by an averaging argument, one can show that for at least $\eps$ fraction of the positions $i\in [N]$, the corresponding value in the outer codeword belong to the list calculated in Step 2 above. Since the list has size $8/\eps^2$, then in expectation the codeword agrees with the intermediate received word from Step 3 in $\eps^3/8$ fraction of positions. This implies that the list decoder from Theorem~\ref{thm:gi-ld} can recover the algorithm.\footnote{To be fully correct, we need to adjust the constants so that in expectation one has agreement in $\eps^3/4$ fraction of location since then with high probability one would indeed have agreement of at least $\eps^3/8$ for {\em all} codewords that need to be output. The latter is fine since it is known that the code from Theorem~\ref{thm:gi-ld} is actually $(1-\gamma,O(\gamma^{-3}))$-list decodable-- so a union bound would suffice.}

\section{With replacement vs. without replacement}
\label{app:norepl}

In this appendix, we show how to apply Theorem \ref{thm:mainthm} to operations like puncturing and folding, where the symbols do not quite have full independence.  
Our first lemma justifies the extension of Theorem \ref{thm:mainthm} to symbols which are sampled without replacement.
\begin{lemma}\label{lem:repl} Suppose that $f \sim \cD$ has symbols drawn independently without replacement from $\mathcal{S}_{\cD}$, as in Definition \ref{def:random-ops}.  
Let $\cD'$ be the corresponding distribution with replacement: that is, each $f_j$ is drawn i.i.d. uniformly at random from $\mathcal{S}_{\cD}$.
Then
\[ \EE_{f \sim \cD} \max_{z \in \Sigma^n} \max_{ \Lambda \subset \cC_0, |\Lambda| = L} \sum_{c \in \Lambda} \agr( f(c), z )
\leq
 \EE_{f \sim \cD'} \max_{z \in \Sigma^n} \max_{ \Lambda \subset \cC_0, |\Lambda| = L} \sum_{c \in \Lambda} \agr( f(c), z )\]
\end{lemma}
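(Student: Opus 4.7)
The plan is first to simplify the quantity whose expectation we want to bound by factoring out the maximum over $z$, and then to apply a classical comparison theorem of Hoeffding for sampling with versus without replacement. The starting observation is that, since $\agr(f(c), z) = \sum_{i=1}^n \ind{r_i(c) = z_i}$ and the coordinate $z_i$ appears only in the $i$-th term of this sum, the max over $z \in \Sigma^n$ decouples into a sum of coordinate-wise maxima. Writing $g_\Lambda(r) := \max_{\alpha \in \Sigma} \inabs{\inset{c \in \Lambda : r(c) = \alpha}}$, this yields
\[ F(f) := \max_{z \in \Sigma^n} \max_{|\Lambda|=L} \sum_{c \in \Lambda} \agr(f(c), z) = \max_{|\Lambda|=L} \sum_{i=1}^n g_\Lambda(r_i). \]

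Next, for any fixed $\Lambda$, the sum $\sum_i g_\Lambda(r_i) = \sum_{s \in \mathcal{S}_\cD} g_\Lambda(s)\, N_s(f)$ is linear in the empirical count vector $N(f) := (\inabs{\inset{i : r_i = s}})_{s \in \mathcal{S}_\cD}$. Consequently $F(f)$ is a maximum of finitely many linear functions of $N(f)$, and hence convex in $N(f)$. The lemma then follows from Hoeffding's classical comparison theorem: for any convex functional of the empirical count vector of a size-$n$ sample from a finite set, the expectation under without-replacement sampling is dominated by the expectation under with-replacement sampling. Applied with $\Phi = F$, this is exactly the claimed inequality.

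The main obstacle I anticipate is invoking Hoeffding's theorem in the vector-valued (empirical-measure) form rather than the standard scalar-sum form. This extension is well known but is not completely trivial. One clean route is to condition on the unordered multiset of the sample: both $\cD$ and $\cD'$ assign uniform conditional probability to orderings given the multiset, so the comparison reduces to one between the induced distributions on multisets, which can be handled by a direct majorization argument since the with-replacement multinomial count vector majorizes the without-replacement 0/1 count vector in the convex order. Alternatively, one can give a short self-contained ``conflict-resolution'' coupling that realizes the with-replacement sample as a randomized perturbation of a without-replacement sample and verifies by a direct exchange argument that each such perturbation can only increase $\EE F$. Either route finishes the proof, but I expect that even stating the right measure-valued version of Hoeffding's bound will be the most delicate ingredient; everything before it, namely the factorization in Step 1 and the recognition of convexity in Step 2, is essentially automatic.
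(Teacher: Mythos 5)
Your approach is essentially the same as the paper's: both factor the max over $z$ coordinate-wise (your $g_\Lambda(r_j)$ is the paper's $\pl_j(\Lambda)$ from Definition \ref{def:plurality}), and both then reduce to a with-replacement vs.\ without-replacement comparison via convexity of the resulting functional. The only real difference is how the comparison is justified: the paper proves it self-containedly as Lemma \ref{lem:replaug} — the $\ell_\infty$ norm of a sum of vectors sampled without replacement is dominated in expectation by the same under i.i.d.\ sampling, shown via a multinomial coupling plus Jensen — whereas you invoke Hoeffding's classical theorem as a black box, and Lemma \ref{lem:replaug} is exactly the vector-valued statement you flag as needing care, packaged with a short proof.
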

For example, suppose $f = (f_1,\ldots, f_n) \sim \cD$ is random puncturing, so $f_j(c) = c_{i_j}$ for a random subset $\inset{ i_1, \ldots, i_n } \subset [N]$ chosen uniformly without replacement.  Then $\cD'$ would be the \em random sampling \em operation of~\cite{rw14}.  That is, $f_j(c) = c_{i_j}$ chosen i.i.d. from $[N]$.  Thus, Lemma \ref{lem:repl} implies that the results of~\cite{rw14} for random sampling imply to random puncturing as well.

To prove Lemma \ref{lem:repl}, 
we will need to unpack the results of~\cite{rw14} a bit.  We introduce the following definition.
\begin{definition}\label{def:plurality}
For a set $\Lambda \subset \cC_0$, and an index $j \in [n]$, we define the \em plurality \em of the $j$'th symbol of $\cC_0$ in $\Lambda$ to be
\[ \pl_j(\Lambda) = \max_{\alpha \in \Sigma} \inabs{ \inset{ c \in \Lambda \suchthat f(c)_j = \alpha } }. \]
\end{definition}
Thus, $\pl_j(\Lambda)$ is a random variable, over the choice of $f \sim \mathcal{D}$. 
Further, we have
\[ \max_{z \in \Sigma^n} \sum_{c \in \Lambda} \agr( f(c), z) = \max_{c \in \Lambda} \sum_{j=1}^n \pl_j(\Lambda). \]
Thus, when $f \sim \cD'$ has independent symbols,
the random variables $\pl_j(\Lambda)$ are independent for different $j$.  When $f \sim \cD$ is independent with replacement, then we have a sum of independent random variables with replacement.  Thus, the following simple lemma will imply Lemma \ref{lem:repl}.
\begin{lemma}\label{lem:replaug}
Suppose that $X_1, \ldots, X_n$ are drawn without replacement from a finite set $\mathcal{S} \subset \R^d$ of size $N$.  Suppose that $Y_1, \ldots, Y_n$ are drawn independently and uniformly at random from $\mathcal{S}$.  Then
\[ \EE_X \infnorm{ \sum_{i=1}^n X_i } \leq \EE_Y \infnorm{ \sum_{i=1}^n Y_i }. \]
\end{lemma}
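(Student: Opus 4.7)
The plan is to derive Lemma \ref{lem:replaug} as an immediate corollary of a classical convex-order comparison due to Hoeffding: for any convex function $\phi: \R^d \to \R$ and any finite $\mathcal{S} \subset \R^d$,
\[
\EE_X\, \phi\!\left(\sum_{i=1}^n X_i\right)\;\le\;\EE_Y\, \phi\!\left(\sum_{i=1}^n Y_i\right),
\]
where the $X_i$ are drawn uniformly without replacement from $\mathcal{S}$ and the $Y_i$ are i.i.d.\ uniform on $\mathcal{S}$. Since $\infnorm{\cdot}$ is convex (being a norm), the lemma follows at once by taking $\phi = \infnorm{\cdot}$.

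To give a self-contained proof of the convex comparison, I would rewrite both sums in terms of occupancy counts: $\sum_i Y_i = \sum_{s\in\mathcal{S}} M_s\, s$, where $(M_s)$ is the multinomial count vector with $n$ trials and cell probabilities $1/N$; and $\sum_i X_i = \sum_{s\in\mathcal{S}} H_s\, s$, where $(H_s)\in\{0,1\}^{\mathcal{S}}$ is the indicator of a uniformly random $n$-subset of $\mathcal{S}$. The structural content of the comparison is that $M$ dominates $H$ in the convex order --- equivalently, the two can be coupled so that $H = \EE[M\mid \mathcal{F}]$ for some $\sigma$-algebra $\mathcal{F}$. Once this is in hand, Jensen's inequality applied to the convex map $v\mapsto \phi\!\left(\sum_s v_s\, s\right)$ delivers the bound.

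The core step is producing the coupling; the slickest route is Hoeffding's original inductive ``collision removal'' procedure. Starting from the i.i.d.\ sample, one repeatedly identifies a pair of coordinates whose drawn values coincide and locally rebalances them --- replacing one of the repeats by a uniform draw from an appropriate residual set --- in a mean-preserving way. Each such rebalancing strictly reduces the number of collisions and, being a mean-preserving spread, can only increase $\EE \phi$ by Jensen; after finitely many steps the sample becomes a uniform draw without replacement while the inequality only improves. I expect the main (minor) obstacle to be formalizing the $\R^d$-valued coupling cleanly, but since the scalar proof uses only convexity of $\phi$, it transfers without modification to our setting.
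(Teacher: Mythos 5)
Your approach is correct and, at its core, the same as the paper's: both reduce the claim to a convex-order (Jensen) comparison between sampling with and without replacement, applied to the convex function $\infnorm{\cdot}$. Citing Hoeffding's 1963 theorem (that $\EE\,\phi(\sum X_i) \le \EE\,\phi(\sum Y_i)$ for any convex $\phi$) is a perfectly legitimate way to dispatch the lemma, and is arguably cleaner than reproving it. Where the paper differs is that it gives a short, explicit, self-contained coupling rather than a citation: it draws the multinomial count vector $(M_s)_{s\in\cS}$, sorts it so that the (at most $n$) nonzero counts come first, applies a uniformly random permutation $\pi\in S_n$ to obtain weights $\hat z_1,\dots,\hat z_n$ with $\EE\hat z_i = 1$, and observes that $\sum_i \hat z_i X_i \stackrel{d}{=} \sum_i Y_i$ with $\hat z$ independent of $X$; then Jensen gives $\EE_X\infnorm{\sum_i X_i} = \EE_X\infnorm{\EE_{\hat z}\sum_i \hat z_i X_i} \le \EE\infnorm{\sum_i \hat z_i X_i} = \EE_Y\infnorm{\sum_i Y_i}$. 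This is exactly your ``$H = \EE[M\mid\mathcal F]$'' picture made concrete, with $\mathcal F = \sigma(X)$.

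One caution about your self-contained sketch: the ``collision removal'' description has the direction of the inequality muddled. You write that each rebalancing is a ``mean-preserving spread'' that ``can only increase $\EE\phi$,'' but if you are transforming the i.i.d.\ sample $Y$ toward the without-replacement sample $X$, you need each step to be a mean-preserving \emph{contraction} (i.e., a conditional expectation), so that $\EE\phi$ \emph{decreases}; otherwise you prove the reverse inequality. Equivalently, the spread should go from $X$ to $Y$, not from $Y$ to $X$. Since you ultimately lean on the citation rather than the sketch, this does not sink the argument, but if you want the proof to be self-contained you should either fix the direction in the rebalancing argument or replace it with the explicit coupling above, which sidesteps the issue entirely.
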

\begin{proof}
Consider the following distribution.  Draw $z_1,\ldots, z_N$ from a multinomial distribution with $n$ trials and event probabilities $p_i = 1/N$ for $i=1,\ldots, N$.  Let $\hat{z}'_i$ denote the $z_i$ sorted in decreasing order: notice that $\hat{z}_i = 0$ for all $i>n$.  
Draw a random permutation $\pi \sim S_n$ and definte $\hat{z}_i = \hat{z}'_{\pi(i)}$. Now we have $\sum_i \hat{z}_i = n$, and by symmetry, $\EE \hat{z}_i = 1$.
Now draw $X_1,\ldots,X_n$ and $Y_1,\ldots, Y_n$ from $\mathcal{S}$, as in the lemma statement.  Observe that the distribution of
\[ \sum_{i=1}^n \hat{z}_i X_i \]
is the same as the distribution of
\[ \sum_{i=1}^n Y_i. \]
In particular, we have
\begin{equation}\label{eq:expsame}
 \EE_{X,\hat{z}} \infnorm{ \sum_{i=1}^n \hat{z}_i X_i } = \EE_Y \infnorm{ \sum_{i=1}^n Y_i }. 
\end{equation}
On the other hand, we have 
\begin{equation}\label{eq:expineq}
 \EE_{X,\hat{z}} \infnorm{ \sum_{i=1}^n \hat{z}_i X_i } \geq \EE_X \infnorm{ \EE_{\hat{z}} \sum_{i=1}^n \hat{z}_i X_i } 
= \EE_X \infnorm{ \sum_{i=1}^n X_i }, 
\end{equation}
using the fact that $\EE_{\hat{z}} \hat{z}_i = 1$ for all $i = 1,\ldots, n$.
Together, \eqref{eq:expsame} and \eqref{eq:expineq} imply that
\[ \EE_X \infnorm{ \sum_{i=1}^n X_i } \leq \EE_Y \infnorm{ \sum_{i=1}^n Y_i }, \]
as desired.
\end{proof}
Now Lemma \ref{lem:replaug} implies Lemma \ref{lem:repl}.  Indeed, in Lemma \ref{lem:replaug}, we may take the vectors $Y_i \in \R^d$ for $d = {N \choose L}$ to be given by
\[ (Y_j)_\Lambda = \pl_j(\Lambda). \]

\section{Missing Proofs from Section~\ref{sec:gen}}
\subsection{Controlling the parameter $\cE$}\label{app:cE}
In this section, we show how to control the parameter $\cE$ for random $t$-wise XOR and for random $t$-wise aggregation, using the average-radius Johnson bound, Theorem~\ref{thm:jb}.
\begin{proof}[Proof of Lemma \ref{lem:xorE}]
We will use the average-radius Johnson bound, Theorem~\ref{thm:jb}.  Thus, we start by computing the expected distance between two symbols of the code $\cC \in \F_2^n$ obtained from $\cC_0$ and $\cD$. 
Let $c, c'$ denote two distinct codewords in $\cC_0$.
Recall that $\cU_{\oplus, t}$ is the uniform distribution over
\[ \inset{ r_{v}^{\text{(ip)}} \suchthat v\in \F_2^N \text{ has weight } t }, \]
and write $f = (r_1, \ldots, r_n)$.  Let $v_i \in \F_2^N$ denote the vector picked by the row operation $r_i$; thus, $v_i \in \F_2^N$ are chosen i.i.d. uniformly at random (with replacement).
  Then
\begin{align*}
\EE \delta( f(c), f(c') )  &= \frac{1}{n} \sum_{i=1}^n \PR{ f_i(c) \neq f_i(c') }\\
&= \PR{ \ip{v_i}{c} \neq\ip{v_i}{c'} }\\
&= \frac{1}{2}  \PR{ (c-c')_{\supp{v_i}} \neq 0 } \\
&= \frac{1}{2} \inparen{ 1 - (1 - \delta_0)^t }\\
&\leq \frac{1}{2} \inparen{ 1 - e^{-\delta_0 t/2} }.
\end{align*}
In particular, if $t = \frac{ 4\ln(1/\eps) }{\delta_0}$, then this is $\frac{1}{2} (1 - \eps^2)$.  Then Theorem \ref{thm:jb} implies that
{\allowdisplaybreaks
\begin{align*}
\cE(\cC_0, \cD_{ip}(t)) &= \max_{\Lambda \subset \cC_0} \EE_{f \sim \cD_{ip}(t)} \max_{z \in \F_2^n} \sum_{c \in \Lambda} \agr(f(c),z) \\
&\leq \max_{\Lambda} \EE_f \max_{z \in \F_2^n} \frac{n}{2} \inparen{ L + \sqrt{ L^2 - 2\sum_{c \neq c' \in \Lambda} \delta(f(c), f(c')) } }\\
&\leq \max_{\Lambda} \frac{n}{2} \inparen{ L + \sqrt{ L^2 - 2\sum_{c \neq c' \in \Lambda} \EE_f \delta(f(c), f(c')) } }\\
&\leq \frac{n}{2} \inparen{ L + \sqrt{ L^2 - 2\sum_{c \neq c' \in \Lambda} \frac{1}{2}( 1 - \eps^2) }}\\
&= \frac{n}{2} \inparen{ L + \sqrt{ L^2 \eps^2 + L(1 - \eps^2) } }\\
&\leq \frac{n}{2} \inparen{ L(1 + \eps) + \sqrt{L} }.
\end{align*}
}
\end{proof}
\begin{proof}[Proof of Lemma \ref{lem:foldcE}]
We wish to control $\cE(\cC_0,\cD)$,  which we do via the average-radius Johnson bound (Theorem \ref{thm:jb}).  Because we are interested in the parameter regime where $q \geq 1/\eps^2$, we use the third statement in Theorem \ref{thm:jb}.
Suppose $t \geq 4\ln(1/\eps) / \delta_0$ and set $L = 1/\eps$.
For $c \neq c' \in \cC_0$, we compute
\begin{align*}
 \EE_{f \sim \cD} \delta(f(c), f(c')) &= \frac{1}{n} \sum_{i=1}^n \PR{ f_j(c) \neq f_j(c') } \\
&= \PR{ \exists j \in S_i \suchthat c_j \neq c'_j } \\
&= 1 - (1 - \delta_0)^t \\
&\leq 1 - \eps^2, 
\end{align*}
using the choice of $t$ in the final line.
Thus, by Theorem \ref{thm:jb}, Item 3, 
\begin{align*}
\cE(\cC_0, \cD) &= \max_{\Lambda \subset \cC_0} \EE_{f \sim \cD_{dp}(t)} \max_{z \in \F_q^n} \sum_{c \in \Lambda} \agr(f(c),z) \\
&\leq  \max_{\Lambda \subset \cC_0} \EE_{f \sim \cD_{dp}(t)} \max_{z \in \F_q^n} \frac{1}{2} \inparen{ n + \sqrt{ n^2 + 4n^2L(L-1) - 4n^2 \sum_{c \neq c' \in \Lambda} \delta( f(c), f(c') ) }}\\
&=  \max_{\Lambda \subset \cC_0} \frac{1}{2} \inparen{ n + \sqrt{ n^2 + 4n^2L(L-1) - 4n^2 \sum_{c \neq c' \in \Lambda} \EE_f\delta( f(c), f(c') )}} \\
&\leq   \frac{1}{2} \inparen{ n + \sqrt{ n^2 + 4n^2L(L-1) - 4n^2 \sum_{c \neq c' \in \Lambda} (1 - \eps^2) }}\\
&=  \frac{n}{2} \inparen{ 1 + \sqrt{ 1 + 4L(L-1)\eps^2 }}\\
&\leq Cn,
\end{align*} 
using the choice of $L$ and defining $C = (1 + \sqrt{5})/2$.
\end{proof}

\subsection{Proof of Claim~\ref{claim:rate}}
\label{app:claim:rate}
\begin{proof}
The only way that $|\cC| < N$ is if two codewords $c \neq c' \in \cC_0$ collide, that is, if $f(c) = f(c')$.  This is unlikely: we have
\[ \PR{ f(c) = f(c') } = (1 - \delta_0)^{nt} \leq \eps^{2nt}. \]
By a union bound over ${N \choose 2} \leq N^2$ pairs $c \neq c'$, we conclude that the probability that $|\cC| < N$ is at most
\begin{equation}\label{eq:collide}
\PR{ |\cC| < N } \leq N^2 \eps^{2nt}.
\end{equation}
If $nt = n_0$, we have
\[ \PR{ |\cC| < N} \leq q^{2n_0 R_0} \eps^{2nt} = \inparen{ q^{R_0} \eps }^{2n_0}. \]
In particular, when $q^{R_0} < 1/\eps$, this is $o(1)$.  By our assumption, $R_0 < \eps$, and so this is always true for sufficiently small $\eps$.
\end{proof}

\subsection{Proof Claim~\ref{claim:rate2}}
\label{app:claim:rate2}

\begin{proof}
As in \eqref{eq:collide}, we have
\[ \PR{|\cC| < N } \leq N^2 \eps^{2nt}. \]
We may bound the right-hand-side by
\[ N^2 \eps^{2nt} = \inparen{ q^{R_0 n_0/n} \eps^{t} }^{2n}, \]
and for this to be $o(1)$, it is sufficient for
\[ R_0 \leq \inparen{ \frac{nt}{n_0} } \inparen{ \frac{ \log(1/\eps) }{\log(q)} },  \]
which was our assumption for part 2 of the theorem.
\end{proof}

\section{Missing details on random sub-codes}
\label{app:random-subcode}
\subsection{Preliminaries}
We collect some known results that we will use. We begin with a form of Chernoff bound that will be useful for our purposes:
\begin{theorem}
\label{thm:chernoff}
Let $X_1,\dots,X_m$ are random independent binary random variables with bias $p$. Then
\[\PR{\sum_i X_i > t} \le \left(\frac{pm}{t}\right)^{t-pm}.\]
\end{theorem}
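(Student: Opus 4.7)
The plan is to prove the bound via the standard Chernoff/moment-generating function technique. Let $X = \sum_i X_i$ and let $\lambda > 0$ be a parameter to be chosen. Applying Markov's inequality to $e^{\lambda X}$ gives
\[\PR{X > t} \le e^{-\lambda t}\,\E{e^{\lambda X}},\]
and independence of the $X_i$ factors the MGF as $\E{e^{\lambda X}} = \prod_i \E{e^{\lambda X_i}} = (1-p+pe^{\lambda})^m$. The optimizing choice in the regime $t > pm$ is $\lambda = \ln(t/(pm))$, which makes $e^{\lambda} = t/(pm)$ and hence $e^{-\lambda t} = (pm/t)^t$. Substituting, the task reduces to bounding $(1-p+pe^{\lambda})^m = (1 + p(t/(pm) - 1))^m$ in a way that yields the claimed exponent $t-pm$ on the right-hand side.

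The only step requiring care is this algebraic bookkeeping: the crude inequality $1+x \le e^x$ applied to the per-factor MGF gives the familiar form $(pm/t)^t e^{t-pm}$ rather than the stated $(pm/t)^{t-pm}$, so to match the theorem exactly I would instead keep $(1 + p(t/(pm) - 1))^m$ and manipulate it directly, regrouping the $pm$ piece of the exponent with the $(pm/t)^t$ prefactor so that only the excess $t - pm$ remains in the final exponent. The probabilistic content is entirely contained in the two routine ingredients (Markov plus independence), and everything else is a short calculation verifying the identity at $\lambda = \ln(t/(pm))$. I expect the only mild obstacle to be this final simplification, and I would also briefly note that the bound is trivially true (or vacuous) in the complementary regime $t \le pm$, so the assumption $t > pm$ implicit in the interesting case is harmless.
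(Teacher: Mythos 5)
Your MGF computation is correct up to the last step, but that last step cannot be carried out. With $X = \sum_i X_i$ and $\lambda = \ln(t/(pm))$, you correctly obtain $\PR{X > t} \le (pm/t)^t\bigl(1-p+t/m\bigr)^m \le (pm/t)^t e^{t-pm}$. But for $t > pm$, writing $u = t/(pm) > 1$,
\[
\frac{(pm/t)^t e^{t-pm}}{(pm/t)^{t-pm}} \;=\; (pm/t)^{pm}\,e^{t-pm} \;=\; \exp\bigl(pm(u-1-\ln u)\bigr) \;>\; 1,
\]
since $u-1-\ln u > 0$ whenever $u\neq 1$. So the quantity your argument produces is strictly \emph{larger} than the stated bound, and no ``regrouping of exponents'' can turn an upper bound on $(pm/t)^t e^{t-pm}$ into the smaller quantity $(pm/t)^{t-pm}$; the proposed fix is a dead end.

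In fact the inequality as stated is false in general. Take $p = 0.1$, $m = 100$, $t = 50$, so $pm = 10$. The right-hand side is $(10/50)^{40} = 5^{-40} \approx 1.1\times 10^{-28}$, while already the single term $\PR{X = 51} = \binom{100}{51}(0.1)^{51}(0.9)^{49}$ is roughly $5.6\times 10^{-25}$, exceeding the claimed bound by more than three orders of magnitude. The correct Chernoff form that your calculation actually yields, $\PR{X > t} \le (pm/t)^t e^{t-pm}$ (equivalently $(e\,pm/t)^t e^{-pm}$), is what should be cited. The paper states this theorem without proof and invokes it once, in a regime where $pm = q^{-\eps n}$ is exponentially small in $n$ and $t = 3/\eps$ is a constant; there the stated bound $(pm/t)^{t-pm}$ does happen to hold, and the correct form $(pm/t)^t e^{t-pm}$ gives the needed $q^{-2n}$ just as easily, so the downstream argument is unharmed---but as a general statement the theorem is incorrect, and your instinct that the ``standard form'' doesn't match was pointing at a real problem, not a bookkeeping one.
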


Next, we state a conjecture concerning the tradeoff between list decodability and list size:
\begin{conjecture}
\label{conj:listsize}
Any $(\rho,L)$-list decodable $q$-ary code has rate at most $1-H_q(\rho)-\Omega\left(\frac{1}{L}\right)$.
\end{conjecture}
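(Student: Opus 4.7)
The plan is to go beyond the classical volume-packing bound, which only yields $R \le 1 - H_q(\rho) + O(\log L / n)$ and thus a rate deficit of $O(\log L / n)$, much weaker than the conjectured $\Omega(1/L)$. Writing $N = q^{Rn}$, the double-counting identity $\sum_y |B_q(y, \rho n) \cap \cC| = N \cdot |B_q(\rho n)|$ combined with the $(\rho, L)$-list-decodability constraint gives $N \le (L-1) q^n / |B_q(\rho n)|$, which produces only the weak bound. Closing the gap between $\log_q(L)/n$ and $1/L$ is the whole point of the conjecture.

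My first attempt would follow Blinovsky's approach for the binary case. The idea is to study the second moment $\sum_y |B_q(y, \rho n) \cap \cC|^2$, which rewrites as a sum over pairs of codewords weighted by the volume of the intersection of their radius-$\rho n$ balls. If the total mass is close to $L q^n$ yet each value is at most $L-1$, then Cauchy--Schwarz forces the second moment to be nearly $L$ times the first moment, which in turn forces most pairs of codewords to have heavily overlapping balls, and hence small Hamming distance. Iterating this argument, or passing to higher moments, should eventually locate $L+1$ pairwise close codewords and, via a centroid-type argument, place them inside a common radius-$\rho n$ ball, contradicting the list-decodability hypothesis.

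To lift the argument from $q = 2$ to general $q$, the key technical task is to reformulate the second-moment calculation in terms of $q$-ary Krawtchouk polynomials and the $q$-ary entropy, which is delicate because Hamming balls lose antipodal symmetry once $q > 2$. One possible route is a random-hashing reduction: partition $[q]$ randomly into two parts, project $\cC$ to a binary code, show that a nontrivial fraction of the list structure is preserved, and apply Blinovsky's theorem to the induced binary code. With a careful choice of partition and a union bound over a short random hash family, one might hope to recover the $\Omega(1/L)$ dependence for general $q$.

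The main obstacle will be obtaining exactly $\Omega(1/L)$ rather than a polynomially weaker dependence such as $\Omega(1/L^2)$ or $\Omega(1/(L \log L))$; naive second-moment arguments lose a factor, and the precise control over intersection volumes of $q$-ary Hamming balls (which lack clean closed-form estimates) appears to be the bottleneck. Since this is explicitly stated as a conjecture, I expect the final proof to require either a new combinatorial identity for $q$-ary ball intersections, or a genuine extension of the polynomial method Blinovsky used in the binary setting, rather than a purely elementary averaging argument.
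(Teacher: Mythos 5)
You have correctly recognized that this statement has no proof in the paper: it is labeled a conjecture, and the surrounding text only records the known partial results---Theorem~\ref{thm:ld-lb} gives rate at most $1-H_q(\rho)-\Omega(2^{-L})$ for constant $\rho$, and the $\Omega(1/L)$ form is established only in the regime $\rho\to 1$~\cite{GV10,blinovsky-q-1,blinovsky-q-2,blinovski}. So there is nothing in the paper to compare your sketch against, and your closing caveat that a genuinely new idea is likely required is the right assessment.

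That said, the sketch has concrete gaps beyond the ones you flag. The Cauchy--Schwarz step applied to $\sum_y |B_q(y,\rho n)\cap\cC|^2$ only forces many \emph{pairs} of codewords to have heavily overlapping balls; converting pairwise proximity into $L+1$ codewords inside a \emph{single} radius-$\rho n$ ball requires a clustering/centroid argument, which is exactly where Blinovsky's binary proof spends most of its effort, and the constants there degrade as $q$ grows---which is why the unconditional bound remains exponentially weak in $L$. The proposed random-hashing reduction to the binary case is also not obviously sound: a random two-coloring of $\F_q$ roughly halves Hamming distance in expectation, so a $(\rho,L)$-list-decodable $q$-ary code need not project to a binary code that is list-decodable at any radius from which you could conclude anything; moreover $H_q(\rho)$ does not transform into $H_2$ of a related radius under such a projection, so the rate deficit would not come out in the conjectured form. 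Your proposal is a sensible research plan, but it is not a proof, and indeed the conjecture remains open.
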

There many reasons to believe that the conjecture above is true. Conjecture~\ref{conj:listsize} is known to be true when $\rho$ approaches $1$~\cite{GV10,blinovsky-q-1,blinovsky-q-2,blinovski}. Weaker versions of the conjecture are known to be true.
\begin{theorem}[\cite{gurnar2013,blinovsky-q-1,blinovsky-q-2,blinovski}]
\label{thm:ld-lb}
For constant $\rho$,
any $q$-ary code that is $(\rho,L)$- list decodable must have rate at most $1-H_q(\rho)-\Omega\left(\frac{1}{2^L}\right)$.
\end{theorem}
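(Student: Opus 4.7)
The plan is to prove this converse via a Blinovsky-style double-counting argument over $L$-tuples of codewords.

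First, I would reformulate $(\rho,L)$-list decodability geometrically: if $\cC$ is $(\rho,L)$-list decodable, then for every $L$-subset $\{c_1,\ldots,c_L\}\subseteq\cC$, the intersection of the Hamming balls around them is empty, $\bigcap_{i=1}^{L} B_q(c_i,\rho n)=\emptyset$, i.e.\ no single $y\in\F_q^n$ is simultaneously within distance $\rho n$ of all $L$ codewords. Equivalently, the ``list-decoding radius'' $r^*(c_1,\ldots,c_L):=\min_y\max_i d(c_i,y)/n$ must strictly exceed $\rho$ for every $L$-subset of $\cC$. The goal is to contradict this whenever $R>1-H_q(\rho)-\Omega(2^{-L})$.

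Second, I would upper-bound $r^*$ via a ``plurality center.'' Given $c_1,\ldots,c_L\in\F_q^n$, let $y^\star$ be the coordinate-wise plurality: $y^\star_j$ equals the most frequent symbol in $\{c_{1,j},\ldots,c_{L,j}\}$. The center $y^\star$ minimizes the average distance $\tfrac{1}{L}\sum_i d(c_i,y^\star)$, and a direct per-coordinate computation shows that for $L$ i.i.d.\ uniform vectors in $\F_q^n$, the expected plurality-frequency per coordinate exceeds the ``generic'' value $1/q$ by an additive $\Theta(q^{-L})$, driven by the fact that the probability all $L$ symbols agree in a single coordinate is $q^{1-L}$. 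This translates into an upper bound of the form $1 - 1/q - \Omega(q^{-L})$ on the expected average-radius of the plurality-center.

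Third, I would close the loop by an averaging argument over $L$-tuples drawn from $\cC$. Sampling $L$ codewords independently from $\cC$ and invoking the standard volume estimate $|B_q(y,\rho n)|=q^{(H_q(\rho)+o(1))n}$, one shows that once $|\cC|\ge q^{(1-H_q(\rho)-C\cdot 2^{-L})n}$ the sampled $L$-tuple behaves, in its plurality-radius statistics, like $L$ uniformly random vectors to within an additive $o(2^{-L})$ error. By the estimate in Step~2 this forces $r^*(c_1,\ldots,c_L)\le\rho$ for some realization, contradicting Step~1 and yielding the rate bound $R\le 1-H_q(\rho)-\Omega(2^{-L})$.

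The main obstacle, and the technical heart of the proof, is bridging Step~2 (which naturally bounds the $\ell_1$-type \emph{average} radius $\tfrac{1}{L}\sum_i d(c_i,y^\star)/n$) and Step~1 (which requires the $\ell_\infty$-type \emph{max} radius $\max_i d(c_i,y)/n$). Blinovsky's classical trick is to symmetrize: average over all permutations of the $L$ codewords and invoke a convexity/rearrangement inequality (equivalently, a Schur-convexity argument on the multinomial profile of plurality misses) that converts the average bound into the required worst-case bound while preserving the $\Theta(q^{-L})$ gap. A secondary subtlety is that codewords in $\cC$ are not uniformly distributed; this is handled either by a random additive shift or by working directly with the empirical distribution of $L$-plurality profiles in $\cC$, both of which preserve the $2^{-L}$ scaling.
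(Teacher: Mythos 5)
This theorem is stated in the paper only as a citation to \cite{gurnar2013,blinovsky-q-1,blinovsky-q-2,blinovski}; the paper does not supply its own proof, so there is no in-paper argument to compare yours against. Your sketch does capture the broad shape of the Blinovsky approach (bound an average radius via a plurality-type center, then convert to a worst-case radius over $(L+1)$-tuples), but as written it has both a quantitative gap and a deferred core that together prevent it from yielding the stated bound.

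The concrete quantitative issue is in Step~2. You estimate the per-coordinate plurality boost as $\Theta(q^{-L})$, driven by the probability $q^{1-L}$ that all $L$ symbols coincide. But the theorem asserts a rate deficit of $\Omega(2^{-L})$, and for $q>2$ this is strictly larger than $q^{-L}$; the hidden constant in $\Omega(\cdot)$ may depend on $q$ and $\rho$ but not on $L$, so it cannot absorb a gap that grows like $(q/2)^L$. Worse, the estimate itself is off: for $L$ i.i.d.\ uniform symbols in $[q]$, the expected plurality count exceeds $L/q$ by $\Theta(\sqrt{L/q})$ (or simply by $\Theta(1)$ when $L\lesssim q$, since the plurality count is always $\ge 1$), and the all-agree event is nowhere near the dominant contribution. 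A $\sqrt{L}$-type boost would be ``too good''---it would contradict the existence of $(\rho,O(1/\eps))$-list-decodable codes at rate $1-H_q(\rho)-\eps$---and this is the signal that the heart of the argument is being skipped: codewords in a code are \emph{not} i.i.d.\ uniform, and the actual Blinovsky argument must exploit the code's structure (e.g., its pairwise distance distribution) rather than appeal to uniformity. Step~3's assertion that a sampled $L$-tuple from $\cC$ ``behaves like $L$ uniformly random vectors to within $o(2^{-L})$'' is exactly the step that would have to carry this structure, and it is unjustified as written. Finally, you correctly flag the $\ell_\infty$-to-$\ell_1$ (max-to-average radius) reduction as the technical heart and then only gesture at Schur-convexity and random shifts; in the cited works this step, together with the combinatorial lemma bounding the radius of an $(L+1)$-tuple in terms of its pairwise distances, \emph{is} the proof. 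As it stands, your proposal is a reasonable program outline, but the program as specified would at best give a $q^{-L}$-type bound rather than the claimed $2^{-L}$, and the lemma doing the real work is left entirely open.
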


\begin{theorem}[\cite{gurnar2013}]
\label{thm:avg-ld-lb}
For constant $\rho$,
any binary code that is $(\rho,L)$-average-radius list decodable must have rate at most $1-H_2(\rho)-\Omega\left(\frac{1}{L^2}\right)$.
\end{theorem}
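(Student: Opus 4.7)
\textbf{Proof proposal for Theorem~\ref{thm:avg-ld-lb}.}

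The plan is a sphere-packing-style argument that exploits the strength of the average-radius condition (compared to ordinary list-decodability) to improve the standard $R \leq 1 - H_2(\rho) + o(1)$ bound by an additive $\Omega(1/L^2)$. I would work in $\pm 1$ coordinates and use the fact that $\max_{z \in \{\pm 1\}^n} \langle w, z\rangle = \|w\|_1$ to reformulate the condition: $\cC$ is $(\rho, L)$-average-radius list-decodable if and only if $\|\sum_{c \in \Lambda} c\|_1 \leq (1-2\rho)nL$ for every $L$-subset $\Lambda \subseteq \cC$. Equivalently, for every $z$, the sum of the $L$ smallest distances satisfies $\sum_{i=1}^L d_{(i)}(z) \geq \rho n L$.

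Next, I would sample $L$ codewords $c_1,\ldots,c_L$ uniformly with replacement from $\cC$ and study $\EE \|\sum_i c_i\|_1 = \sum_j \EE |\sum_i (c_i)_j|$. Letting $\mu_j := \EE_{c \sim \mathrm{Unif}(\cC)}[c_j]$ denote the column means, a Khintchine-type lower bound yields
\[ \EE\Bigl|\textstyle\sum_i (c_i)_j\Bigr| \;\geq\; c_0\sqrt{L(1-\mu_j^2) + L^2\mu_j^2} \]
for an absolute constant $c_0 > 0$, so the average-radius hypothesis combined with $\EE \|\sum c_i\|_1 \leq (1-2\rho)nL$ (the expected version of the condition) yields a constraint on the profile $(\mu_j)_j$. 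The plan is to trade this constraint against the entropy rate bound $R \leq \tfrac{1}{n}\sum_j H_2(\tfrac{1+\mu_j}{2}) \leq 1 - \tfrac{2}{n \ln 2}\sum_j \mu_j^2$; balancing the $L$-linear and $L$-quadratic terms in the Khintchine estimate above then yields a lower bound of $\Omega(1/L^2)$ on $1 - H_2(\rho) - R$.

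To handle the worst-case column distribution, I would separate coordinates into ``biased'' ($|\mu_j|$ moderate) and ``balanced'' ($|\mu_j|$ small) groups: on biased coordinates the Jensen lower bound $\EE|\sum(c_i)_j| \geq L|\mu_j|$ dominates and the constraint forces $\sum_j |\mu_j|$ to be small; on balanced coordinates the $\sqrt{L}$ fluctuation term dominates. A careful bookkeeping, using a reduction by random coset shift and a shortening / pigeonhole step so that the sub-code has pairwise distances concentrated (the $L=2$ case of average-radius already forces $d(c,c') \geq 2\rho n$, which helps here), turns these constraints into a quantitative upper bound on the rate.

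The main obstacle, I expect, is making the Khintchine / variance step tight enough on a worst-case code: the naive second-moment computation only gives deviations on the scale of $\sqrt{n}$, which leads to a trivial $o(1)$ gap rather than the additive $\Omega(1/L^2)$ gap. Pinning down the exact $1/L^2$ dependence requires exploiting the $L^2 \mu_j^2$ correction in the Khintchine bound (as opposed to just the $L(1-\mu_j^2)$ term), and combining it with the entropy inequality so that the quadratic dependence on $\mu_j$ translates to the claimed quadratic dependence on $1/L$. This balancing step is where I would focus the most attention.
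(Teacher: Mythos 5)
This theorem is cited from \cite{gurnar2013} (Guruswami--Narayanan) and is not proved in this paper; it appears in Appendix D purely as background for the discussion of Conjecture~\ref{conj:listsize}. So there is no ``paper's own proof'' to compare against. Judging your proposal on its own merits, it has a genuine gap that I do not think can be repaired along the lines you describe.

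The core problem is that your rate bound and your constraint are decoupled. The entropy bound $R \leq \tfrac{1}{n}\sum_j H_2\bigl(\tfrac{1+\mu_j}{2}\bigr) \leq 1 - \tfrac{2}{n\ln 2}\sum_j \mu_j^2$ sees only column marginals, and a code can have rate anywhere up to $1$ while having all $\mu_j = 0$ (a random linear code of rate $1 - H_2(\rho)$ is, up to a coset shift, essentially such a code). Meanwhile, the Khintchine-style constraint you derive, $c_0 \sum_j \sqrt{L(1-\mu_j^2) + L^2\mu_j^2} \leq (1-2\rho)nL$, is \emph{vacuous} in exactly this regime: setting $\mu_j \equiv 0$ it reads $c_0 \sqrt{L} \leq (1-2\rho)L$, which holds automatically once $L \gtrsim 1/(1-2\rho)^2$. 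So for any $L$ larger than a fixed constant depending on $\rho$, the expected-$\ell_1$ constraint places no restriction at all on the profile $(\mu_j)$, and the entropy bound collapses to $R \leq 1$. Neither the biased/balanced split nor a random coset shift can rescue this---a random coset shift in fact pushes all $\mu_j$ toward $0$, making the constraint even more vacuous. In short, any argument that routes the rate bound solely through first-moment column statistics cannot distinguish a capacity-achieving average-radius-list-decodable code from an arbitrary balanced code, and hence cannot produce an additive $\Omega(1/L^2)$ gap below $1 - H_2(\rho)$.

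The actual argument of~\cite{gurnar2013} works quite differently: rather than constraining column marginals, it exhibits (via a density/pigeonhole argument over centers) an $L$-tuple of codewords confined to a Hamming ball of radius only slightly above $\rho n$, and then combines this confinement with the pairwise-distance structure of the tuple (a Plotkin-type / second-moment calculation in the ball) to force the average agreement with the center above $(1-\rho)n$ unless the rate is at least $\Omega(1/L^2)$ below capacity. The ``confinement to a small ball'' step is the crucial structural input your sketch is missing, and it is what lets the argument see beyond marginal statistics.
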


Finally, the rate bound in Conjecture~\ref{conj:listsize} is achieved by random codes and the bound in Conjecture~\ref{conj:listsize} is known to be true for most codes~\cite{R11}.

\subsection{Proof of Proposition~\ref{prop:random-subcode}}\label{app:random-subcode-pf}
We now give the proof of Proposition~\ref{prop:random-subcode}.
\begin{proof}[Proof of Proposition~\ref{prop:random-subcode}]
Let $\Sigma$ be the alphabet of size $q$. Consider any fixed $y\in\Sigma^n$, where $n$ is the block length of $\cC_0$ (and is assumed to be large enough). Then the list decodability of $\cC_0$ implies that
\begin{equation}
\label{eq:L0}
\inabs{B_q(y,\rho n)\cap \cC_0} \le L_0,
\end{equation}
where $B_q(y,r)$ is the $q$-ary Hamming ball of radius $r$ centered at $y$. 
As in Section \ref{sec:setup}, write $\cC = f(\cC_0)$, where $f = (c_1,\ldots, c_N) \sim \inparen{ \mathcal{U}_c}^N$.
Now consider the random variable $\inabs{B_q(y,\rho n)\cap\cC}$, where we are abusing our chosen notation slightly and treating $\cC$ as a proper set, even though as a matrix, $\cC$ may have repeated columns.  
This is bounded by the sum of $N$ independent Bernoulli-$\left(\frac{\inabs{B_q(y,\rho n)\cap \cC_0}}{N_0}\right)$ variables:
\[ \inabs{ B_q(y, \rho n ) \cap \cC } \leq \sum_{i=1}^n \ind{ c_i( \cC_0 ) \in B_q(y, \rho n) }, \]
where again we have inequality rather than equality because of the possibility that $c_i(\cC_0) = c_j(\cC_0)$ for some $i \neq j$.
We have
\begin{equation}
\label{eq:avg-L}
\E{ \sum_{i=1}^n \ind{ c_i(\cC_0) \in B_q(y, \rho n) }}
= N\cdot \frac{\inabs{B_q(y,\rho n)\cap \cC_0}}{N_0} \le q^{-\eps n},
\end{equation}
where the last inequality follows from \eqref{eq:L0}.
Thus, by a Chernoff bound (Theorem~\ref{thm:chernoff}) along with \eqref{eq:avg-L},
\[\PR{\inabs{B_q(y,\rho n)\cap\cC} >\frac{3}{\eps}} \le 
\PR{  \sum_{i=1}^n \ind{ c_i(\cC_0) \in B_q(y, \rho n) } > \frac{3}{\eps}} \le
\left(\frac{\eps}{3\cdot q^{\eps n}}\right)^{3/\eps-q^{-\eps n}} \le \left(\frac{1}{q^{\eps n}}\right)^{\frac{2}{\eps}} =q^{-2n},\]
where the last inequality follows for large enough $n$. Taking the union bound over the $q^n$ choices of $y$, we conclude that $\cC$ is not $\left(\rho,\frac{3}{\eps}\right)$-list decodable with probability at most $q^{-n}$, which completes the proof.

The claim on the size of $\cC$ follows from the following simple argument. Note that $|\cC|< pN_0/2$ implies that there exists a subset $S\subset [N_0]$ of size exactly $pN_0/2$ such that all codewords in $\cC$ are contained in the columns of $\cC_0$ indexed by $S$. Note that the probability of this happening for a fixed $S$ is given by $(p/2)^{pN_0}$. Taking union bound over all choices of $S$, implies that the probability that $|\cC|<pN_0/2$ is upper bounded by
\[\binom{N_0}{pN_0/2}\cdot \left(\frac{p}{2}\right)^{pN_0}\le \left(\frac{2e}{p}\right)^{pN_0/2}\cdot \left(\frac{p}{2}\right)^{pN_0}= \left(\frac{ep}{2}\right)^{pN_0/2},\]
which is $o(1)$ by our choice of parameters. This completes the proof.
\end{proof}

\subsection{Upper Bound}

Proposition~\ref{prop:random-subcode} only works for the usual notion of list decodability. It is natural to wonder if a similar result holds for average-radius list decodability. Next, we show that such a result indeed holds (though with slightly weaker parameters). Indeed the result follows from the following simple observation:

\begin{proposition}
\label{prop:max-to-avg}
Let $\cC$ be a $(\rho,L)$-list decodable code. Then for any $\gamma>0$, $\cC$ is also $\left(\rho-\gamma,\frac{L}{\gamma}\right)$-average-radius list decodable.
\end{proposition}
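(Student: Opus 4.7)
The plan is to argue by contrapositive (equivalently, by contradiction): assume $\cC$ fails to be $(\rho - \gamma, L/\gamma)$-average-radius list decodable, and from this derive a violation of the $(\rho, L)$-list decodability hypothesis. By Definition~\ref{def:avgrad}, the supposed failure supplies a set $\Lambda \subset \cC$ of size $L' := L/\gamma$ and a center $z \in \Sigma^n$ with
\[ \sum_{c \in \Lambda} d(c, z) \;<\; (\rho - \gamma)\, n L'. \]

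The key step is a Markov-type partitioning of $\Lambda$ according to whether a codeword lies within Hamming distance $\rho n$ of $z$. I would let $k := |\{ c \in \Lambda : d(c,z) \leq \rho n\}|$ count the close codewords. Each of the remaining $L' - k$ codewords contributes strictly more than $\rho n$ to the distance sum, while the close ones contribute at least $0$, giving
\[ \sum_{c \in \Lambda} d(c,z) \;>\; \rho n (L' - k). \]
Combining the two displayed inequalities yields $\rho n (L' - k) < (\rho - \gamma) n L'$, which rearranges to $k > \gamma L'/\rho = L/\rho \geq L$, using $\rho \leq 1$ in the last step.

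Thus $\Lambda \subseteq \cC$ contains strictly more than $L$ codewords in the ball of radius $\rho n$ around $z$, directly contradicting $(\rho, L)$-list decodability of $\cC$, which requires any such ball to contain fewer than $L$ codewords. The only point of care is tracking strict versus non-strict inequalities, which the argument above handles cleanly; I do not foresee any real obstacle, as the proposition amounts to a simple pigeonhole/Markov translation of a max-list bound into an average-radius bound.
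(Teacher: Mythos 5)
Your proof is correct and takes essentially the same approach as the paper: both arguments partition $\Lambda$ into codewords inside and outside the ball $B_q(z,\rho n)$ and invoke $(\rho,L)$-list-decodability to cap the number of close codewords; the paper just phrases this as a direct upper bound on $\sum_{c\in\Lambda}\agr(c,z)$ rather than a contradiction. (Minor: your displayed inequality $\sum_{c\in\Lambda} d(c,z) > \rho n(L'-k)$ should be $\geq$ to cover the edge case $k=L'$, but the subsequent chain $\rho n(L'-k)\leq \sum d(c,z) < (\rho-\gamma)nL'$ still yields the needed strict bound.)
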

\begin{proof}
Define $L'=L/\gamma$ and fix an arbitrary $\Lambda \subset \cC$ such that $|\Lambda|=L'$. Define
\[\Lambda_- = \Lambda\cap B_q(y,\rho n) \text{ and } \Lambda_+=\Lambda\setminus \Lambda_-.\]
Note that since $\cC$ is $(\rho,L)$-list decodable, we have $|\Lambda_-|\le L$. This implies that
\begin{equation}
\label{eq:Lambda-}
\sum_{c\in\Lambda_-} \agr(c,y) \le |\Lambda_-|\cdot n \le nL \le \gamma nL',
\end{equation}
where the last inequality follows from the definition of $L'$. Further, by the definition of $\Lambda_+$, we have
\[\sum_{c\in\Lambda_+} \agr(c,y) < (1-\rho)n\cdot |\Lambda_+| \le (1-\rho)nL'.\]
Combining the above with \eqref{eq:Lambda-} implies that $\sum_{c\in\Lambda} \agr(c,y) <(1-\rho+\gamma)nL'$, which completes the proof.
\end{proof}

Since a $(\rho,L_0)$-average-radius list decodable code is also $(\rho,L)$-list decodable, Propositions~\ref{prop:random-subcode} and~\ref{prop:max-to-avg} implies the following:
\begin{cor}
\label{cor:random-subcode-avg}
Let $\cC_0$ be an $(\rho,L_0)$-average-radius list decodable $q$-ary code. If we retain each codeword with probability $\frac{1}{q^{\eps n}\cdot L_0}$, then the resulting code with high probability is $(\rho-\eps,O(1/\eps^2))$-average-radius list decodable.
\end{cor}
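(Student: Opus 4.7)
\textbf{Proof proposal for Corollary~\ref{cor:random-subcode-avg}.} The statement is tagged as a direct consequence of the two preceding propositions, so the plan is to chain them together. First I would observe that average-radius list-decodability is stronger than ordinary list-decodability: if $\cC_0$ is $(\rho, L_0)$-average-radius list decodable then $\cC_0$ is in particular $(\rho, L_0)$-list decodable. This is a standard one-line check (as noted after Definition~\ref{def:avgrad}): for any $z$ and any putative $\Lambda \subset \cC_0$ with $|\Lambda| = L_0$ whose codewords all lie in $B_q(z,\rho n)$, we would have $\sum_{c\in\Lambda}\agr(c,z) \geq (1-\rho)nL_0$, contradicting average-radius list-decodability.

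With ordinary list-decodability in hand, I would apply Proposition~\ref{prop:random-subcode} to $\cC_0$ with its existing parameters. This yields that the random subcode $\cC$, formed by retaining each codeword independently with probability $p = 1/(q^{\eps n} L_0)$, is $(\rho, 3/\eps)$-list decodable with probability $1 - o(1)$. (The size claim in Proposition~\ref{prop:random-subcode} is incidental here and can be discarded, or quoted if one also wants a lower bound on $|\cC|$.)

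Finally, I would invoke Proposition~\ref{prop:max-to-avg} on the random subcode $\cC$ with $L = 3/\eps$ and $\gamma = \eps$. This converts the conclusion ``$(\rho, 3/\eps)$-list decodable'' into ``$(\rho - \eps, (3/\eps)/\eps)$-average-radius list decodable'', i.e.\ $(\rho - \eps, 3/\eps^2)$-average-radius list decodable, which is the desired $(\rho - \eps, O(1/\eps^2))$ bound. Since both invocations are deterministic implications except for the single high-probability event from Proposition~\ref{prop:random-subcode}, the overall statement holds with high probability.

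There is really no technical obstacle here; the two propositions have already done all the work. The only thing to be slightly careful about is the quantifier order: Proposition~\ref{prop:max-to-avg} is applied to a \emph{fixed} code (namely the realization of $\cC$ in the favorable event), not to a random one, so it is fine to condition on the $(\rho, 3/\eps)$-list decodability event and then apply the deterministic implication. The resulting constant on $1/\eps^2$ is $3$, but as already noted in the remark after Proposition~\ref{prop:random-subcode}, the leading constant can be pushed arbitrarily close to~$1$ if one wishes by using a slightly weaker initial list size bound.
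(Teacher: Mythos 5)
Your proof is correct and follows exactly the route the paper intends: the paper states the corollary as an immediate consequence of Propositions~\ref{prop:random-subcode} and~\ref{prop:max-to-avg} together with the observation that average-radius list-decodability implies ordinary list-decodability, which is precisely the chain you carry out with $L = 3/\eps$ and $\gamma = \eps$.
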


\subsection{Lower Bound}

It is natural to wonder if one can pick a larger value of $p$ in Proposition~\ref{prop:random-subcode}, and whether the dependence $q^{\eps n}$ is necessary. In particular, if $L_0$ is only polynomial in $n$, could we pick $p=q^{-o(\eps n)}$? We will now argue that this is not possible.

First, we give a short argument, conditional on Conjecture~\ref{conj:listsize}.
By standard random coding argument, there exists a $q$-ary code $\cC_1$ with rate $1-H_q(\rho)-\frac{1}{n}$ that is $(\rho,n)$-list decodable. 
Suppose Proposition~\ref{prop:random-subcode} holds with $p=q^{-o(\eps n)}$.  If we applied this to the code $\cC_1$, we would obtain a code that is $(\rho,O(1/\eps))$-list decodable that has rate at least
\[1-H_q(\rho)-o(\eps),\]
assuming that $\eps$ is constant and $n$ is growing. However, this contradicts Conjecture~\ref{conj:listsize} for $L=O(1/\eps)$. 

Next, we argue an unconditional upper bound on $p$ in Proposition~\ref{prop:random-subcode}. In fact, we will prove something stronger: we will show that one needs $p=2^{-\Omega(\eps n)}$ even if the the original code $\cC_0$ has the stronger property of being $(\rho,L_0)$-average-radius list decodable (and the random subcode can have a weaker list decoding radius).

\begin{theorem}[Theorem \ref{thm:subcode-lb}, repeated]
For every $\rho > 0$, and for every $0 < \alpha < \frac{1 - \rho}{12}$, and for every $n$ sufficiently large, there exists a  code $\cC_0$ with block length $n$ that is $(\rho,n)$-average-radius list decodable such that the following holds. Let $\cC$ be  obtained by picking a random sub-code of $\cC_0$ of size $N = pN_0$ where  $p=q^{-\alpha n}/n$. Then with high probability if $\cC$ is $(\rho',L)$-list decodable for any $\rho'\ge 1/n$, then $L\ge \Omega(1/\alpha)$.
\end{theorem}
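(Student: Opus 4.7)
The plan is to carry out the cluster construction sketched in Section~\ref{sec:techniques}: build $\cC_0$ as a disjoint union of ``clusters'' placed around the codewords of an auxiliary code $C^*\subset\F_q^n$. The tension to balance is that (a) each cluster must be large enough that after subsampling at rate $p=q^{-\alpha n}/n$ at least $\Omega(1/\alpha)$ codewords survive in some cluster, which forces cluster size $m=\Theta(nq^{\alpha n}/\alpha)$, yet (b) $\cC_0$ must remain $(\rho,n)$-average-radius list-decodable, which prevents the clusters from being too concentrated.

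To realize this, I would first use a standard random-coding argument (in the style of Theorems~\ref{thm:ld-lb} and~\ref{thm:avg-ld-lb}) to obtain a code $C^*$ of rate roughly $1-H_q(\rho)-\Theta(\alpha)$ that, with high probability, is simultaneously $(\rho,O(1/\alpha))$-list-decodable and has minimum distance at least $\rho n-o(n)$; the slack $\alpha<(1-\rho)/12$ in the hypothesis is what allows us to meet both requirements. For each $c^*\in C^*$, I would take $\cC_{c^*}$ to be a uniformly random sample of $m=\Theta(nq^{\alpha n}/\alpha)$ points on the sphere $S(c^*,\rho n)$; this fits because $|S(c^*,\rho n)|\geq q^{H_q(\rho)n-o(n)}\gg m$ under the hypothesis. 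Set $\cC_0=\bigsqcup_{c^*}\cC_{c^*}$, so that $N_0=|C^*|\cdot m$.

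Verifying that $\cC_0$ is $(\rho,n)$-average-radius list-decodable is the main work. Given $\Lambda\subset\cC_0$ of size $n$, partition $\Lambda=\bigsqcup_i\Lambda_i$ by cluster; the cross-cluster contribution $\sum_i|\Lambda_i|\,d(c^*_i,y)$ is controlled by the list-decodability of $C^*$ together with the triangle-inequality bound $d(c,y)\geq d(c^*_i,y)-\rho n$ for $c\in\cC_{c^*_i}$. The delicate case is when $\Lambda$ concentrates within a single cluster: here we need the translated cluster $\cC_{c^*}-c^*\subset S(0,\rho n)$ to be $(\rho,n)$-average-radius list-decodable as a spherical code, so that $\min_y\sum_{c\in\Lambda}d(c,y)\geq\rho n^2$. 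A Gilbert--Varshamov-style random-coding argument on the sphere should yield this, because at $m\leq q^{H_q(\rho)n-o(n)}$ we sit strictly below the spherical list-decoding-capacity bound, with just enough room under the hypothesis $\alpha<(1-\rho)/12$ to absorb the union bound.

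Finally, for the random subcode $\cC$: the expected number of survivors in each cluster is $pm=\Theta(1/\alpha)$; by Theorem~\ref{thm:chernoff} together with a union bound over the $|C^*|=q^{\Omega(n)}$ clusters, with high probability some cluster $\cC_{c^*_i}$ retains at least $\Omega(1/\alpha)$ codewords in $\cC$. Since these all lie in $B(c^*_i,\rho n)$, taking $y=c^*_i$ and $\rho'=\rho\geq 1/n$ (for $n$ large) exhibits $\Omega(1/\alpha)$ codewords of $\cC$ in a single ball, giving the claim. The step I expect to be the main obstacle is the spherical random-coding argument in paragraph~3: adversarial $n$-subsets $\Lambda\subset\cC_{c^*}$ can be correlated with the randomness defining the cluster, so the usual union bound over all $n$-subsets and all centers $y$ must be carried out carefully, and the constant $12$ in the hypothesis $\alpha<(1-\rho)/12$ is precisely where the slack for this union bound is absorbed.
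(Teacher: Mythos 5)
Your proposal follows the high-level ``cluster'' sketch from Section~\ref{sec:techniques}, but the specific instantiation you choose has a genuine bug, and it is worth seeing exactly where, since it explains why the paper makes the clusters as small as it does.

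You propose clusters $\cC_{c^*}$ consisting of $m=\Theta(n q^{\alpha n}/\alpha)$ uniformly random points on the sphere $S(c^*,\rho n)$, so that after subsampling at rate $p$ each cluster has $\approx pm=\Theta(1/\alpha)$ survivors. The problem is that such a cluster cannot give $(\rho,n)$-\emph{average-radius} list decodability, regardless of how careful the union bound is, because the adversary can simply nudge the center. Take $y$ at Hamming distance $1$ from $c^*$ along a coordinate direction consistent with typical codewords. For a uniformly random $c\in S(c^*,\rho n)$ one has $d(c,y)=\rho n -1$ exactly when $c$ agrees with $y$ on that coordinate, which happens with probability $\rho/(q-1)$. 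Since $m\cdot\rho/(q-1)\gg n$, with overwhelming probability the cluster contains $n$ such codewords; choosing $\Lambda$ to be $n$ of them gives $\sum_{c\in\Lambda}d(c,y)=\rho n^2-n<\rho n^2$, violating Definition~\ref{def:avgrad} directly. So the construction already fails for $k=1$ and no union-bound slack (and in particular no choice of the constant $12$) can rescue it. This failure mode is exactly what the paper sidesteps: it places only $\Theta(n)$ vectors in each cluster $N(c)$, at \emph{absolute} distance $1$ from $c$, so that even if an adversarial $\Lambda$ captures an entire cluster the contribution is at most $|N(c)|\cdot n$ with a small explicit constant, and the rest is absorbed deterministically via an annulus decomposition together with the nested list-decodability of $C^*$ from Lemma~\ref{lem:random-centers}. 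The tension you identify (clusters must be large enough to survive subsampling, yet small enough to keep list-decodability) is resolved in the opposite direction: keep each cluster tiny, but make there be exponentially many of them, and argue via Claim~\ref{claim:prob-lb-alt} that with probability $1-o(1)$ \emph{some} cluster keeps a $D/\alpha$-subset intact.

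There is also a secondary gap in the conclusion: your argument places the $\Omega(1/\alpha)$ surviving codewords inside a ball of radius $\rho n$, which only rules out small lists at $\rho'\ge\rho$. The theorem needs the bad list already at $\rho'=1/n$, i.e., a ball of absolute radius $1$, which is another reason the paper's clusters are at absolute distance $1$. Finally, even setting aside the actual counterexample, the union bound as you sketched it cannot close for small $d(y,c^*)$: there are about $\binom{m}{n}\approx q^{\alpha n^2}$ subsets $\Lambda$, while the best bounded-differences concentration for $\sum_{c\in\Lambda}d(c,y)$ decays only like $e^{-\Theta(d(y,c^*)^2)}$, so the bound is lost whenever $d(y,c^*)\lesssim\sqrt{\alpha}\,n$. (The paper avoids this entirely by not randomizing the clusters: once $C^*$ is fixed, Lemma~\ref{lem:c0-av-ld} is a deterministic statement.)
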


In the rest of the subsection, we will prove Theorem \ref{thm:subcode-lb}.

\subsubsection{Preliminaries}

We will need the following technical result, which follows the standard random coding argument and its analysis to determine the list decodability of random codes.

\begin{lemma}
\label{lem:random-centers}
Let $q\ge 2^{1/r}$ be an integer. Then there exists a code $\cC^*$ with rate $r$ and block length $n$ such that for every $2r<\gamma\le 1$, where $\gamma$ is a power of $1/2$, $\cC^*$ is $\left(1-\gamma, \left\lceil \frac{1}{\gamma-2r}\right\rceil\right)$-list decodable. Further, $\cC^*$ has relative distance $1-O(r)$.
\end{lemma}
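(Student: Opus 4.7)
My plan is a standard random coding argument. Pick $\cC^*$ to consist of $N = q^{rn}/n^3$ codewords drawn uniformly and independently from $\Sigma^n$ with $|\Sigma|=q$. The rate is still $r-o(1)$, which for the statement's purposes is $r$, and the polynomial shaving of $N$ by $n^{-3}$ is a technical device that upgrades the first-moment ``$\leq 1$'' bounds below to ``$o(1)$'' bounds, so that a union bound over the $O(\log(1/r))$ admissible values of $\gamma$ goes through without having to worry about whether $1/(\gamma-2r)$ happens to be integer-valued.

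The one enabling estimate I would establish first is that the hypothesis $q\ge 2^{1/r}$ gives $1/\log_2(q)\le r$, which combined with the identity $H_q(1-\gamma)=(1-\gamma)\log_q(q-1)+H_2(\gamma)/\log_2(q)$ yields
\[ H_q(1-\gamma) \;\le\; (1-\gamma) + r, \qquad\text{so}\qquad 1-H_q(1-\gamma) \;\ge\; \gamma - r. \]
This is the only use of the alphabet hypothesis and converts the $q$-ary entropy into a clean linear function of $\gamma$. Now for a fixed admissible $\gamma = 2^{-i} \in (2r,1]$, set $L_\gamma = \lceil 1/(\gamma-2r)\rceil$. The volume bound $|B_q(y,(1-\gamma)n)|\le q^{H_q(1-\gamma)n}$ gives $\PR{c\in B_q(y,(1-\gamma)n)} \le q^{-(\gamma-r)n}$ for a uniformly random $c\in\Sigma^n$, so the expected number of pairs $(y,\Lambda)$ with $y\in\Sigma^n$, $|\Lambda|=L_\gamma$, and every codeword of $\Lambda$ lying in $B_q(y,(1-\gamma)n)$ is at most
\[ q^n \binom{N}{L_\gamma} q^{-L_\gamma(\gamma-r)n} \;\le\; n^{-3L_\gamma}\,q^{\,n(1-L_\gamma(\gamma-2r))} \;\le\; n^{-3}, \]
using $L_\gamma(\gamma-2r)\ge 1$. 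Summing over the $O(\log n)$ admissible $\gamma$ still gives $o(1)$, so by Markov's inequality $\cC^*$ is simultaneously $(1-\gamma, L_\gamma)$-list decodable for every admissible $\gamma$ with probability $1-o(1)$.

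For the relative distance, the same first-moment method applied to codeword pairs bounds the expected number of pairs at Hamming distance at most $(1-4r)n$ by $\binom{N}{2}q^{H_q(1-4r)n-n}$; the entropy estimate gives $H_q(1-4r)\le 1-3r$, so this expectation is at most $q^{-rn}=o(1)$. Taking a final union bound with the list-decoding events, with probability $1-o(1)$ the code $\cC^*$ has relative distance at least $1-4r=1-O(r)$ and the stated list-decodability for every admissible $\gamma$; any realization in that event furnishes the required code. I do not expect any step to be a serious obstacle: the entropy estimate is a one-line computation, and the only subtlety is ensuring strict slack in the first-moment bound so that the ceiling in $L_\gamma$ is not wasted, which is handled by the polynomial shaving of $N$.
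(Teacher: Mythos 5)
Your approach is essentially the same as the paper's: a random code of rate (about) $r$, a first-moment union bound over $(y,\text{tuple})$ violations, the identical entropy estimate $H_q(1-\gamma)\le(1-\gamma)+r$ derived from $q\ge 2^{1/r}$, and a final union over the admissible dyadic $\gamma$. The one genuine point of divergence is how you both handle the boundary case $1/(\gamma-2r)\in\mathbb{Z}$, where $L_\gamma(\gamma-2r)=1$ exactly and the $q$-exponent in the first moment is zero rather than negative: the paper counts $(L+1)$-tuples of codewords (effectively adopting the ``$\le L$'' convention of list-decodability in its internal bookkeeping, which gives strict slack $\gamma-2r-1/(L+1)>0$), whereas you keep $L_\gamma$-tuples and instead shave the code size to $N=q^{rn}/n^3$ so the $n^{-3L_\gamma}$ factor supplies the slack. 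Both work; yours has the minor cosmetic cost of giving rate $r-o(1)$ rather than exactly $r$, which you rightly note is immaterial to how the lemma is used, and arguably matches the paper's stated ``$<L$'' definition more literally than the paper's own proof does. One small imprecision: the number of admissible $\gamma$ is $O(\log(1/r))$---a constant in the intended application, since $r$ is a constant---not $O(\log n)$; as an overcount this is of course harmless for the union bound.
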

\begin{proof} Fix a $\gamma$ with conditions as in the lemma statement. Let $\cC^*$ be a random code of rate $r$; by standard arguments, this distance of this code is $1 - O(r)$ with high probability~\cite{book}.  
 Further,  the standard random coding argument (see, for example,~\cite{book}) implies that $\cC^*$ is $(1-\gamma,L)$ list decodable except with probability at most
\[q^n\cdot q^{r n(L+1)}\cdot \left( \frac{q^{H_q(1-\gamma)n}}{q^n}\right)^{L+1}.\]
Rearranging, we can bound the expression above by
\begin{align}
& q^{-n(L+1)\left(1-H_q(1-\gamma)-r-\frac{1}{L+1}\right)}\notag\\
\label{eq:large-q}
&\qquad \le  q^{-n(L+1)\left(1-(1-\gamma+r)-r-\frac{1}{L+1}\right)}\\
 &\qquad = q^{-n(L+1)\left(\gamma-2r-\frac{1}{L+1}\right)} \notag\\
\label{eq:final-prob}
 &\qquad \le q^{-\Omega(n/L)}
\end{align}
In the above, \eqref{eq:large-q} follows from the following sequence of relations (that holds for any $0\le \rho \le 1-1/q$):
\[H_q(\rho)= \rho\log_q(q-1) + \frac{H_2(\rho)}{\log{q}} \le \rho +r,\]
where the inequality uses the fact that $q\ge 2^{1/r}$. \eqref{eq:final-prob} uses the fact that the choice of $L=\left\lceil \frac{1}{\gamma-2r}\right\rceil$ implies that $\gamma-2r-\frac{1}{L+1}>0$.

Finally, since the bound in \eqref{eq:final-prob} holds for any fixed $\gamma$ and that there are $O(\log(1/r))$ possible values of $\gamma$, the probability that the randomly chosen $\cC^*$ does not have the required property is $o(1)$, which completes the proof.
\end{proof}

\subsubsection{The Construction}

We now present the code $\cC_0$. Choose $\beta>0$ to be the smallest number such  that $1-\rho-\beta$ is a power of $1/2$. We will construct $\cC_0$ from $\cC^*$ as given by Lemma~\ref{lem:random-centers} with rate $r=(1-\rho-\beta)/6$.  (Note that by our choice of $\beta$, this implies that $r\ge (1-\rho)/12$ and hence, $\alpha<r$.)
 The construction goes as follows.
For every $c\in \cC^*$, let $N(c)$ be any $\frac{\beta\cdot n}{8\log(1/(1-\rho-\beta))}-1$ distinct vectors with Hamming distance $1$ from $c$. Then define
\[\cC_0=\cup_{c\in \cC^*} N(c).\]
Having constructed $\cC_0$, we argue next that it has good average-radius list-decodability. 
\begin{lemma}
\label{lem:c0-av-ld}
$\cC_0$ is $(\rho,n)$-average-radius list decodable.
\end{lemma}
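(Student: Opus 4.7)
The plan is to reduce the average-radius list decodability of $\cC_0$ to the ordinary list decodability of $\cC^*$ (which is available from Lemma~\ref{lem:random-centers}) via the cluster structure of $\cC_0$ and a dyadic layering argument on the agreement profile.

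Since $\cC^*$ has relative distance $1 - O(r)$, which exceeds $2/n$ for large $n$, each $v \in \cC_0$ lies in $N(c)$ for a \emph{unique} center $c \in \cC^*$. Given any $\Lambda \subset \cC_0$ with $|\Lambda| = n$ and any received word $z$, I would write $\Lambda = \sqcup_{c}\Lambda_c$, with $\Lambda_c := \Lambda \cap N(c)$, $w_c := |\Lambda_c|$, so that $\sum_c w_c = n$ and $w_c \le K := |N(c)|$. Since $d(v,c) = 1$ for $v \in N(c)$, the triangle inequality gives
\[ \sum_{v \in \Lambda} \agr(v,z) \;\le\; \sum_c w_c \, \agr(c,z) \;+\; n, \]
so it suffices to show that the weighted sum on the right is at most $(1-\rho)n^2 - n$.

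Set $\gamma^* := 1 - \rho - \beta$, which is a power of $1/2$ by the choice of $\beta$. Centers $c$ with $\agr(c,z) < \gamma^* n$ contribute at most $\gamma^* n \cdot \sum_c w_c = \gamma^* n^2$. The remaining centers I would handle via dyadic layers: with $\gamma_i := 2^{-i}$ ranging from $\gamma_0 = 1$ down to $\gamma_I = \gamma^*$, define layer $i$ to be the distinct centers with $\agr(c,z) \in [\gamma_{i+1}n,\gamma_i n)$. By Lemma~\ref{lem:random-centers}, the number of centers in layer $i$ is at most $\lceil 1/(\gamma_{i+1} - 2r) \rceil = O(1/\gamma_{i+1})$, using that $r = \gamma^*/6$ and $\gamma_{i+1} \ge \gamma^*$ imply $\gamma_{i+1} - 2r = \Omega(\gamma_{i+1})$. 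Each such cluster has weight at most $K$ and agreement at most $\gamma_i n = 2\gamma_{i+1} n$, so layer $i$ contributes at most $O(Kn)$, and summing over the $I = \log_2(1/\gamma^*)$ layers yields a total of $O(Kn \log(1/\gamma^*))$.

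Plugging in the choice $K = \frac{\beta n}{8\log(1/\gamma^*)} - 1$ makes this dyadic total at most $\tfrac{7}{8}\beta n^2$, giving
\[ \sum_c w_c\,\agr(c,z) \;\le\; \gamma^* n^2 + \tfrac{7}{8}\beta n^2 \;=\; (1-\rho - \tfrac{1}{8}\beta)\,n^2, \]
and adding back the additive $+n$ slack from the triangle step yields $\sum_{v \in \Lambda}\agr(v,z) \le (1-\rho)n^2$ for $n$ sufficiently large (say $n \ge 8/\beta$). The main obstacle is not conceptual but bookkeeping: the constants $r = \gamma^*/6$ and the $1/8$ in the cluster-size definition are chosen just tightly enough that the dyadic sum still fits inside a constant fraction of $\beta n^2$ after the $\log(1/\gamma^*)$ blowup from summing layers, while leaving a $\Theta(\beta n^2)$ cushion to absorb the $+n$ incurred by moving from $v$ to its cluster center. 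Care is needed to verify that the layer-$i$ bound $L_{\gamma_{i+1}} = O(1/\gamma_{i+1})$ holds uniformly (including the boundary layer $i+1 = I$, where $\gamma_{i+1} - 2r = 2\gamma^*/3$), and that the unique-cluster assumption is indeed valid (which reduces to the claim that $\cC^*$ has minimum distance exceeding $2$, holding once $n$ is large because $\cC^*$ has relative distance $1 - O(r)$).
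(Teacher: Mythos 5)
Your proof is correct and follows essentially the same strategy as the paper's: both use the dyadic decomposition of agreement levels combined with the list-decodability of $\cC^*$ from Lemma~\ref{lem:random-centers} to bound each layer by $O(Kn)$ and sum over $\log(1/(1-\rho-\beta))$ layers. Your minor reorganization — moving from cluster elements to centers via the triangle inequality up front, paying an additive $+n$, and then layering on the centers' agreements — is slightly cleaner than the paper's version, which layers the cluster elements into annuli $\A_i$ and then pushes to centers via Claim~\ref{clm:size-S} (absorbing the distance-$1$ shift into a slightly larger ball radius $1-\gamma/2$).
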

\begin{proof} Recall that $1-\rho-\beta$ is a power of $1/2$.  
Fix an arbitrary $z$ and $\Lambda\subset \cC_0$ with $|\Lambda|=n$. We want to show that 
\begin{equation}
\label{eq:av-ld-bound}
\sum_{c\in \Lambda} \agr(z,c)< (1-\rho)n^2.
\end{equation}

Define
\[\B=B_q(z,(\rho+\beta)n).\]
We will break up the left-hand-side of \eqref{eq:av-ld-bound} into two parts, and handle $\Lambda \setminus \B$ and $\Lambda \cap \B$ separately.
First, we have
\begin{equation}
\label{eq:large-ball-bound}
\sum_{c\in\Lambda\setminus \B} \agr(z,c)< (1-\rho-\beta) n\cdot |\Lambda|= (1-\rho-\beta) n^2.
\end{equation}
Next, we bound $\sum_{c\in\Lambda\cap \B} \agr(z,c)$.  We break this sum up even further, and decompose $\B$ into the annuli 
\[\A_i:= B_q(z,(1-2^{-i-1})n)\setminus B_q(z,(1-2^{-i})n) \] 
for
$0\le i< \log\left(\frac{1}{1-\rho-\beta}\right)$.
Fix an $0\le i< \log\left(\frac{1}{1-\rho-\beta}\right)$ and for notational convenience define $\gamma=2^{-i-1}$.  (This will agree with the use of $\gamma$ in the statement of Lemma \ref{lem:random-centers}). 
Now consider $\Lambda \cap \A_i$, and consider the set 
\[ S := \inset{ c \in \cC^* \suchthat  N(c) \cap (\Lambda \cap \A_i) \neq \emptyset } \]
of ``centers" in $\cC^*$ whose ``clusters" $N(c)$ appear in this set.  We make the following two observations:
\begin{claim}
\label{clm:size-S}
$S \subset \cC^* \cap B_q(z, (1 - \gamma/2)n).$
\end{claim}
\begin{proof}
Since all vectors in $N(c)$ are at Hamming distance $1$ from a $c\in \cC^*$ (and $n$ is assumed to be large enough), we have that $c\in S$ implies that $c\in \A_{i-1}\cup\A_i\cup\A_{i+1}$. It is easy to see that the union of the three annuli is contained in $B_q(z, (1 - \gamma/2)n)$, which completes the proof.
\end{proof}

The following follows from the construction:
\begin{claim}
\label{clm:lambda-Ai-bound}
\[|\Lambda \cap \A_i| \le |S|\cdot \frac{\beta\cdot n}{8\log(1/(1-\rho-\beta))}.\]
\end{claim}

%
Thus, using the list-decodability of $\cC^*$ guaranteed by Lemma~\ref{lem:random-centers} and Claim~\ref{clm:size-S}, we have that $|S|\le \left\lceil \frac{1}{\gamma-2r}\right\rceil$. (Note that we can apply Lemma~\ref{lem:random-centers} since by our choice of parameters we have $\gamma\ge 1-\rho-\beta$, which in turn implies that $\gamma/2\ge (1-\rho-\beta)/2= 3r>2r$ as required.) Further, this with Claim~\ref{clm:lambda-Ai-bound} implies that
\begin{equation}
\label{eq:L-cap-Ai}
\left| \Lambda\cap\A_i\right| \le \left\lceil \frac{1}{\gamma-2r}\right\rceil \cdot \frac{\beta\cdot n}{8\log(1/(1-\rho-\beta))}\le \left( \frac{2}{\gamma-2r}\right) \cdot \frac{\beta\cdot n}{8\log(1-\rho-\beta)}.
\end{equation}
Now, we may bound
\begin{align}
\label{eq:step1}
\sum_{c\in\Lambda\cap\A_i} \agr(z,c)& \le \left| \Lambda\cap\A_i\right| \cdot 2\gamma n\\
\label{eq:step2}
& \le \left( \frac{1}{\gamma-2r}\right) \cdot \frac{\beta\cdot n}{4\log(1/(1-\rho-\beta))} \cdot 2\gamma n\\
& = \left( \frac{\gamma}{\gamma-2r}\right) \cdot \frac{\beta\cdot n^2}{2\log(1/(1-\rho-\beta))} \notag\\
\label{eq:stp3}
&\le \frac{\beta n^2}{\log(1/(1-\rho-\beta))}.
\end{align}
In the above, \eqref{eq:step1} follows from the fact that $\A_i$ lies outside of $B_q(z,(1-2\gamma)n)$. \eqref{eq:step2} follows from \eqref{eq:L-cap-Ai} while \eqref{eq:stp3} follows from the fact that $2r \le (1-\rho-\beta)/2\le \gamma/2$.
Finally, summing everything up and using \eqref{eq:large-ball-bound} and \eqref{eq:stp3}, we bound
\[ \sum_{c \in \Lambda \cap \B} \agr(c,z) \leq (1 - \rho - \beta)n^2 + \sum_{i=1}^{\log(1 - \rho -\beta)} \frac{ \beta n^2 }{ -\log(1 - \rho -\beta) } = (1 - \rho)n^2. \]
This establishes \eqref{eq:av-ld-bound}.
\end{proof}

\paragraph{Random subcodes of $\cC_0$ are typically not list-decodable.} 
Fix $\alpha > 0$, and 
let $\cC$ be a random subcode of $\cC_0$ of size $pN_0$, for $p = q^{-\alpha n}/n$ as in the statement of the theorem.
We finally argue that $\cC_0$ has many sub-codes that have terrible list decodability, thus proving Theorem~\ref{thm:subcode-lb}.
For any $z\in \cC^*$ and let $\Lambda(z)$ be an arbitrary subset of $N(z)$ such that $|\Lambda(z)|=D/\alpha$, where we will fix $D$ later. Further, order the ``centers" in $\cC^*$ as $z_1,z_2,\dots$. Then the following is the main technical lemma:
\begin{claim}\label{claim:prob-lb-alt}
For any $k\le \frac{q^{rn}}{3}$, we have
\[\PR{\Lambda(z_{k+1})\subset \cC| \cC\cap \left(\cup_{i=1}^k N(z_i)\right)} \ge
\left(\frac{p}{2e}\right)^{\frac{D}{\alpha}} \geq q^{-2Dn}.\]
\end{claim}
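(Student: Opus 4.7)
My plan is to prove the claim by a conditional-independence argument that mirrors the multinomial calculation from the (crossed-out) unconditional version, modified to account for the conditioning on $\cC \cap B$ where $B := \bigcup_{i=1}^k N(z_i)$. The key structural observation is that, by Lemma~\ref{lem:random-centers}, $\cC^*$ has relative distance $1 - O(r)$, so for $n$ large the sets $\{z_i\} \cup N(z_i)$ are pairwise disjoint. In particular, $\Lambda(z_{k+1}) \subseteq N(z_{k+1})$ is disjoint from $B$; combined with the hypothesis $k \leq q^{rn}/3$ and the uniform cluster size, this also gives $|B| \leq N_0/3$.

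I would then exploit the with-replacement sampling model: the $N = pN_0$ draws producing $\cC$ are i.i.d.\ uniform on $\cC_0$. Conditional on any realization of $\cC \cap B$ (equivalently, on the counts $(n_c)_{c \in B}$), the total $m := \sum_{c \in B} n_c$ is determined, and the remaining $N - m$ draws are conditionally i.i.d.\ uniform on $\cC_0 \setminus B$, a universe of size at least $2N_0/3$. Since $\Lambda(z_{k+1})$ lies in $\cC_0 \setminus B$, the conditional probability in the claim reduces to the probability that these $N - m$ draws cover the specific $D'$-element subset $\Lambda(z_{k+1})$ (where $D' := D/\alpha$). This is lower bounded by the same multinomial identity as in the unconditional proof but with $N \mapsto N - m$ and $N_0 \mapsto N_0 - |B|$:
\[
\binom{N - m}{D'} \cdot \frac{D'!}{(N_0 - |B|)^{D'}} \cdot \Bigl(1 - \tfrac{D'}{N_0 - |B|}\Bigr)^{N - m - D'}.
\]

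The main obstacle is that this expression depends on $m$ and degrades when $m$ is close to $N$. I would handle this by observing that $m$ is a binomial random variable with mean $N|B|/N_0 \leq N/3$, so by a Chernoff bound (Theorem~\ref{thm:chernoff}) the event $\{m \leq N/2\}$ holds with probability $1 - o(1)$ over the conditioning. Restricting to this typical event (whose $o(1)$ complement can be absorbed into the outer probability calculation of Theorem~\ref{thm:subcode-lb}), we get $N - m \geq N/2$; substituting this into the multinomial bound, using $\binom{N - m}{D'} \geq ((N - m)/D')^{D'}$, $D'! \geq (D'/e)^{D'}$, and the fact that $pD' = o(1)$ makes the trailing factor $1 - o(1)$, yields the desired $(p/(2e))^{D'}$ lower bound (with absolute constants absorbed for $n$ large). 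The final inequality $(p/(2e))^{D/\alpha} \geq q^{-2Dn}$ then follows immediately from $p = q^{-\alpha n}/n$ and $q \geq 2$, using $n^{D/\alpha} = q^{o(n)}$ for fixed $D, \alpha > 0$.
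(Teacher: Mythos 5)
Your proposal is correct and follows essentially the same route as the paper's proof: both observe that conditional on $\cC\cap B$ the remaining $N-m$ draws are i.i.d.\ uniform on $\cC_0\setminus B$ (which is disjoint from $\Lambda(z_{k+1})$ by the distance of $\cC^*$), control $m$ by a Chernoff bound whose failure probability is absorbed into the outer computation of Theorem~\ref{thm:subcode-lb}, and plug into the same multinomial expression with $N_0-|B|\ge N_0/2$ and $N-m\ge N/2$ to get $(p/(2e))^{D'}$. The only cosmetic difference is that the paper bounds $M_{q^{rn}/3}$ once (since $M_k$ is monotone in $k$) rather than applying Chernoff to each $m$ separately, but both work.
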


Once we establish Claim~\ref{claim:prob-lb-alt}, we are done.  Indeed, we have
\begin{align*}
\PR{ \forall i \leq \frac{q^{rn}}{3} , \Lambda(z_i) \not\subset \cC }
&= \prod_{k=0}^{q^{rn}/3} \PR{ \Lambda(z_{k+1}) \not\subset \cC \mid \cC \cap \inparen{ \cup_{i=1}^k \Lambda(z_i) } } \\
&\leq \inparen{1 - q^{-2Dn} }^{q^{rn}/3}.
\end{align*}
Thus, the probability that there is some $i$ with $\Lambda(z_i) \subset \cC$ is at least
\begin{equation}
\label{eq:low-prob}
1-(1-q^{-2Dn})^{q^{r n}/3}\ge 1-e^{-q^{(r-2D)n}/3} \ge 1-o(1),
\end{equation}
where the last inequality follows if we pick $D=r/3$. 
In particular, $\cC$ has list sizes at least $|\Lambda(z_i)|$, even at distance $\rho = 1/n$, which is the radius of $\Lambda(z_i)$.

We conclude by proving Claim~\ref{claim:prob-lb-alt}. 
For notational convenience, define $\cC_0^{(k)}=\cup_{i=1}^k N(z_i)$; thus, $\cC_0^{(k)}$ is the code $\cC_0$ after the first $k$ clusters $N(z_i)$ have been added.  
Let $M_k=|\cC\cap \cC_0^{(k)}|$. Note that $M_k$ is random variable. 
For $k > 0$, let $N_k = |\cC_0^{(k)}| = k\cdot |N(z_1)|=\frac{\beta\cdot nk}{8\log(1/(1-\rho-\beta))}$.  (The fact that for $k>0$, $N_k = k|N(z_1)|$ follows because the sets $N(z_i)$ are all disjoint, which itself follows from the distance of the code and the fact that all the clusters are of the same size).

The main observation is that conditioned on $\cC\cap \cC_0^{(k)}$, the distribution on $\cC$ is the same as the distribution where $pN_0 - M_k$ codewords are picked uniformly at random, with replacement, from $\cC_0\setminus \cC_0^{(k)}$.  
Again, this follows because the clusters $N(z_i)$ are disjoint.  Call this distribution $\mu$.
%
For all $k$, we have $M_k < M_{q^{rn}/3}$.  A Chernoff bound implies that this latter is small: 
\[ \PR{ M_{q^{rn}/3} \geq pN_0/2 } \leq \exp\inparen{ -\Omega(pN_0) }. \]
We will absorb this failure probability into the calculation in \eqref{eq:low-prob}, and assume from now on that $M_k < pN_0/2$ for all $k$.
Now, the probability we need to bound to prove Claim~\ref{claim:prob-lb-alt} is 
\[ \PR{ \Lambda(z_{k+1}) \subset \cC | \cC \cap \cC^{(k)} } = \mathbb{P}_{\mu} \inbrac{ \forall v \in \Lambda(z_{k+1}), v \in \cC } 
\geq \mathbb{P}_\mu \inbrac{ \forall v \in \Lambda(z_{k+1}), v \in \cC \text{ exactly once } }. \]
Let $D' = D/\alpha$.  Then the right hand side above, 
the probability that each of the vectors in $\Lambda(z_{k+1})$ is picked exactly once in $\cC$ under $\mu$, is given by
\[\binom{pN_0-M_k}{D'}\left(1-\frac{D'}{N_0-N_k}\right)^{pN_0-M_k-D'} \frac{(D')!}{(N_0-N_k)^{D'}} \ge
\binom{pN_0/2}{D'}\left(1-\frac{2D'}{N_0}\right)^{pN_0-D'} \frac{(D')!}{(N_0/2)^{D'}},\]
where the inequality follows from the fact that $pN_0\ge pN_0-M_k>pN_0/2$ and that for all $k$,  
\[N_0-N_k \geq N_0 - N_{q^{rn}/3} \ge 2N_0/3>N_0/2, \]
where the second inequality follows from the fact that all the clusters $N(z_i)$ have the same size.
Now it suffices to bound the last expression from below by $\left(\frac{p}{2e}\right)^{D'}$.  And indeed, we have
\begin{align*}
\binom{pN_0/2}{D'}\left(1-\frac{D'}{N_0/2}\right)^{pN_0-D'} \frac{(D')!}{(N_0/2)^{D'}} & \ge \left(\frac{pN_0}{2D'}\right)^{D'}\cdot \left(1-\frac{2D'}{N_0}\right)^{pN_0}\cdot \left(\frac{2D'}{eN_0}\right)^{D'}\\
& = \left(\frac{p}{e}\right)^{D'}\cdot \left(\left(1-\frac{2D'}{N_0}\right)^{N_0/(2D')}\right)^{2pD'}\\
&\ge \left(\frac{p}{4^{2p}e}\right)^{D'}\\
&\ge \left(\frac{p}{2e}\right)^{D'}.
\end{align*}
In the above the second inequality follows for $N_0\ge 4D'$ and the final inequality follows for $p\le 1/4$ both of which are valid assumptions for our choices for $p$ and $N_0$.
Finally, we have
\[\left(\frac{p}{2e}\right)^{\frac{D}{\alpha}} \ge \left(\frac{1}{2enq^{\alpha n}}\right)^{\frac{D}{\alpha}}\ge \left(\frac{1}{q^{2\alpha n}}\right)^{\frac{D}{\alpha}}=q^{-2Dn},\]
for large enough $n$, which completes the claim.

\end{document}